\newtheorem{theorem}{Theorem}[section]
\newtheorem{lemma}[theorem]{Lemma}
\newtheorem{proposition}[theorem]{Proposition}
\newtheorem{corollary}[theorem]{Corollary}
\newtheorem{Definition}[theorem]{Definition}
\newtheorem{Example}[theorem]{Example}
\newtheorem{Remark}[theorem]{Remark}
\newenvironment{definition}{\begin{Definition}\begin{em}}{\end{em}\end{Definition}}
\newenvironment{example}{\begin{Example}\begin{em}}{\end{em}\end{Example}}
\newproof{proof}{Proof}
\newcommand{\email}[1]{\mbox{Email: \url{#1}}}
\def\tuple#1{\langle#1\rangle}
\def\eqref#1{(\ref{#1})}
\newcommand{\E}{\exists}
\def\false{\mathit{false}}
\def\true{\mathit{true}}
\newcommand{\fALC}{$\mathit{f}\!\mathcal{ALC}$\xspace}
\newcommand{\NN}{\mathbb{N}}
\newcommand{\myend}{\mbox{}\hfill{\footnotesize$\blacksquare$}}
\newcommand{\comment}[1]{}
\newcommand{\fand}{\varotimes}
\newcommand{\fequiv}{\Leftrightarrow}
\newcommand{\bbQ}{\mathbb{Q}}
\newcommand{\counter}{\mathit{counter}}
\newcommand{\Null}{\mathit{null}}
\newcommand{\CompFB}{\mbox{$\mathsf{ComputeFuzzyBisimulation}$}\xspace}
\newcommand{\CompFP}{\mbox{$\mathsf{ComputeFuzzyPartition}$}\xspace}
\newcommand{\CompFPt}{\mbox{$\mathsf{ComputeFuzzyPartitionEfficiently}$}\xspace}
\newcommand{\ConvertFPtoFB}{\mathsf{ConvertFP2FB}}
\newcommand{\AuxConvertFPtoFB}{\mathsf{AuxConvertFP2FB}}
\newcommand{\Initialize}{\mathsf{Initialize}}
\newcommand{\RefineAA}{\mathsf{Refine}_{1a}}
\newcommand{\RefineAB}{\mathsf{Refine}_{1b}}
\newcommand{\RefineB}{\mathsf{Refine}_2}
\newcommand{\SV}{\Sigma_V}
\newcommand{\SE}{\Sigma_E}
\newcommand{\bbP}{\mathbb{P}}
\newcommand{\bbX}{\mathbb{X}}
\newcommand{\bbY}{\mathbb{Y}}
\newcommand{\bbB}{\mathbb{B}}
\newcommand{\pushKey}{\mathit{pushKey}}
\newcommand{\popKey}{\mathit{popKey}}
\newcommand{\maxKey}{\mathit{maxKey}}
\newcommand{\departingPComponentEdge}{\mathit{departingPCEdge}}
\newcommand{\sourcePComponentEdge}{\mathit{sourcePCEdge}}
\newcommand{\PComponentEdge}{\mathit{PComponentEdge}}
\newcommand{\pcEdge}{\mathit{pcEdge}}
\newcommand{\edgeLabel}{\mathit{label}}
\newcommand{\edgeOrigin}{\mathit{origin}}
\newcommand{\edgeDest}{\mathit{destination}}
\newcommand{\Vertex}{\mathit{Vertex}}
\newcommand{\VertexList}{\mathit{VertexList}}
\newcommand{\Edge}{\mathit{Edge}}
\newcommand{\vertexID}{\mathit{id}}
\newcommand{\vertexBlock}{\mathit{scBlock}}
\newcommand{\Block}{\mathit{Block}}
\newcommand{\BlockList}{\mathit{BlockList}}
\newcommand{\departingBlocks}{\mathit{departingSubblocks}}
\newcommand{\comingEdges}{\mathit{comingEdges}}
\newcommand{\vertices}{\mathit{vertices}}
\newcommand{\verticesX}{\mathit{vertices\!\_of\!\_X}}
\newcommand{\compound}{\mathit{compound}}
\newcommand{\smallerBlock}{\mathit{smallerBlock}}
\newcommand{\Size}{\mathit{size}}
\newcommand{\pPartition}{\mathit{pPartition}}
\newcommand{\degree}{\mathit{degree}}
\newcommand{\add}{\mathit{add}}
\newcommand{\Empty}{\mathit{empty}}
\newcommand{\keys}{\mathit{keys}}
\newcommand{\createPComponent}{\mathit{createPComponent}}
\newcommand{\EdgeList}{\mathit{EdgeList}}
\newcommand{\addBlock}{\mathit{addBlock}}
\newcommand{\addPComponent}{\mathit{addPComponent}}
\newcommand{\getVertex}{\mathit{getVertex}}
\newcommand{\addLabel}{\mathit{addLabel}}
\newcommand{\pcEdges}{\mathit{pcEdges}}
\newcommand{\edges}{\mathit{edges}}
\newcommand{\clear}{\mathit{clear}}
\newcommand{\removeBlock}{\mathit{removeBlock}}
\newcommand{\bool}{\mathit{bool}}
\newcommand{\processed}{\mathit{processed}}
\newcommand{\first}{\mathit{first}}
\newcommand{\LabelDegree}{\mathit{LabelDegree}}
\newcommand{\labelDegrees}{\mathit{labelDegrees}}
\newcommand{\labelDegreesIdx}{\mathit{labelDegrees\_idx}}
\newcommand{\allDegrees}{\mathit{allDegrees}}
\newcommand{\ComputePComponentEdges}{\mathsf{ComputePComponentEdges}}
\newcommand{\ComputeSubblocks}{\mathsf{ComputeSubblocks}}
\newcommand{\ClearAuxiliaryInfo}{\mathsf{ClearAuxiliaryInfo}}
\newcommand{\DoSplitting}{\mathsf{DoSplitting}}
\newcommand{\subblocks}{\mathit{subblocks}}
\newcommand{\length}{\mathit{length}}
\newcommand{\parent}{\mathit{parent}}
\newcommand{\nekst}{\mathit{next}}
\newcommand{\prev}{\mathit{prev}}
\newcommand{\elements}{\mathit{elements}}
\newcommand{\refineA}{\mathit{refine}_1}
\newcommand{\refineB}{\mathit{refine}_2}
\newcommand{\Vector}{\mathit{Vector}}
\journal{arXiv}
\begin{document}
\sloppy
	
\begin{frontmatter}
	
\title{Computing the Fuzzy Partition Corresponding to the Greatest Fuzzy Auto-Bisimulation of a Fuzzy Graph-Based Structure}

\author{Linh Anh Nguyen}
\ead{nguyen@mimuw.edu.pl}

\address{
	Institute of Informatics, University of Warsaw, 
	Banacha 2, 02-097 Warsaw, Poland 
}

\begin{abstract}
Fuzzy graph-based structures such as fuzzy automata, fuzzy labeled transition systems, fuzzy Kripke models, fuzzy social networks and fuzzy interpretations in fuzzy description logics are useful in various applications. Given two states, two actors or two individuals $x$ and $x'$ in such structures $G$ and $G'$, respectively, the similarity degree between them can be defined to be $Z(x,x')$, where $Z$ is the greatest fuzzy bisimulation between $G$ and $G'$ with respect to some t-norm-based fuzzy logic. Such a similarity measure has the Hennessy-Milner property of fuzzy bisimulations as a strong logical foundation. A fuzzy bisimulation between a fuzzy structure $G$ and itself is called a fuzzy auto-bisimulation of $G$. The greatest fuzzy auto-bisimulation of an image-finite fuzzy graph-based structure is a fuzzy equivalence relation. It is useful for classification and clustering.

In this paper, we design an efficient algorithm with the complexity $O((m\log{l} + n)\log{n})$ for computing the fuzzy partition corresponding to the greatest fuzzy auto-bisimulation of a finite fuzzy labeled graph $G$ under the G\"odel semantics, where $n$, $m$ and $l$ are the number of vertices, the number of non-zero edges and the number of different fuzzy degrees of edges of $G$, respectively. Our notion of fuzzy partition is novel, defined only for finite sets with respect to the G\"odel t-norm, with the aim to facilitate the computation of the greatest fuzzy auto-bisimulation. By using that algorithm, we also provide an algorithm with the complexity $O(m\cdot\log{l}\cdot\log{n} + n^2)$ for computing the greatest fuzzy bisimulation between two finite fuzzy labeled graphs under the G\"odel semantics. This latter algorithm is better (has a lower complexity order) than the previously known algorithms for the considered problem. Our algorithms can be restated for the other mentioned fuzzy graph-based structures.
\end{abstract}

\begin{keyword}
	fuzzy bisimulation \sep fuzzy partition
\end{keyword}

\end{frontmatter}

%===============================================================================
\section{Introduction}

Labeled transition systems, automata, Kripke models, social networks and interpretations in description logics have in common that they are graph-based structures. The relation $Z$ that specifies whether two states, two actors or two individuals $x$ and $x'$ in such structures $G$ and $G'$, respectively, behave equivalently or are equivalent from the logical point of view has useful applications in practice. For example, when $G'$ is the same as $G$, $Z$ is an equivalence relation and we can use it to minimize $G$ or exploit it in classification or clustering. The relation $Z$ is the largest bisimulation, also called the bisimilarity relation, between $G$ and $G'$. Bisimulation~\cite{vanBenthemCorr,HennessyM85} is a natural notion of equivalence that arose in modal logic and labeled transition systems and has been widely studied for all of the mentioned kinds of graph-based structures.
 
To deal with vagueness and impreciseness, fuzzy graph-based structures are used instead of crisp ones. There are two kinds
of bisimulation, namely {\em crisp} and {\em fuzzy}, between fuzzy graph-based structures. Crisp bisimulations characterize indiscernibility of states/actors/individuals. The Hennessy-Milner property of crisp bisimulations~\cite{ai/FanL14,Fan15,FSS2020} states that, if $Z$ is the largest crisp bisimulation between two image-finite fuzzy graph-based structures $G$ and $G'$, then $Z(x,x')$ holds iff $\varphi^G(x) = \varphi^{G'}(x')$ for all formulas $\varphi$ of a certain fuzzy modal/description logic with the Baaz projection operator or involutive negation, where $\varphi^G(x)$ (respectively, $\varphi^{G'}(x')$) means the degree in which $x$ (respectively, $x'$) has the property $\varphi$ (or is an instance of $\varphi$). 
On the other hand, fuzzy bisimulations characterize similarity between states/actors/individuals. The Hennessy-Milner property of fuzzy bisimulations~\cite{ai/FanL14,Fan15,FSS2020,FBSML} states that, if $Z$ is the greatest fuzzy bisimulation between two image-finite fuzzy graph-based structures $G$ and $G'$, then $Z(x,x') = \inf\{\varphi^G(x) \fequiv \varphi^{G'}(x') \mid$ $\varphi$ is a formula of a certain fuzzy modal/description logic$\}$, where $\fequiv$ is the fuzzy equivalence in the considered logic.
Crisp bisimulations have been defined and studied for fuzzy transition systems~\cite{CaoCK11,CaoSWC13,DBLP:journals/fss/WuD16,DBLP:journals/ijar/WuCHC18,DBLP:journals/fss/WuCBD18}, weighted automata~\cite{DamljanovicCI14}, fuzzy modal logics~\cite{EleftheriouKN12,Fan15,aml/MartiM18,fuin/Diaconescu20} and fuzzy description logics~\cite{FSS2020}. 
Fuzzy bisimulations have been defined and studied for fuzzy automata~\cite{CiricIDB12,CiricIJD12}, weighted/fuzzy social networks~\cite{ai/FanL14,IgnjatovicCS15}, fuzzy modal logics~\cite{Fan15,FBSML} and fuzzy description logics~\cite{FSS2020,minimization-by-fBS,TFS2020}. 

This work concerns computing the greatest fuzzy bisimulation between two finite fuzzy graph-based structures. A discussion on related work, based on~\cite{CompCB-arxiv}, is presented below.

\subsection{Related Work}

In~\cite{CiricIJD12} {\'C}iri{\'c} {\em et at.} gave an algorithm for computing the greatest fuzzy simulation/bisimulation (of any kind defined in~\cite{CiricIDB12}) between two finite fuzzy automata. 
They did not provide a detailed complexity analysis.  
Following~\cite{CiricIJD12}, Ignjatovi{\'c} {\em et at.}~\cite{IgnjatovicCS15} gave an algorithm with the complexity $O(ln^5)$ for computing the greatest fuzzy bisimulation between two fuzzy social networks, where $n$ is the number of nodes in the networks and $l$ is the number of different fuzzy values appearing during the computation. Later Mici{\'c} {\em et at.}~\cite{MicicJS18} provided algorithms with the complexity $O(ln^5)$ for computing the greatest right/left invariant fuzzy quasi-order/equivalence of a finite fuzzy automaton, where $n$ is the number of states of the considered automaton and $l$ is the number of different fuzzy values appearing during the computation. These relations are closely related to the fuzzy simulations/bisimulations studied in~\cite{CiricIDB12,CiricIJD12}. 
In~\cite{TFS2020} Nguyen and Tran provided an algorithm with the complexity $O((m+n)n)$ for computing the greatest fuzzy bisimulation between two finite fuzzy interpretations in the fuzzy description logic \fALC under the G\"odel semantics, where $n$ is the number of individuals and $m$ is the number of non-zero instances of roles in the given fuzzy interpretations. They also adapted that algorithm for computing fuzzy simulations/bisimulations between finite fuzzy automata and obtained algorithms with the same complexity order. 

In~\cite{DBLP:journals/fss/WuCBD18} Wu {\em et al.} studied algorithmic and logical characterizations of crisp bisimulations for nondeterministic fuzzy transition systems (NFTSs)~\cite{CaoSWC13}. They gave an algorithm with the complexity $O(m^2n^4)$ for testing crisp bisimulation (i.e., for checking whether two given states are bisimilar), where $n$ is the number of states and $m$ is the number of transitions in the underlying NFTS. 
In~\cite{StanimirovicSC2019} Stanimirovi{\'c} {\em et at.} provided algorithms with the complexity $O(n^3)$ for computing the greatest right/left invariant Boolean (crisp) equivalence matrix of a weighted automaton over an additively idempotent semiring. Such matrices are closely related to crisp bisimulations. 
In~\cite{CompCB-arxiv} Nguyen and Tran gave an algorithm with the complexity $O((m\log{l} + n)\log{n})$ for computing the (crisp) partition corresponding to the largest crisp bisimulation of a given finite fuzzy labeled graph, where $n$, $m$ and $l$ are the number of vertices, the number of nonzero edges and the number of different fuzzy degrees of edges of the input graph, respectively. They also studied a similar problem for the setting with counting successors, which corresponds to the case with qualified number restrictions in description logics and graded modalities in modal logics. In particular, they provided an algorithm with the complexity $O((m\log{m} + n)\log{n})$ for the considered problem in that setting. 

As the background, also recall that Hopcroft~\cite{Hopcroft71} gave an efficient algorithm with the complexity $O(n\log{n})$ for minimizing states in a (crisp) deterministic finite automaton, and Paige and Tarjan~\cite{PaigeT87} gave efficient algorithms with the complexity $O((m+n)\log{n})$ for computing the coarsest partition of a finite (crisp) graph, for both the settings with stability or size-stability. As mentioned in~\cite{PaigeT87}, an algorithm with the same complexity order for the second setting was given earlier by Cardon and Crochemore~\cite{DBLP:journals/tcs/CardonC82}. 

\subsection{Motivation and Our Contributions}

As discussed in the previous subsection, before the current work, the best known algorithm for computing the greatest fuzzy bisimulation between two finite fuzzy graph-based structures under the G\"odel semantics was given by Nguyen and Tran~\cite{TFS2020} and has the complexity $O((m+n)n)$. The motivation of the current work is to develop a more efficient algorithm for the same problem. 

In this article, by exploiting the ideas and techniques of the works~\cite{Hopcroft71,PaigeT87,CompCB-arxiv,TFS2020}, we develop an efficient algorithm with the complexity $O((m\log{l} + n)\log{n})$ for computing the fuzzy partition corresponding to the greatest fuzzy auto-bisimulation of a finite fuzzy labeled graph $G$ under the G\"odel semantics, where $n$, $m$ and $l$ are the number of vertices, the number of non-zero edges and the number of different fuzzy degrees of edges of the input graph $G$, respectively. Our notion of fuzzy partition is novel, defined only for finite sets with respect to the G\"odel t-norm, with the aim to facilitate the computation of the greatest fuzzy auto-bisimulation. By using that algorithm, we also provide an algorithm with the complexity $O(m\!\cdot\!\log{l}\!\cdot\!\log{n} + n^2)$ for computing the greatest fuzzy bisimulation between two finite fuzzy labeled graphs under the G\"odel semantics. Taking $l = n^2$ for the worst case, the latter complexity order can be simplified to $O(m\log^2(n) + n^2)$. This latter algorithm is better (has a lower complexity order) than the previously known algorithms for the considered problem. 

Our algorithms can be restated for other fuzzy graph-based structures such as fuzzy automata, fuzzy labeled transition systems, fuzzy Kripke models, fuzzy social networks and fuzzy interpretations in fuzzy description logics. 

\subsection{The Structure of This Work}

The rest of this work is structured as follows. In Section~\ref{section: prel}, we give preliminaries on fuzzy sets, fuzzy labeled graphs and fuzzy bisimulations. Section~\ref{sec: FP} is devoted to fuzzy partitions. In Section~\ref{sec: skeleton alg}, we present the skeleton of our algorithm for computing the fuzzy partition corresponding to the greatest fuzzy auto-bisimulation of a finite fuzzy labeled graph under the G\"odel semantics and prove its correctness. In Section~\ref{sec: impl}, we give details on how to implement that algorithm so that its complexity is of order $O((m\log{l} + n)\log{n})$. 
In Section~\ref{sec: comp FB}, we use that improved algorithm to design an algorithm with the complexity $O(m\!\cdot\!\log{l}\!\cdot\!\log{n} + n^2)$ for computing the greatest fuzzy bisimulation between two finite fuzzy labeled graphs under the G\"odel semantics. Section~\ref{sec: conc} contains conclusions. 

%===============================================================================
\section{Preliminaries}
\label{section: prel}

Recall that a {\em crisp partition} of a non-empty set $X$ is a set of pairwise disjoint non-empty subsets of $X$ whose union is equal to~$X$. Given a crisp partition $\bbP$, by a {\em component} of $\bbP$ we mean an element of the set $\bbP$ (we reserve the term ``block'' for another meaning). 
Given an equivalence relation $\sim$ on $X$, the crisp partition corresponding to $\sim$ is \mbox{$\{[x]_\sim \mid x \in X\}$}, where $[x]_\sim$ is the equivalence class of $x$ w.r.t.~$\sim$ (i.e., \mbox{$[x]_\sim = \{x' \in X \mid x' \sim x\}$}). 

Given crisp partitions $\bbP$ and $\bbQ$ of $X$, we say that $\bbP$ is a {\em refinement} of $\bbQ$ if, for every $Y_1 \in \bbP$, there exists $Y_2 \in \bbQ$ such that $Y_1 \subseteq Y_2$. In that case we also say that $\bbQ$ is {\em coarser} than $\bbP$. By this definition, every crisp partition is coarser than itself.

\subsection{Fuzzy Sets and Operators}

We use two fuzzy operators of the G\"odel family, which are defined as follows for $x, y \in [0,1]$:
\begin{eqnarray*}
	x \fand y & = & \min\{x,y\} \\
%	x \fOr y & = & \max\{x,y\} \\
%	(x \fto y) & = & (\textrm{if $x \leq y$ then 1 else $y$}) \\
	(x \fequiv y) & = & (\textrm{if $x = y$ then 1 else $\min\{x,y\}$}). 
\end{eqnarray*}

Given a set $X$, a function $f: X \to [0,1]$ is called a {\em fuzzy subset} of $X$. 
If $f$ is a fuzzy subset of $X$ and $x \in X$, then $f(x)$ means the fuzzy degree in which $x$ belongs to the subset. 
For $\{x_1,\ldots,x_n\} \subseteq X$ and $\{a_1,\ldots,a_n\} \subset [0,1]$, we write $\{x_1:a_1, \ldots, x_n:a_n\}$ to denote the fuzzy subset $f$ of $X$ such that $f(x_i) = a_i$ for $1 \leq i \leq n$ and $f(x) = 0$ for $x \in X \setminus \{x_1,\ldots,x_n\}$. 

If $f$ and $g$ are fuzzy subsets of $X$, then we write $f \leq g$ to denote that $f(x) \leq g(x)$ for all $x \in X$. If $f \leq g$, then we say that $g$ is {\em greater than or equal to} $f$. If $F$ is a set of fuzzy subsets of $X$, then by $\sup F$ we denote the fuzzy subset of $X$ specified by $(\sup F)(x) = \sup \{f(x) \mid f \in F\}$. 
As usual, if $f \in F$ and $f = \sup F$, then $f$ is called the {\em greatest} element of $F$.

Let $X$, $Y$ and $Z$ be non-empty sets. A fuzzy subset of $X \times Y$ is called a {\em fuzzy relation} between $X$ and $Y$. A fuzzy relation between $X$ and itself is called a fuzzy relation on~$X$. Given fuzzy relations $f: X \times Y \to [0,1]$ and $g: Y \times Z \to [0,1]$, the {\em composition} of $f$ and $g$, denoted by $f \circ g$, is the fuzzy relation between $X$ and $Z$ such that, for every $x \in X$ and $z \in Z$,  
\[ 
(f \circ g)(x,z) = \sup\{f(x,y) \fand g(y,z) \mid y \in Y\}.
\]
The converse \mbox{$f^- : Y \times X \to [0,1]$} of $f$ is defined by $f^-(y,x) = f(x,y)$.

A fuzzy relation $f: X \times X \to [0,1]$ is 
\begin{itemize}
	\item {\em reflexive} if $f(x,x) = 1$ for all $x \in X$, 
	\item {\em symmetric} if $f = f^-$,
	\item {\em transitive} if $f \circ f \leq f$. 
\end{itemize}
It is a {\em fuzzy equivalence relation} if it is reflexive, symmetric and transitive. 

\subsection{Fuzzy Bisimulations}

A {\em fuzzy labeled graph}, hereafter called a {\em fuzzy graph} for short, is a structure $G = \tuple{V, E, L, \SV, \SE}$, where $V$ is a non-empty set of vertices, $\SV$ (respectively, $\SE$) is a set of vertex labels (respectively, edge labels), \mbox{$E: V \times \SE \times V \to [0,1]$} is called the fuzzy set of labeled edges, and $L: V \to (\SV \to [0,1])$ is called the labeling function of vertices. 
Given vertices $x,y \in V$, a vertex label $p \in \SV$ and an edge label $r \in \SE$, $L(x)(p)$ means the degree in which $p$ is a member of the label of~$x$, and $E(x,r,y)$ means the degree in which there is an edge from $x$ to $y$ labeled by~$r$. 
The graph $G$ is {\em finite} if all the sets $V$, $\SV$ and $\SE$ are finite. It is {\em image-finite} if the set $\{y \mid E(x,r,y) > 0\}$ is finite for all $x \in V$ and $r \in \SE$. 

Fuzzy graphs are used as fuzzy labeled transition systems (FLTSs), fuzzy automata, fuzzy Kripke models and fuzzy interpretations in fuzzy description logics. For example, in the terminology of FLTSs, vertices, edges, edge labels and vertex labels represent states, transitions, actions and atomic properties of states, respectively. Recall that fuzzy bisimulations have been defined and studied for fuzzy automata~\cite{CiricIDB12,CiricIJD12}, weighted/fuzzy social networks~\cite{ai/FanL14,IgnjatovicCS15}, fuzzy modal logics~\cite{Fan15,FBSML} and fuzzy description logics~\cite{FSS2020,minimization-by-fBS,TFS2020}. 
We give below their definition, which is based on \cite{FSS2020,FBSML} and equivalent to the one in \cite{Fan15} when $|\SE| = 1$ and the graphs are image-finite. 

\begin{definition}
Let $G = \tuple{V, E, L, \SV, \SE}$ and $G' = \tuple{V', E', L', \SV, \SE}$ be fuzzy graphs over the same signature $\tuple{\SV, \SE}$. A fuzzy relation $Z \subseteq V \times V' \to [0,1]$ is called a {\em fuzzy bisimulation} between $G$ and $G'$ if the following conditions hold for all $p \in \SV$, $r \in \SE$ and all possible values for the free variables:
\begin{eqnarray}
\!\!\!\!\!\!\!\!\!\!&& Z(x,x') \leq (L(x)(p) \fequiv L'(x')(p)) \label{eq: FB1 a} \\
\!\!\!\!\!\!\!\!\!\!&& \E y' \in V'\ (Z(x,x') \fand E(x,r,y) \leq E'(x',r,y') \fand Z(y,y')) \label{eq: FB2 a} \\
\!\!\!\!\!\!\!\!\!\!&& \E y \in V\ (Z(x,x') \fand E'(x',r,y') \leq E(x,r,y) \fand Z(y,y')). \label{eq: FB3 a}
\end{eqnarray}
\end{definition}

\begin{example}\label{example: JHFJH}
Let $\SV = \{p\}$, $\SE = \{r\}$ and let $G = \tuple{V, E, L, \SV, \SE}$ and $G' = \tuple{V', E', L', \SV, \SE}$ be the fuzzy graphs depicted and specified as follows:
\begin{center}
	\begin{tikzpicture}
	\node (x0) {};
	\node (x) [node distance=1.5cm, right of=x0] {};
	\node (I) [node distance=0.0cm, below of=x] {$G$};
	\node (I1) [node distance=6.5cm, right of=I] {$G'$};
	\node (u) [node distance=0.7cm, below of=I] {$a:p_1$};
	\node (ub) [node distance=2.0cm, below of=u] {};
	\node (v) [node distance=1.0cm, left of=ub] {$b:p_{\,0.7}$};
	\node (w) [node distance=1.0cm, right of=ub] {$c:p_{\,0.8}$};
	\node (up) [node distance=0.7cm, below of=I1] {$d:p_1$};
	\node (ubp) [node distance=2.0cm, below of=up] {};
	\node (vp) [node distance=1.0cm, left of=ubp] {$e:p_{\,0.7}$};
	\node (wp) [node distance=1.0cm, right of=ubp] {$f:p_{\,0.8}$};
	\draw[->] (u) to node [left]{\footnotesize{0.6}} (v);
	\draw[->] (u) to node [right]{\footnotesize{1}} (w);
	\draw[->] (up) to node [left]{\footnotesize{1}} (vp);
	\draw[->] (up) to node [right]{\footnotesize{0.8}} (wp);
	\end{tikzpicture}
\end{center}	
\begin{itemize}
\item $V = \{a,b,c\}$, $E = \{\tuple{a,r,b}\!:\!0.6, \tuple{a,r,c}\!:\!1\}$, $L(a)(p) = 1$, $L(b)(p) = 0.7$, $L(c)(p) = 0.8$;
\item $V' = \{d,e,f\}$, $E' = \{\tuple{d,r,e}\!:\!1, \tuple{d,r,f}\!:\!0.8\}$, $L'(d)(p) = 1$, $L'(e)(p) = 0.7$, $L'(f)(p) = 0.8$.
\end{itemize}
It can be checked that $Z = \{\tuple{a,d}\!:\!0.7$, $\tuple{b,e}\!:\!1$, $\tuple{b,f}\!:\!0.7$, $\tuple{c,e}\!:\!0.7$, $\tuple{c,f}\!:\!1\}$ is the greatest bisimulation between $G$ and $G'$. 
\myend
\end{example}

\begin{definition}
Given a fuzzy graph $G = \tuple{V, E, L, \SV, \SE}$, a fuzzy relation $Z \subseteq V \times V \to [0,1]$ is called a {\em fuzzy auto-bisimulation of $G$}, or a {\em fuzzy bisimulation of $G$} for short, if it is a fuzzy bisimulation between $G$ and itself, i.e., if the following conditions hold for all $p \in \SV$, $r \in \SE$ and all possible values for the free variables:
\begin{eqnarray}
\!\!\!\!\!\!\!\!\!\!&& Z(x,x') \leq (L(x)(p) \fequiv L(x')(p)) \label{eq: FB1} \\
\!\!\!\!\!\!\!\!\!\!&& \E y' \in V\ (Z(x,x') \fand E(x,r,y) \leq E(x',r,y') \fand Z(y,y')) \label{eq: FB2} \\
\!\!\!\!\!\!\!\!\!\!&& \E y \in V\ (Z(x,x') \fand E(x',r,y') \leq E(x,r,y) \fand Z(y,y')). \label{eq: FB3}
\end{eqnarray}
\end{definition}

It is known that the greatest fuzzy bisimulation of every image-finite fuzzy graph exists and is a fuzzy equivalence relation (see, e.g., \cite{CiricIDB12,minimization-by-fBS,FBSML}). 

Let $G = \tuple{V, E, L, \SV, \SE}$ and $G' = \tuple{V', E', L', \SV, \SE}$ be fuzzy graphs over the same signature $\tuple{\SV, \SE}$. The {\em disjoint union} of $G$ and $G'$, denoted by $G \uplus G'$, is the fuzzy graph $G'' = \tuple{V'', E'', L'', \SV, \SE}$ such that $V'' = V \uplus V'$, $E'' = E \uplus E'$ and $L'' = L \uplus L'$. 

\begin{proposition}
Let $G = \tuple{V, E, L, \SV, \SE}$ and $G' = \tuple{V', E', L', \SV, \SE}$ be image-finite fuzzy graphs over the same signature $\tuple{\SV, \SE}$ and let $G'' = G \uplus G'$. Let $Z$ be the greatest fuzzy bisimulation of $G''$. Then, $Z|_{V \times V'}$ is the greatest fuzzy bisimulation between $G$ and~$G'$.
\end{proposition}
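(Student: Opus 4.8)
The plan is to prove the two halves of ``greatest'' separately: that $Z|_{V\times V'}$ is itself a fuzzy bisimulation between $G$ and $G'$, and that it dominates every fuzzy bisimulation between $G$ and $G'$. The whole argument rests on one structural fact about the disjoint union: $E''$ agrees with $E$ on $V\times\SE\times V$, agrees with $E'$ on $V'\times\SE\times V'$, and is $0$ on every mixed triple (one endpoint in $V$, the other in $V'$); likewise $L''$ restricts to $L$ on $V$ and to $L'$ on $V'$. I will invoke this repeatedly.

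First I would show $Z|_{V\times V'}$ satisfies \eqref{eq: FB1 a}--\eqref{eq: FB3 a}. For \eqref{eq: FB1 a} with $x\in V$, $x'\in V'$, condition \eqref{eq: FB1} for $Z$ on $G''$ gives $Z(x,x')\leq (L''(x)(p)\fequiv L''(x')(p)) = (L(x)(p)\fequiv L'(x')(p))$ at once. For \eqref{eq: FB2 a}, fix $x\in V$, $x'\in V'$, $y\in V$; applying \eqref{eq: FB2} for $Z$ on $G''$ yields some $y''\in V''$ with $Z(x,x')\fand E(x,r,y)\leq E''(x',r,y'')\fand Z(y,y'')$, using $E''(x,r,y)=E(x,r,y)$. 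If the left-hand side is $0$, any $y'\in V'$ works (here I use $V'\neq\emptyset$); otherwise the right-hand side is positive, so $E''(x',r,y'')>0$, which forces $y''\in V'$ because $x'\in V'$ has no outgoing mixed edges. Setting $y'=y''$ then gives \eqref{eq: FB2 a}, since $E''(x',r,y')=E'(x',r,y')$ and $Z(y,y')=Z|_{V\times V'}(y,y')$. Condition \eqref{eq: FB3 a} is symmetric, using instead that $x'\in V'$ has no incoming mixed edges. The hard part is really this ``the witness lands on the correct side'' step: it is where the no-mixed-edges property of $\uplus$ does the essential work, and one must keep the degenerate case (vanishing left-hand side) separate, since there $y''$ need not lie in $V'$ and only $V'\neq\emptyset$ saves the argument.

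For the second half, let $W$ be an arbitrary fuzzy bisimulation between $G$ and $G'$. Rather than build a symmetric or transitive extension, I would define the fuzzy relation $\tilde W$ on $V''$ by $\tilde W(u,v)=W(u,v)$ when $u\in V$, $v\in V'$, and $\tilde W(u,v)=0$ otherwise, and check that $\tilde W$ is a fuzzy auto-bisimulation of $G''$. Conditions \eqref{eq: FB1}--\eqref{eq: FB3} are trivial whenever $\tilde W(x,x')=0$, because the left-hand sides then vanish; so only $x\in V$, $x'\in V'$ matter, and there they reduce to \eqref{eq: FB1 a}--\eqref{eq: FB3 a} for $W$, again using the no-mixed-edges property to discard the cases $y\in V'$ (respectively $y''\in V$). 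Since $Z$ is the greatest fuzzy bisimulation of $G''$, we obtain $\tilde W\leq Z$, hence $W(x,x')=\tilde W(x,x')\leq Z(x,x')$ for all $x\in V$, $x'\in V'$, that is, $W\leq Z|_{V\times V'}$.

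Combining the two halves, $Z|_{V\times V'}$ is a fuzzy bisimulation between $G$ and $G'$ that is $\geq$ every such relation, so it is the greatest one. The only mild care needed in the second half is to remember that fuzzy bisimulations are not required to be fuzzy equivalences; this is exactly why the blunt ``extend by zero'' construction of $\tilde W$ is legitimate and no reflexive, symmetric or transitive closure is needed.
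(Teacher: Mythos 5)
Your proof is correct, and it takes a genuinely different route from the paper's. The paper does not verify the bisimulation conditions directly at all: it derives the proposition in one line from the Hennessy--Milner property of fuzzy bisimulations (Theorem~3.7 of~\cite{FSS2020}), under which $Z(x,x')$ equals an infimum of values $\varphi^{G''}(x)\fequiv\varphi^{G''}(x')$ over formulas $\varphi$; since the semantics of a formula at a vertex of $G\uplus G'$ coincides with its semantics in the component containing that vertex, the restriction of the greatest auto-bisimulation of $G''$ and the greatest bisimulation between $G$ and $G'$ are given by the same infimum. Your argument instead works purely relationally: (i) $Z|_{V\times V'}$ satisfies \eqref{eq: FB1 a}--\eqref{eq: FB3 a} because $E''$ vanishes on mixed triples, which forces the witnesses supplied by \eqref{eq: FB2}--\eqref{eq: FB3} for $G''$ to land on the correct side whenever the left-hand side is positive, the degenerate case being absorbed by $V'\neq\emptyset$ (guaranteed, since the paper requires vertex sets to be non-empty); and (ii) the zero-extension $\tilde W$ of an arbitrary fuzzy bisimulation $W$ between $G$ and $G'$ is a fuzzy auto-bisimulation of $G''$, whence $W=\tilde W|_{V\times V'}\leq Z|_{V\times V'}$. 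Both steps check out, and your remark that no symmetric or transitive closure is needed is exactly right, since fuzzy bisimulations are required to be neither. (One cosmetic slip: in the \eqref{eq: FB3 a} case the witness lands in $V$ because vertices of $V'$ have no incoming edges from $V$ --- the property belongs to the witness's side rather than to $x'$ specifically --- but the fact you actually invoke, that $E''$ is zero on mixed triples, is the right one.) As for what each approach buys: the paper's citation is instant but imports the logical machinery of the Hennessy--Milner theorem, where image-finiteness is essential; your proof is elementary and self-contained, uses image-finiteness only through the hypothesis that the greatest auto-bisimulation $Z$ of $G''$ exists, and as a by-product establishes, rather than presupposes, the existence of the greatest fuzzy bisimulation between $G$ and~$G'$.
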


This proposition follows from the Hennessy-Milner property of fuzzy bisimulations~\cite[Theorem~3.7]{FSS2020}. 

%===============================================================================
\section{Fuzzy Partitions}
\label{sec: FP}

Fuzzy partitions have been studied by a considerable number of authors (see, e.g., \cite{OVCHINNIKOV1991107,DBLP:conf/ismvl/Schmechel95,DBLP:journals/isci/BaetsCK98,DBLP:journals/fss/CiricIB07} and references therein). In these cited papers, a fuzzy partition is defined to be the set of all fuzzy equivalence classes of some fuzzy equivalence relation. The notion defined in this way has many interesting properties. In particular, it can be defined for an infinite set over a complete residuated lattice. 
In this section, we introduce and study a novel notion of fuzzy partition, which is defined only for finite sets with respect to the G\"odel t-norm, with the aim to facilitate the computation of the greatest fuzzy auto-bisimulation of a finite fuzzy labeled graph under the G\"odel semantics.  

In this section, let $X$ be a finite set. To represent a collection of elements from $X$ we use an abstract class (type) called {\em block} with two subclasses, which are named {\em fuzzy block} and {\em crisp block}. The intuition behind these kinds of blocks is as follows.
\begin{itemize}
	\item A crisp block is a collection of elements from $X$, whereas a fuzzy block is a collection of blocks called {\em subblocks}. 
	If $B$ is a crisp block, then we define $B.\elements()$ to be the set of elements of $B$. 
	If $B$ is a fuzzy block, then we define $B.\subblocks()$ to be the set of subblocks of $B$, and inductively define $B.\elements()$ to be $\bigcup \{B'.\elements() \mid B' \in B.\subblocks()\}$. In other words, $B.\elements()$ is the set of all elements belonging the collection represented by~$B$. 
	
	\item A fuzzy block $B$ has the attribute $B.\degree \in [0,1)$ with the meaning that, treating $B$ as a fuzzy equivalence class of a fuzzy equivalence relation $f$, for every $x,y \in B.\elements()$, $f(x,y) \geq B.\degree$, and furthermore, if $B_1$ and $B_2$ are different subblocks of $B$, $x \in B_1.\elements()$ and $y \in B_2.\elements()$, then $f(x,y) = B.\degree$. 
	
	\item A crisp block $B$ has the attribute $B.\degree = 1$. 
\end{itemize}
We assume the following restrictions:
\begin{itemize}
	\item if $B$ is a crisp block, then $B.\elements() \neq \emptyset$;
	\item if $B$ is a fuzzy block, then $|B.\subblocks()| > 1$;
	\item if a fuzzy block $B$ is a subblock of a fuzzy block $B'$, then $B.\degree > B'.\degree$;
	\item if $B_1$ and $B_2$ are different subblocks of a block $B'$, then the sets $B_1.\elements()$ and $B_2.\elements()$ are disjoint. 
\end{itemize}
We also assume that blocks $B_1$ and $B_2$ are {\em equal} iff:
\begin{itemize}
	\item either both $B_1$ and $B_2$ are crisp and $B_1.\elements() = B_2.\elements()$,
	\item or both $B_1$ and $B_2$ are fuzzy, $B_1.\degree = B_2.\degree$ and $B_1.\subblocks() = B_2.\subblocks()$ (in the sense that each block from $B_1.\subblocks()$ is equal to some block from $B_2.\subblocks()$ and vice versa).
\end{itemize}

A crisp block $B$ with $B.\elements() = \{a_1,\ldots,a_k\}$ is denoted by $\{a_1,\ldots,a_k\}_1$. 
A fuzzy block $B$ with $B.\degree = d$ and $B.\subblocks() = \{B_1,\ldots,B_k\}$ is denoted by $\{Z_1,\ldots,Z_k\}_d$, where $Z_i$ is the denotation of $B_i$, for $1 \leq i \leq k$. 

A block $B$ is called a {\em fuzzy partition} of $X$ if $B.\elements() = X$. 

\begin{definition}\label{def: HFJAA}
	Given a fuzzy equivalence relation $f: X \times X \to [0,1]$, the {\em fuzzy partition corresponding to $f$} is the block $B$ defined inductively as follows:
	\begin{enumerate}
		\item if $f(x,x') = 1$ for all $x,x' \in X$, then $B$ is the crisp block such that $B.\elements() = X$;
		\item else:
		\begin{enumerate}
			\item let $d = \min\{f(x,x') \mid x,x' \in X\}$;
			\item\label{step: HFJAA 2b} let $\sim$ be the (crisp) equivalence relation on $X$ such that $x \sim x'$ iff $f(x,x') > d$;
			\item let $X_1,\ldots,X_n$ be all the (crisp) equivalence classes of $\sim$;
			\item for $1 \leq i \leq n$, let $f_i$ be the restriction of $f$ to $X_i \times X_i$ and let $B_i$ be the fuzzy partition corresponding to $f_i$;
			\item $B$ is the fuzzy block such that $B.\degree = d$ and $B.\subblocks() = \{B_1,\ldots,B_n\}$.
			\myend
		\end{enumerate}
	\end{enumerate}
\end{definition}

In the above definition, since $f$ is a fuzzy equivalence relation, the relation $\sim$ defined as $\{\tuple{x,x'} \in X \times X \mid f(x,x') > d\}$ is really a (crisp) equivalence relation. This guarantees that the fuzzy partition corresponding to a fuzzy equivalence relation is well defined. 
Here, we implicitly use the assumption that $X$ is finite and $\fand$ is the G\"odel t-norm. 

\begin{definition}
	Let $B$ be a fuzzy partition of $X$. The {\em fuzzy equivalence relation corresponding to $B$} is the fuzzy relation $f: X \times X \to [0,1]$ defined inductively as follows:
	\begin{itemize}
		\item if $B$ is a crisp block, then $f(x,x') = 1$ for all $x,x' \in X$;
		\item else:
		\begin{itemize}
			\item let $d = B.\degree$ and $\{B_1,\ldots,B_n\} = B.\subblocks()$;
			\item for $1 \leq i \leq n$, let $X_i = B_i.\elements()$ and let $f_i$ be the fuzzy equivalence relation corresponding to $B_i$;
			\item for $x,x' \in X$, $f(x,x') =$ (if $\{x,x'\} \subseteq X_i$ for some $1 \leq i \leq n$ then $f_i(x,x')$ else $d$).
			\myend
		\end{itemize}
	\end{itemize}  
\end{definition}

It is easy to check that the fuzzy relation $f$ defined above is really a fuzzy equivalence relation. 
It is also straightforward to prove the following proposition.

\begin{proposition}\label{prop: DJKHS}
Let $B$ be a fuzzy partition of $X$ and $f: X \times X \to [0,1]$ a fuzzy equivalence relation. 
Then, $f$ is the fuzzy equivalence relation corresponding to $B$ iff $B$ is the fuzzy partition corresponding to $f$.
\end{proposition}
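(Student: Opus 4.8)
The plan is to recognize that the two constructions of this section define maps in opposite directions between fuzzy equivalence relations on $X$ and fuzzy partitions of $X$: write $\pi(f)$ for the fuzzy partition corresponding to $f$ (Definition~\ref{def: HFJAA}) and $\rho(B)$ for the fuzzy equivalence relation corresponding to $B$. The biconditional ``$f = \rho(B)$ iff $B = \pi(f)$'' follows at once from the two round-trip identities $\rho(\pi(f)) = f$ (for every fuzzy equivalence relation $f$) and $\pi(\rho(B)) = B$ (for every fuzzy partition $B$): if $B = \pi(f)$ then $\rho(B) = \rho(\pi(f)) = f$, and conversely if $f = \rho(B)$ then $\pi(f) = \pi(\rho(B)) = B$. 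So I would prove these two identities, both by induction on $|X|$; this is legitimate because $X$ is finite and every recursive call in either definition is made on a proper subset of $X$ (in Definition~\ref{def: HFJAA} one has $n \geq 2$ in the else branch, since $d = \min f$ is attained by a pair $x_0 \neq x_0'$ that is therefore not $\sim$-related, so each $X_i$ is strictly smaller than $X$).

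For $\rho(\pi(f)) = f$, I would split on whether $f$ is the constant-$1$ relation. If it is, then $\pi(f)$ is the crisp block on $X$ and $\rho$ of that block is again the constant-$1$ relation, namely $f$. Otherwise, with $d = \min\{f(x,x') \mid x,x' \in X\}$, the relation $\sim$ given by $f(x,x') > d$, its classes $X_1,\ldots,X_n$, and $f_i = f|_{X_i \times X_i}$, the induction hypothesis gives $\rho(\pi(f_i)) = f_i$ on each strictly smaller $X_i$. Unfolding $\rho$ on the fuzzy block $\pi(f) = \{\pi(f_1),\ldots,\pi(f_n)\}_d$ yields $\rho(\pi(f))(x,x') = f_i(x,x') = f(x,x')$ when $x,x'$ lie in a common $X_i$, and $= d$ otherwise; since a cross-class pair has $f$-value $\leq d$ (not $\sim$-related) and $\geq d$ (as $d$ is the minimum), it equals $d$, so the two relations agree everywhere.

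For $\pi(\rho(B)) = B$, the crisp case is immediate, so suppose $B$ is a fuzzy block of degree $d$ with subblocks $B_1,\ldots,B_n$, $X_i = B_i.\elements()$, $f_i = \rho(B_i)$, and $f = \rho(B)$. The crux is to show that $\pi(f)$ recovers this decomposition, and for this I would first establish the auxiliary monotonicity fact: for every fuzzy partition $C$ and all $x,x' \in C.\elements()$, one has $\rho(C)(x,x') \geq C.\degree$. This is an easy induction using that a fuzzy subblock's degree strictly exceeds its parent's and that crisp blocks have degree $1$. Applying it to each $B_i$ together with $B_i.\degree > d$ gives $f_i(x,x') > d$ for all $x,x' \in X_i$, whence $\min f = d$ (attained only across classes) and the threshold relation $f(x,x') > d$ has exactly $X_1,\ldots,X_n$ as its classes. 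Therefore $\pi(f)$ is the fuzzy block of degree $d$ whose subblocks are $\pi(f|_{X_i \times X_i}) = \pi(\rho(B_i))$, and the induction hypothesis reduces each of these to $B_i$; by the equality criterion for fuzzy blocks, $\pi(f) = B$.

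The hard part will be this second direction, specifically verifying that the minimum value of $f = \rho(B)$ equals $B.\degree$ and that thresholding $f$ at that value reproduces precisely the element sets of the subblocks of $B$. This is exactly where the structural restrictions on blocks are indispensable: the strict inequality between a fuzzy subblock's degree and its parent's (and $B.\degree = 1$ for crisp blocks) is what forces every intra-class degree above $d$, so that distinct subblocks are genuinely separated at the threshold $d$; and the disjointness of distinct subblocks' element sets guarantees the classes coincide with the $X_i$. The auxiliary monotonicity lemma isolates precisely this fact, after which both inductions reduce to routine unfoldings of the two definitions.
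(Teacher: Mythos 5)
Your proposal is correct, and there is nothing in the paper to conflict with: the paper gives no proof of Proposition~\ref{prop: DJKHS} at all, merely remarking that it is ``straightforward to prove,'' so your argument is a legitimate (and complete) filling-in of exactly the details the author leaves implicit. The reduction of the biconditional to the two round-trip identities $\rho(\pi(f)) = f$ and $\pi(\rho(B)) = B$ is the natural route, and each step checks out against the definitions: the recursion in Definition~\ref{def: HFJAA} is well founded because, as you observe, a pair attaining $d = \min f$ cannot be $\sim$-related (reflexivity of $f$ gives $f(x_0,x_0)=1>d$, so the pair is a genuine pair), forcing $n \geq 2$; cross-class pairs receive exactly the value $d$ (at most $d$ by failure of $\sim$, at least $d$ by minimality); and your auxiliary monotonicity fact $\rho(C)(x,x') \geq C.\degree$ correctly covers crisp subblocks through the paper's convention that a crisp block has $\degree = 1$, which together with the restriction that a fuzzy subblock's degree strictly exceeds its parent's yields $f_i(x,x') > d$ throughout each $X_i$, so that thresholding $f = \rho(B)$ at $d$ recovers precisely the sets $B_i.\elements()$. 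Note that the value $d$ is actually attained by $f$ because $|B.\subblocks()| > 1$ and all element sets are non-empty and pairwise disjoint --- all guaranteed by the paper's structural restrictions, as you say. Two small points you rely on implicitly are both supplied by the paper immediately before the proposition: that $\sim$ is a genuine crisp equivalence relation (needed for the classes $X_1,\ldots,X_n$ to exist), and that $\rho(B)$ is a fuzzy equivalence relation (needed for $\pi(\rho(B))$ to be defined); citing those remarks closes the argument.
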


\begin{figure}
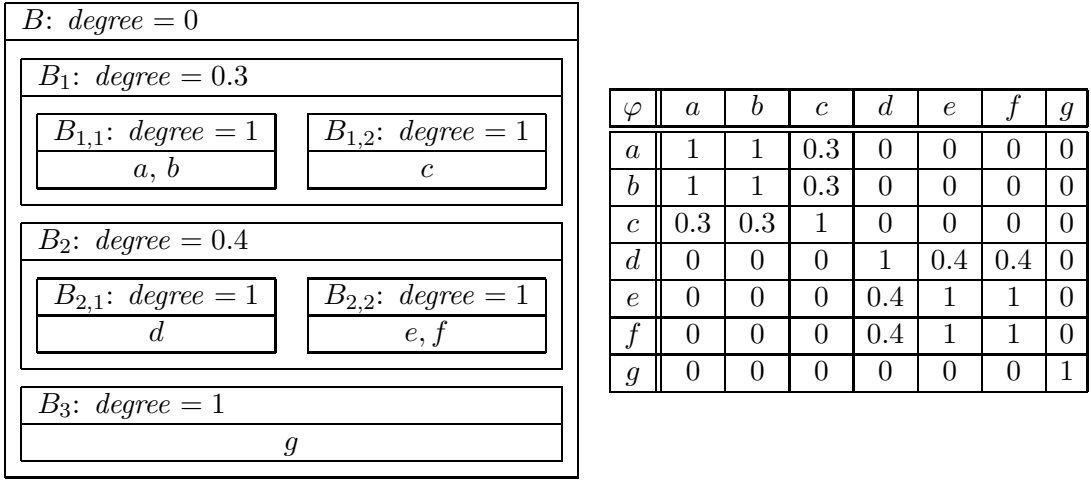

\begin{center}
\begin{tabular}{ll}
	\begin{tabular}{|l|}
		\hline
		$B$: $\degree = 0$ \\
		\hline
		\\[-1.5ex]
		\begin{tabular}{|ll|}
			\hline
			\multicolumn{2}{|l|}{$B_1$: $\degree = 0.3$} \\
			\hline
			& \\[-1.5ex]
			\begin{tabular}{|c|}
				\hline
				$B_{1,1}$: $\degree = 1$ \\
				\hline
				$a$, $b$\\
				\hline
			\end{tabular} 
			&
			\begin{tabular}{|c|}
				\hline
				$B_{1,2}$: $\degree = 1$ \\
				\hline
				$c$\\
				\hline
			\end{tabular} 
			\\[2.8ex]
			\hline
		\end{tabular} 
		\\[5.8ex]
		\begin{tabular}{|ll|}
			\hline
			\multicolumn{2}{|l|}{$B_2$: $\degree = 0.4$} \\
			\hline
			& \\[-1.5ex]
			\begin{tabular}{|c|}
				\hline
				$B_{2,1}$: $\degree = 1$ \\
				\hline
				$d$\\
				\hline
			\end{tabular} 
			&
			\begin{tabular}{|c|}
				\hline
				$B_{2,2}$: $\degree = 1$ \\
				\hline
				$e, f$\\
				\hline
			\end{tabular} 
			\\[2.8ex]
			\hline
		\end{tabular} 
		\\[5.8ex]
		\begin{tabular}{|c|}
			\hline
			$B_{3}$: $\degree = 1$ \mbox{\hspace{10.6em}} \\
			\hline
			$g$\\
			\hline
		\end{tabular} 
		\\[2.8ex]
		\hline
	\end{tabular}
	&
	\(
	\begin{array}{|c||c|c|c|c|c|c|c|}
	\hline
	\varphi & a & b & c & d & e & f & g \\
	\hline
	\hline
	a & 1 & 1 &0.3& 0 & 0 & 0 & 0 \\  
	\hline
	b & 1 & 1 &0.3& 0 & 0 & 0 & 0 \\  
	\hline
	c &0.3&0.3& 1 & 0 & 0 & 0 & 0 \\  
	\hline
	d & 0 & 0 & 0 & 1 &0.4&0.4& 0 \\  
	\hline
	e & 0 & 0 & 0 &0.4& 1 & 1 & 0 \\  
	\hline
	f & 0 & 0 & 0 &0.4& 1 & 1 & 0 \\  
	\hline
	g & 0 & 0 & 0 & 0 & 0 & 0 & 1 \\
	\hline
	\end{array}
	\)
\end{tabular}
\end{center}
\caption{An illustration for Example~\ref{example: JHDBQ}.\label{fig: HGDHW}}
\end{figure}

\begin{example}\label{example: JHDBQ}  
Let $X = \{a,b,c,d,e,f,g\}$.
Let $B$ be the fuzzy partition of $X$ depicted at the left hand side of Figure~\ref{fig: HGDHW} and specified as follows:
\begin{itemize}
\item $B$ is a fuzzy block, with $B.\degree = 0$ and $B.\subblocks() = \{B_1,B_2,B_3\}$;
\item $B_1$ is a fuzzy block, with $B_1.\degree = 0.3$ and $B_1.\subblocks() = \{B_{1,1},B_{1,2}\}$;
	\begin{itemize}
	\item $B_{1,1}$ is a crisp block, with $B_{1,1}.\elements() = \{a,b\}$;
	\item $B_{1,2}$ is a crisp block, with $B_{1,2}.\elements() = \{c\}$;
	\end{itemize}
\item $B_2$ is a fuzzy block, with $B_2.\degree = 0.4$ and $B_2.\subblocks() = \{B_{2,1},B_{2,2}\}$;
	\begin{itemize}
	\item $B_{2,1}$ is a crisp block, with $B_{2,1}.\elements() = \{d\}$;
	\item $B_{2,2}$ is a crisp block, with $B_{2,2}.\elements() = \{e,f\}$;
	\end{itemize}
\item $B_3$ is a crisp block, with $B_3.\elements() = \{g\}$.
\end{itemize}
The fuzzy block $B$ is denoted by $\{\{\{a,b\}_1,\{c\}_1\}_{0.3}, \{\{d\}_1,\{e,f\}_1\}_{0.4}, \{g\}_1\}_0$. 

Let $\varphi: X \times X \to [0,1]$ be the fuzzy relation specified at the right hand side of Figure~\ref{fig: HGDHW}. 
It is the fuzzy equivalence relation corresponding to the fuzzy partition~$B$. 

Let $X_{a,b}$, $X_c$, $X_d$, $X_{e,f}$ and $X_g$ be the fuzzy subsets of $X$ specified as follows:
\begin{itemize}
	\item $X_{a,b} = \{a\!:\!1, b\!:\!1, c\!:\!0.3\}$,
	\item $X_c = \{a\!:\!0.3, b\!:\!0.3, c\!:\!1\}$,
	\item $X_d = \{d\!:\!1, e\!:\!0.4, f\!:\!0.4\}$, 
	\item $X_{e,f} = \{d\!:\!0.4, e\!:\!1, f\!:\!1\}$,
	\item $X_g = \{g\!:\!1\}$.
\end{itemize}
They are the fuzzy equivalence classes of the fuzzy equivalence relation $\varphi$. 
Using the notion of fuzzy partition defined in \cite{OVCHINNIKOV1991107,DBLP:conf/ismvl/Schmechel95,DBLP:journals/isci/BaetsCK98,DBLP:journals/fss/CiricIB07}, the fuzzy partition corresponding to the fuzzy equivalence relation $\varphi$ is the set $\{X_{a,b}, X_c, X_d, X_{e,f}, X_g\}$. The fuzzy block $B$ considered in this example gives a more compact representation for this fuzzy partition. 
\myend
\end{example}

%---------------------------------------------------------------------------------------------------------------------

\begin{figure*}
\begin{procedure}[H]
\caption{ConvertFP2FB($B$)\label{proc: ConvertFP2FB}}
\Input{a fuzzy partition $B$ of a finite set $X$.}
\Output{the fuzzy equivalence relation corresponding to $B$.}
\BlankLine
declare $f: X \times X \to [0,1]$\;
$\AuxConvertFPtoFB(B,B,0,f)$\tcp*{auxiliary, defined below}
\Return $f$\;
\end{procedure}

\smallskip

\begin{procedure}[H]
\caption{AuxConvertFP2FB($B_1,B_2,d,f$)}
\uIf{$B_1 = B_2$ (as references)}{
	\uIf{$B_1$ is a crisp block}{
		\ForEach{$x \in B_1.\elements()$}{
			\ForEach{$x' \in B_2.\elements()$}{
				$f[x,x'] := 1$\;
			}
		}
	}
	\Else{
		\ForEach{$B'_1 \in B_1.\subblocks()$}{
			\ForEach{$B'_2 \in B_2.\subblocks()$}{
				$\AuxConvertFPtoFB(B'_1,B'_2, B_1.\degree,f)$\;
			}
		}
	}
}
\uElseIf{$B_1$ is a crisp block}{
	\uIf{$B_2$ is a crisp block}{
		\ForEach{$x \in B_1.\elements()$}{
			\ForEach{$x' \in B_2.\elements()$}{
				$f[x,x'] = d$\;
			}
		}
	}
	\Else{
		\ForEach{$B'_2 \in B_2.\subblocks()$}{
			$\AuxConvertFPtoFB(B_1,B'_2,d,f)$\;
		}
	}
}
\Else{
	\ForEach{$B'_1 \in B_1.\subblocks()$}{
		$\AuxConvertFPtoFB(B'_1,B_2,d,f)$\;
	}
}
\end{procedure}
\end{figure*}

The procedure $\ConvertFPtoFB(B)$ (on page \pageref{proc: ConvertFP2FB}), for a given fuzzy partition~$B$ of a finite set~$X$, returns the fuzzy equivalence relation corresponding to~$B$. It uses the subroutine $\AuxConvertFPtoFB(B_1,B_2,d,f)$, with the assumptions that either $B_1 = B_2$ or $B_1.\elements() \cap B_2.elements() = \emptyset$, and $d$ is essential only when $B_1 \neq B_2$. 

\begin{proposition}\label{prop: RHJQK}\ 
\begin{enumerate}
\item The procedure $\ConvertFPtoFB(B)$ is correct. That is, it returns the fuzzy equivalence relation corresponding the given fuzzy partition $B$.
\item Assume that the collection of elements of any crisp block (respectively, the collection of subblocks of any fuzzy block) is represented as a vector, a list or a doubly linked list. If $B$ is a fuzzy partition of a finite set $X = \{0,\ldots,n-1\}$, then the procedure $\ConvertFPtoFB(B)$ runs in time $O(n^2)$. 
\end{enumerate}
\end{proposition}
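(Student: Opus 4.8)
The plan is to establish the two parts separately: correctness by induction following the recursion of $\AuxConvertFPtoFB$, and the $O(n^2)$ bound by a counting argument over the recursion tree.

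For part~1, I would prove the following invariant for every call $\AuxConvertFPtoFB(B_1,B_2,d,f)$ satisfying its stated precondition (either $B_1 = B_2$, or $B_1.\elements() \cap B_2.\elements() = \emptyset$). Writing $f_B$ for the fuzzy equivalence relation corresponding to a fuzzy partition $B$: if $B_1 = B_2$, the call assigns $f[x,x'] := f_{B_1}(x,x')$ for all $x,x' \in B_1.\elements()$; if $B_1 \neq B_2$, it assigns $f[x,x'] := d$ for all $x \in B_1.\elements()$ and $x' \in B_2.\elements()$. The induction is on the combined number of blocks in the subtrees rooted at $B_1$ and $B_2$, which strictly decreases in each recursive call. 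The base cases are the two crisp branches, which match the definition of $f_B$ directly (value $1$ inside a crisp block, constant $d$ across a disjoint crisp pair). In the inductive step, the $B_1 = B_2$ fuzzy branch is the interesting one: its double loop splits the pairs from $B_1.\elements()$ into diagonal pairs, handled by the $B'_1 = B'_2$ sub-calls that install $f_{B'_1}$ on each subblock, and off-diagonal pairs, handled by the disjoint sub-calls that install the value $B_1.\degree$; this is exactly the inductive clause defining the fuzzy equivalence relation corresponding to a fuzzy block. The other two branches merely distribute a disjoint pair over the subblocks of $B_2$ or of $B_1$. Applying the invariant to the top-level call $\AuxConvertFPtoFB(B,B,0,f)$, where $B.\elements() = X$, gives $f = f_B$, which is the claim.

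For part~2, the invariant shows that the recursion partitions $X \times X$: every ordered pair $(x,x')$ is written exactly once, at the unique leaf call whose two crisp blocks contain $x$ and $x'$. Hence the procedure performs exactly $n^2$ assignments. The remaining task is to bound the number of recursive calls, which I would do purely from the shape of the recursion tree. Every leaf call lies in one of the two crisp branches and therefore performs at least one assignment, because crisp blocks are non-empty; every internal call spawns at least two sub-calls, since a fuzzy block has at least two subblocks (this holds for the fuzzy $B_1 = B_2$ double loop as well as for the two unfolding branches). In a rooted tree in which every internal node has at least two children, the number of internal nodes is smaller than the number of leaves; as each leaf consumes at least one of the $n^2$ assignments, there are at most $n^2$ leaves and therefore $O(n^2)$ calls in all.

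It then remains to charge the work. Apart from time spent inside its sub-calls, each call does $O(1)$ bookkeeping plus a loop whose iteration count equals either the number of sub-calls it spawns or the number of assignments it performs; summed over all calls these contribute $O(n^2)$, matching the $n^2$ assignments and the $O(n^2)$ allocation of the array $f$, for a total of $O(n^2)$. The step I expect to be the real obstacle is exactly this call-count bound: the naive worry is that the fuzzy $B_1 = B_2$ branch, which generates $k^2$ sub-calls for a block with $k$ subblocks, makes the overhead superquadratic, and summing subtree sizes over a path-shaped partition suggests $O(n^3)$. The resolution is to avoid that summation entirely and instead charge internal calls against leaves and leaves against the $n^2$ assignments, as above.
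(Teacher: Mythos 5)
Your proof is correct, and for part~2 it takes a genuinely different route from the paper's. The paper treats part~1 as straightforward (your structural induction with the two-case invariant --- $f_{B_1}$ installed on the diagonal calls, the constant $d$ on disjoint calls --- is exactly the omitted argument), and it proves part~2 by a local induction on the block structure: it shows that each call $\AuxConvertFPtoFB(B_1,B_2,d,f)$ runs in time $O(|B_1.\elements()| \cdot |B_2.\elements()|)$, whence the top-level call costs $O(n^2)$; this works because the products telescope in the diagonal fuzzy branch, $\sum_{i,j}|X_i|\cdot|X_j| = \bigl(\sum_i |X_i|\bigr)^2$, with the $O(k^2)$ loop overhead absorbable since every pair of subblocks eventually contributes at least one assignment. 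You replace this per-call bound by a global charging scheme: the recursion writes each ordered pair of $X \times X$ exactly once (giving exactly $n^2$ assignments), every leaf of the recursion tree is a crisp--crisp call performing at least one assignment, and every internal call spawns at least two sub-calls (both facts rely on the stated restrictions that crisp blocks are non-empty and fuzzy blocks have more than one subblock), so the recursion tree has $O(n^2)$ nodes and the per-call overhead, being proportional to the number of children or of assignments, sums to $O(n^2)$. What the paper's induction buys is brevity and a sharper intermediate statement --- a cost bound for every subcall, not just the root; what your charging argument buys is robustness, since it makes fully explicit why the $k^2$ fan-out of the diagonal fuzzy branch is harmless, a point the bare hypothesis ``cost at most $c \cdot |B_1.\elements()| \cdot |B_2.\elements()|$'' glosses over (taken literally it needs a small strengthening to absorb the per-child constant overhead without inflating the constant at each recursion level, which is precisely the bookkeeping your leaf-to-assignment charging performs). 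Your closing worry about an $O(n^3)$ blow-up does not actually threaten the paper's proof, whose inductive quantity is a product rather than a sum of subtree sizes, but your resolution is valid and self-contained.
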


\begin{proof}
The proof for the first assertion is straightforward. For the second assertion, assume that $B$ is a fuzzy partition of a finite set $X = \{0,\ldots,n-1\}$. By induction, it can be proved that the auxiliary procedure $\AuxConvertFPtoFB(B_1,B_2,d,f)$ runs in time $O(|B_1.\elements()| \cdot |B_2.\elements()|)$. Hence, the procedure $\ConvertFPtoFB(B)$ runs in time $O(n^2)$. 
\myend
\end{proof}

%---------------------------------------------------------------------------------------------------------------------

We define the {\em representation tree} of a block $B$ to be the tree consisting of 
\begin{itemize}
	\item the only node $B$ if $B$ is a crisp block,
	\item the root $B$ and the subtrees that represent the subblocks of $B$ otherwise.
\end{itemize}

The {\em leaf partition} of a block $B$ is defined to be the set $\{B'.\elements() \mid$ $B'$ is a leaf of the representation tree of $B\}$. It is a crisp partition of the set $B.\elements()$. 

Given $d \in [0,1)$, the {\em $d$-cut partition} of a block $B$ is the crisp partition $\bbP$ of $B.\elements()$ defined inductively as follows:
\begin{itemize}
	\item if $B$ is a crisp block, then $\bbP = \{B.\elements()\}$;
	\item else if $B$ is a fuzzy block with $B.\degree > d$, then $\bbP = \{B.\elements()\}$;
	\item else $\bbP$ is the union of the $d$-cut partitions of the subblocks of $B$.
\end{itemize}

\begin{example}
Reconsider the fuzzy block $B$ given in Example~\ref{example: JHDBQ}. 
The leaf partition of $B$ is $\{\{a,b\},\{c\},\{d\},\{e,f\},\{g\}\}$, which is also the $0.4$-cut partition of $B$. 
The $0.3$-cut partition of $B$ is $\{\{a,b\},\{c\},\{d,e,f\},\{g\}\}$, which is also the $0.35$-cut partition of $B$. 
\myend
\end{example}

\begin{lemma}\label{lemma: JHLWB}
	Let $B$ be a fuzzy partition of a set $X$ and $f$ the fuzzy equivalence relation corresponding to $B$. Let $d \in [0,1)$ and $x,x' \in X$. Then $x$ and $x'$ belong to the same component of the $d$-cut partition of $B$ iff $f(x,x') > d$.
\end{lemma}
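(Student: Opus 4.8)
The plan is to prove the equivalence by structural induction on the representation tree of $B$, following the same recursion that is used to define both the $d$-cut partition and the fuzzy equivalence relation $f$ corresponding to $B$. Before starting the main induction I would record one auxiliary monotonicity fact that the argument repeatedly relies on: for every block $B$ and all $x,x' \in B.\elements()$, the corresponding relation satisfies $f(x,x') \geq B.\degree$. This is itself a short induction. For a crisp block, $f \equiv 1 \geq 1 = B.\degree$. For a fuzzy block with subblocks $B_i$ and $X_i = B_i.\elements()$, either $\{x,x'\} \subseteq X_i$ for some $i$, in which case $f(x,x') = f_i(x,x') \geq B_i.\degree > B.\degree$ by the induction hypothesis together with the structural assumption that a subblock has strictly greater degree than its parent (a crisp subblock having degree $1$, so strictly above $B.\degree \in [0,1)$), or else $f(x,x') = B.\degree$ by definition.

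For the base case of the main induction, let $B$ be a crisp block. Then its $d$-cut partition is the single component $\{B.\elements()\}$, so any $x,x'$ lie in the same component; and $f(x,x') = 1 > d$ since $d \in [0,1)$. Hence both sides of the claimed equivalence hold unconditionally, and they agree.

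For the inductive step, let $B$ be a fuzzy block with $B.\degree = d_0$, subblocks $B_1,\dots,B_n$, $X_i = B_i.\elements()$, and $f_i$ the relation corresponding to $B_i$. I would split on how $d_0$ compares with $d$. If $d_0 > d$, the $d$-cut partition of $B$ is the single component $\{B.\elements()\}$, so the left-hand side always holds; on the right-hand side, when $x,x'$ lie in a common $X_i$ we get $f(x,x') = f_i(x,x') \geq B_i.\degree > d_0 > d$ using the auxiliary fact, and otherwise $f(x,x') = d_0 > d$, so the right-hand side also always holds and the two sides agree. If instead $d_0 \leq d$, then by definition the $d$-cut partition of $B$ is the union of the $d$-cut partitions of the $B_i$, living in the pairwise disjoint sets $X_i$. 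When $x \in X_i$ and $x' \in X_j$ with $i \neq j$, the two elements lie in different components (left-hand side false) and $f(x,x') = d_0 \leq d$ (right-hand side false), so the sides agree. When $x,x' \in X_i$, we have $f(x,x') = f_i(x,x')$, and the induction hypothesis applied to $B_i$ and $f_i$ gives that $x,x'$ share a component of the $d$-cut partition of $B_i$ iff $f_i(x,x') > d$, which closes the case.

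The step I expect to require the most care is the bookkeeping in the subcase $d_0 \leq d$: one must argue cleanly that the components of the $d$-cut partition of $B$ meeting $X_i$ are precisely the components of the $d$-cut partition of $B_i$ — this is exactly where the disjointness of the $X_i$ is used — so that for two elements of $X_i$, ``same component in $B$'' reduces to ``same component in $B_i$'' and the induction hypothesis becomes applicable. Everything else is a direct unwinding of the two recursive definitions, and the only genuine inequality input is the auxiliary fact $f(x,x') \geq B.\degree$, which rests on the structural assumption that a subblock has strictly larger degree than its parent.
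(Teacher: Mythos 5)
Your proof is correct and is in substance the same argument as the paper's: the paper compresses everything into the one-line characterization that $x$ and $x'$ belong to the same component of the $d$-cut partition of $B$ iff some block $B'$ of the representation tree satisfies $B'.\degree > d$ and $x,x' \in B'.\elements()$, which in turn holds iff $f(x,x') > d$, and your structural induction (with the auxiliary monotonicity fact $f(x,x') \geq B.\degree$, whose crisp-subblock case you rightly handle via $B.\degree = 1$) is precisely the detailed verification of that equivalence. Nothing is missing; you have simply written out in full the parallel unwinding of the two recursive definitions that the paper leaves implicit.
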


\begin{proof}
The elements $x$ and $x'$ belong to the same component of the $d$-cut partition of $B$ iff there exists a block $B'$ of the representation tree of $B$ such that $B'.\degree > d$ and $x,x' \in B'.\elements()$, which in turn holds iff $f(x,x') > d$. 
\myend
\end{proof}

Given blocks $B$ and $B'$ such that $B.\elements() = B'.\elements()$, we say that $B$ is {\em coarser} than $B'$ if, for every $d \in [0,1)$, the $d$-cut partition of $B$ is coarser than the $d$-cut partition of $B'$. If $B$ is coarser than $B'$, then we call $B'$ a {\em refinement} of~$B$. 
The following proposition follows immediately from Lemma~\ref{lemma: JHLWB}. 

\begin{proposition}
Let $B$ and $B'$ be fuzzy partitions of a set $X$ and let $f$ and $f'$ be the fuzzy equivalence relations corresponding to $B$ and $B'$, respectively. Then, $B$ is coarser than $B'$ iff $f \geq f'$.
\end{proposition}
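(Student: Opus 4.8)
The plan is to reduce the statement to a pointwise comparison of $f$ and $f'$ and then invoke Lemma~\ref{lemma: JHLWB}. First I would observe that, since $B$ and $B'$ are both fuzzy partitions of $X$, we have $B.\elements() = X = B'.\elements()$, so the relation ``coarser than'' between $B$ and $B'$ is defined. By definition, $B$ is coarser than $B'$ iff for every $d \in [0,1)$ the $d$-cut partition of $B$ is coarser than the $d$-cut partition of $B'$. Recalling that, for crisp partitions $\bbP$ and $\bbQ$, ``$\bbP$ coarser than $\bbQ$'' holds exactly when every pair of elements lying in the same component of $\bbQ$ also lies in the same component of $\bbP$, I would rewrite ``$B$ coarser than $B'$'' as: for all $d \in [0,1)$ and all $x,x' \in X$, if $x,x'$ are in the same component of the $d$-cut partition of $B'$, then they are in the same component of the $d$-cut partition of $B$.

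Next I would apply Lemma~\ref{lemma: JHLWB} to both $B$ (with $f$) and $B'$ (with $f'$) to translate membership in a common $d$-cut component into a numerical inequality: being in the same component of the $d$-cut partition of $B$ is equivalent to $f(x,x') > d$, and likewise for $B'$ and $f'$. Thus ``$B$ coarser than $B'$'' becomes the purely numerical condition: for all $d \in [0,1)$ and all $x,x' \in X$, $f'(x,x') > d$ implies $f(x,x') > d$. Reordering the quantifiers so that $x$ and $x'$ are fixed, it remains to show that, for each pair, the condition ``for all $d \in [0,1)$, $f'(x,x') > d$ implies $f(x,x') > d$'' is equivalent to $f(x,x') \geq f'(x,x')$; quantifying the resulting equivalence over all pairs then yields that $B$ is coarser than $B'$ iff $f \geq f'$.

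The one genuine computation, and the only place needing care, is this last pointwise equivalence. The backward direction is immediate monotonicity: if $f(x,x') \geq f'(x,x')$ and $f'(x,x') > d$, then $f(x,x') > d$. For the forward direction I would argue by contraposition: if $f(x,x') < f'(x,x')$, then since $f'(x,x') \leq 1$ we have $f(x,x') < 1$, so $d := f(x,x')$ lies in $[0,1)$; for this $d$ we have $f'(x,x') > d$ but not $f(x,x') > d$, contradicting the hypothesis. The main obstacle is purely bookkeeping: ensuring the witness $d = f(x,x')$ falls in the half-open interval $[0,1)$ over which the cuts are defined, which is exactly why the strict inequality $f(x,x') < f'(x,x') \leq 1$ is needed to force $f(x,x') < 1$. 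Everything else is a routine unfolding of definitions.
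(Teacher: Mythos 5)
Your proof is correct and follows exactly the route the paper intends: the paper dispatches this proposition as an immediate consequence of Lemma~\ref{lemma: JHLWB}, and your argument is precisely the detailed unfolding of that reduction, including the one point that genuinely needs care (choosing the witness $d = f(x,x') \in [0,1)$ in the contrapositive, which works because $f(x,x') < f'(x,x') \leq 1$). No gaps.
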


It is known that, if $\bbP$ and $\bbQ$ are crisp partitions of a set $X$, $\bbP$ is coarser than $\bbQ$ and vice versa, then $\bbP$ and $\bbQ$ are equal. The following lemma is a generalization of this.

\begin{lemma}\label{lemma: HDKJA}
Let $B$ and $B'$ be fuzzy partitions of a set $X$. If $B$ is coarser than $B'$ and vice versa, then $B$ and $B'$ are equal. 
\end{lemma}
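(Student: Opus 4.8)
\noindent
The plan is to reduce antisymmetry of the ``coarser than'' relation on fuzzy partitions to antisymmetry of $\leq$ on the corresponding fuzzy equivalence relations, and then to invoke injectivity of the correspondence. First I would let $f$ and $f'$ be the fuzzy equivalence relations corresponding to $B$ and $B'$; these are well defined since $B$ and $B'$ are fuzzy partitions of $X$. By the proposition above stating that $B$ is coarser than $B'$ iff $f \geq f'$, the hypothesis ``$B$ coarser than $B'$'' gives $f \geq f'$ and ``$B'$ coarser than $B$'' gives $f' \geq f$; hence $f = f'$.

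The second step is to conclude $B = B'$ from $f = f'$. Here I would appeal to Proposition~\ref{prop: DJKHS}: since $f$ is the fuzzy equivalence relation corresponding to $B$, that proposition yields that $B$ is the fuzzy partition corresponding to $f$, and likewise $B'$ is the fuzzy partition corresponding to $f' = f$. Thus $B$ and $B'$ are both the fuzzy partition corresponding to one and the same relation $f$. Because Definition~\ref{def: HFJAA} specifies this object by a deterministic recursion whose output is uniquely determined up to block equality (at each level it fixes the degree $d = \min f$, the splitting into the classes of $\{\tuple{x,x'} \mid f(x,x') > d\}$, and then recurses on each class), ``the fuzzy partition corresponding to $f$'' is unique; therefore $B = B'$.

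The one point that needs care, and the step I expect to be the real obstacle, is exactly this uniqueness/injectivity of the map sending a fuzzy partition to its corresponding fuzzy equivalence relation. If one prefers to prove it structurally rather than read it off Definition~\ref{def: HFJAA}, I would argue by induction on $|X|$ using $d$-cut partitions. Since two crisp partitions each coarser than the other are equal (the known crisp fact recalled just before the lemma), the $d$-cut partitions of $B$ and $B'$ coincide for every $d \in [0,1)$. If $B$ is crisp, every $d$-cut of $B$ is $\{X\}$; were $B'$ fuzzy with degree $d_0'$, its $d_0'$-cut would have more than one component (a fuzzy block has at least two subblocks with disjoint element sets), a contradiction, so $B'$ is crisp and $B' = B$. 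If $B$ is fuzzy with degree $d_0$, then $d_0$ is the least $d$ at which the $d$-cut ceases to be $\{X\}$, and the same holds for $B'$, forcing $B'.\degree = d_0$ and $\{B_i.\elements()\} = \{B'_j.\elements()\}$. Restricting every $d$-cut of $B$ (respectively $B'$) to a common set $X_i = B_i.\elements() = B'_i.\elements()$ then shows $B_i$ and $B'_i$ are coarser than each other, and since $X_i \subsetneq X$ the induction hypothesis gives $B_i = B'_i$, whence $B = B'$. The delicate bookkeeping is verifying that the restriction to $X_i$ of a $d$-cut partition of $B$ is exactly the $d$-cut partition of $B_i$, which relies on the restriction that subblock degrees strictly exceed the parent degree.
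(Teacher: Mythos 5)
Your primary argument is correct, but it takes a genuinely different route from the paper. The paper proves the lemma directly, by induction on the structure of $B$ and the size of $B.\elements()$: from the hypothesis it extracts that the $d$-cut partitions of $B$ and $B'$ coincide for every $d \in [0,1)$, rules out a crisp/fuzzy mismatch, forces $B.\degree = B'.\degree$ and a matching of subblocks by element sets and degrees, and applies the induction hypothesis to the matched subblock pairs. Your reduction instead composes facts that all appear before the lemma: the proposition that $B$ is coarser than $B'$ iff $f \geq f'$ (a consequence of Lemma~\ref{lemma: JHLWB}), pointwise antisymmetry of $\leq$ giving $f = f'$, Proposition~\ref{prop: DJKHS}, and the well-definedness (uniqueness up to block equality) of the construction in Definition~\ref{def: HFJAA}, which the paper explicitly asserts right after that definition; no circularity arises, since none of these ingredients depends on the lemma. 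This makes the lemma a short corollary of the order correspondence between fuzzy partitions and fuzzy equivalence relations, which is conceptually cleaner. Two caveats are worth noting. First, the inductive work is relocated rather than eliminated: Proposition~\ref{prop: DJKHS} is left unproved in the paper (``straightforward''), and its proof is essentially the same structural induction that the paper spends on the lemma itself. Second, the paper later needs a variant for blocks whose leaves are simple blocks (Lemma~\ref{lemma: HDKJA2}), obtained by textual substitution in the structural proof; your reduction would first require extending the partition--relation correspondence to such blocks (harmless, since simple blocks are treated as crisp, but not done in the paper). Your fallback structural argument is, modulo presentation --- induction on $|X|$ and the explicit identification of $B.\degree$ as the least $d$ whose $d$-cut differs from $\{X\}$ --- the paper's own proof, including the delicate point you flag: that restricting the $d$-cut partitions to $X_i$ yields exactly the $d$-cut partitions of $B_i$, which uses the restriction that subblock degrees strictly exceed the parent degree.
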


\begin{proof}
	Suppose $B$ is coarser than $B'$ and vice versa. Thus, for every $d \in [0,1)$, the $d$-cut partitions of $B$ and $B'$ are equal. We prove that $B$ and $B'$ are equal by induction on the structure of $B$ and the size of $B.\elements()$. 
	
	Consider the case when $B$ is a crisp block. If $B'$ is a fuzzy block, then, for any $d \in [B'.\degree,1)$, the $d$-cut partition of $B'$ consists of at least two components and cannot be coarser than the $d$-cut partition of $B$ (which consists of only one component), and therefore, $B'$ is not coarser than $B$. Hence, $B'$ must also be a crisp block. Since $B$ and $B'$ are crisp blocks and each of them is coarser than the other, they must be equal. 
	
	The case when $B'$ is a crisp block can be dealt with analogously. 
	
	Consider the case when both $B$ and $B'$ are fuzzy blocks. Let $d = B.\degree$ and $d' = B'.\degree$. If $d \neq d'$, then, for $d'' = \min\{d,d'\}$, the $d''$-cut partitions of $B$ and $B'$ are not equal. Hence, it must hold that $d = d'$. Since the $d$-cut partitions of $B$ and $B'$ are equal, for each $B_i \in B.\subblocks()$, there must exist $B'_i \in B'.\subblocks()$ such that $B_i.\elements() = B'_i.\elements()$. Consider an arbitrary $B_i \in B.\subblocks()$ and let $B'_i \in B'.\subblocks()$ be the block such that $B'_i.\elements() = B_i.\elements()$. 
	Let $d_i = B_i.\degree$ and $d'_i = B'_i.\degree$. If $d_i \neq d'_i$ then, for $d''_i = \min\{d_i,d'_i\}$, the $d''_i$-cut partitions of $B$ and $B'$ are not equal. Hence, it must hold that $d_i = d'_i$. 
	Since $B$ is coarser than $B'$ and vice versa, it must hold that $B_i$ is coarser than $B'_i$ and vice versa. By the induction assumption, $B_i$ and $B'_i$ are equal. 
	Thus, we can already conclude that $B$ and $B'$ are equal. 
	\myend
\end{proof}

%===============================================================================
\section{The Skeleton of the Algorithm}
\label{sec: skeleton alg}

In this section, we present an algorithm for constructing the fuzzy partition corresponding to the greatest fuzzy bisimulation of a given finite fuzzy graph. It is formulated on an abstract level with the aim to facilitate understanding the algorithm and prove its correctness. Implementation details and a complexity analysis for the algorithm are presented in the next section.

To formulate our algorithm we need an additional subclass of the abstract class {\em block}, which is named {\em simple block}. Simple blocks differ from crisp blocks in that they do not have the attribute $\degree$. They are used only during the construction of the fuzzy partition. 
The notions of representation tree of a block, leaf partition of a block and $d$-cut partition of a block are revised by treating simple blocks as crisp blocks (i.e., by replacing the occurrences of ``crisp block'' with ``crisp or simple block'').

The notion of equality between blocks is revised naturally (if $B$ or $B'$ is a simple block, then they are equal only if both of them are simple blocks and $B.\elements() = B'.\elements()$). Lemma~\ref{lemma: HDKJA} has been formulated and proved for the case when all the leaves of the representation trees of $B$ and $B'$ are crisp blocks. A variant of that lemma for the case when all the leaves of the representation trees of $B$ and $B'$ are simple blocks is given below. Its proof is obtained from the proof of Lemma~\ref{lemma: HDKJA} by replacing the occurrences of ``crisp block'' with ``simple block''.

\begin{lemma}\label{lemma: HDKJA2}
Let $B$ and $B'$ be blocks such that $B.\elements() = B'.\elements()$ and all the leaves of the representation trees of $B$ and $B'$ are simple blocks. If $B$ is coarser than $B'$ and vice versa, then $B$ and $B'$ are equal. 
\end{lemma}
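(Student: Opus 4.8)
The plan is to mirror the proof of Lemma~\ref{lemma: HDKJA} essentially verbatim, replacing every occurrence of ``crisp block'' with ``simple block''. The starting observation is that, since $B$ is coarser than $B'$ and vice versa, for every $d \in [0,1)$ the $d$-cut partition of $B$ is coarser than that of $B'$ and vice versa; because mutual coarsening of crisp partitions forces equality, the $d$-cut partitions of $B$ and $B'$ coincide for every~$d$. I would then argue by induction on the structure of $B$ together with the size of $B.\elements()$.

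For the base case I would take $B$ to be a simple block. Recall that in the revised definitions simple blocks are treated as crisp blocks, so the $d$-cut partition of $B$ is the single-component partition $\{B.\elements()\}$ for every~$d$. If $B'$ were a fuzzy block, then for any $d \in [B'.\degree,1)$ its $d$-cut partition would have at least two components, contradicting its equality with the one-component partition of $B$; hence $B'$ must also be a simple block. Two simple blocks that are mutually coarser satisfy $B.\elements() = B'.\elements()$ and are therefore equal under the revised notion of equality. The symmetric case where $B'$ is the simple block is handled identically.

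For the inductive step I would assume both $B$ and $B'$ are fuzzy blocks and set $d = B.\degree$ and $d' = B'.\degree$. A degree mismatch is ruled out by inspecting the $\min\{d,d'\}$-cut partitions, forcing $d = d'$; equality of the $d$-cut partitions then matches each subblock $B_i \in B.\subblocks()$ with a subblock $B'_i \in B'.\subblocks()$ satisfying $B_i.\elements() = B'_i.\elements()$. A further $\min$-cut argument forces the corresponding subblock degrees to agree, and mutual coarsening of $B$ and $B'$ restricts to mutual coarsening of $B_i$ and $B'_i$, whose element sets are strictly smaller. The induction hypothesis then yields $B_i = B'_i$, and consequently $B = B'$.

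The only point requiring care — and the reason the substitution is legitimate — is that simple blocks, lacking a $\degree$ attribute, are by convention treated as crisp blocks when forming representation trees and $d$-cut partitions. This is precisely what makes the single-component base-case argument go through, so I do not expect any genuinely new obstacle beyond those already resolved in the proof of Lemma~\ref{lemma: HDKJA}.
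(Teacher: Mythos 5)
Your proposal is correct and follows exactly the paper's intended route: the paper itself states that the proof of this lemma is obtained from that of Lemma~\ref{lemma: HDKJA} by replacing ``crisp block'' with ``simple block'', and your write-out of the base case, the degree-matching arguments via $\min$-cut partitions, and the induction on structure and on the size of $B.\elements()$ reproduces that proof faithfully. Your closing remark correctly identifies the one point needing care, namely that the revised definitions treat simple blocks as crisp blocks when forming representation trees and $d$-cut partitions, which is what makes the substitution legitimate.
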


In this section, let $G = \tuple{V, E, L, \SV, \SE}$ be a finite fuzzy graph. We use $\bbP$ and $\bbQ$ to denote crisp partitions of $V$, $X$ and $Y$ to denote non-empty subsets of $V$, $r$ to denote an edge label from $\SE$, and $d$ to denote a value from $[0,1)$. 
In addition, we use $B$ and $\bbB$ to denote blocks with $B.\elements() \subseteq V$ and $\bbB.\elements() = V$. 

For $x \in V$, we denote $E(x,r,Y) = \{E(x,r,y) \mid y \in Y\}$. 

We say that $X$ is {\em $d$-cut stable w.r.t.~$Y$} if the following conditions hold for all $x,x' \in X$, $p \in \SV$ and $r \in \SE$:
\begin{eqnarray*}
L(x)(p) > d & \textrm{iff} & L(x')(p) > d \\ 
\sup E(x,r,Y) > d & \textrm{iff} & \sup E(x',r,Y) > d.
\end{eqnarray*}

We say that: 
\begin{itemize}
\item $X$ is {\em $d$-cut stable w.r.t.~$\bbQ$} if it is $d$-cut stable w.r.t.~$Y$ for all $Y \in \bbQ$; 
\item $\bbP$ is {\em $d$-cut stable w.r.t.~$Y$} if $X$ is $d$-cut stable w.r.t.~$Y$ for all $X \in \bbP$;
\item $\bbP$ is {\em $d$-cut stable w.r.t.~$\bbQ$} if it is $d$-cut stable w.r.t.~$Y$ for all $Y \in \bbQ$;
\item $B$ is {\em $d$-cut stable w.r.t.~$Y$} if the $d$-cut partition of $B$ is $d$-cut stable w.r.t.~$Y$;
\item $B$ is {\em $d$-cut stable w.r.t.~$\bbQ$} if it is $d$-cut stable w.r.t.~$Y$ for all $Y \in \bbQ$;
\item $B$ is {\em $d$-cut stable} if the $d$-cut partition of $B$ is $d$-cut stable w.r.t.\ itself.
\end{itemize}

\begin{lemma}\label{lemma: FKWMN}
Let $\bbY$ be a set of subsets of $V$. If $X$ is $d$-cut stable w.r.t.\ each $Y \in \bbY$, then $X$ is also $d$-cut stable w.r.t.\ $\bigcup \bbY$.
\end{lemma}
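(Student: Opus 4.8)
The plan is to separate the two defining conditions of $d$-cut stability and treat them differently. The label condition $L(x)(p) > d \iff L(x')(p) > d$ makes no reference to the second argument set, so once $X$ is $d$-cut stable w.r.t.\ some $Y \in \bbY$ it holds outright for all $x,x' \in X$ and $p \in \SV$; reading the statement with $\bbY \neq \emptyset$ (the degenerate case $\bbY = \emptyset$ gives $\bigcup\bbY = \emptyset$ and is trivial, as explained below), this condition is inherited for free. Hence all the real work lies in the edge condition with $Y$ replaced by $\bigcup\bbY$.

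\textbf{Key steps.} First I would record the set identity $E(x,r,\bigcup\bbY) = \bigcup_{Y \in \bbY} E(x,r,Y)$, which is immediate from $E(x,r,Y) = \{E(x,r,y) \mid y \in Y\}$, and take suprema to obtain
$$\sup E(x,r,\bigcup\bbY) = \sup\{\sup E(x,r,Y) \mid Y \in \bbY\}.$$
The elementary fact I rely on is that a supremum of values in $[0,1]$ exceeds $d$ iff at least one of the values exceeds $d$ (the only nonobvious direction being that if every $\sup E(x,r,Y) \leq d$ then the outer supremum is $\leq d$ as well). Applying this to the displayed identity yields
$$\sup E(x,r,\bigcup\bbY) > d \quad\iff\quad \E\, Y \in \bbY\ (\sup E(x,r,Y) > d).$$

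Finally I would invoke the hypothesis: for each fixed $Y \in \bbY$ we have $\sup E(x,r,Y) > d \iff \sup E(x',r,Y) > d$. Since the existential over $Y$ is just a disjunction of statements that are equivalent for $x$ and for $x'$, the two existentials agree, and unwinding the identity backwards gives $\sup E(x,r,\bigcup\bbY) > d \iff \sup E(x',r,\bigcup\bbY) > d$. Together with the label condition this is precisely $d$-cut stability of $X$ w.r.t.\ $\bigcup\bbY$.

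\textbf{Main obstacle.} I do not anticipate a genuine difficulty: the argument is the distribution of $\sup$ over a union combined with taking the hypothesis equivalence apart and reassembling it across the quantifier $\E\, Y \in \bbY$. The only points needing a word of care are the boundary case $\bigcup\bbY = \emptyset$, where $\sup E(x,r,\emptyset) = 0 \leq d$ so the inequality fails on both sides and the equivalence is vacuous, and the observation that the label condition must be inherited from a member of $\bbY$, so the intended reading has $\bbY$ nonempty.
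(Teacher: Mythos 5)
Your proof is correct, and it is essentially the paper's own argument: the paper omits the proof entirely, declaring it straightforward, and your route---noting that the label condition is independent of the second argument, and that $\sup E(x,r,\bigcup\bbY) > d$ holds iff $\sup E(x,r,Y) > d$ for some $Y \in \bbY$, so the hypothesis equivalences pass through the existential---is precisely the intended straightforward reasoning. Your care about the degenerate case is consistent with the paper's conventions, since there $X$ and $Y$ denote non-empty subsets of $V$, so the intended reading indeed has $\bbY$ non-empty.
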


The proofs of this lemma and the following corollary are straightforward. 

\begin{corollary}\label{cor: HGDJA}
If $X$ is $d$-cut stable w.r.t.\ $\bbP$ and $\bbQ$ is coarser than $\bbP$, then $X$ is also $d$-cut stable w.r.t.~$\bbQ$.
\end{corollary}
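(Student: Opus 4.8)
The plan is to reduce the corollary directly to Lemma~\ref{lemma: FKWMN}, using the single structural observation that, because $\bbP$ and $\bbQ$ are both crisp partitions of $V$ and $\bbQ$ is coarser than $\bbP$, every component of $\bbQ$ is a union of components of $\bbP$.

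First I would unpack the hypothesis that $\bbQ$ is coarser than $\bbP$. By definition this means $\bbP$ is a refinement of $\bbQ$, i.e.\ every $Y_1 \in \bbP$ is contained in some $Y_2 \in \bbQ$; since $\bbQ$ is itself a partition, this containing component is unique. Grouping the components of $\bbP$ by the component of $\bbQ$ that contains them yields, for each fixed $Y_2 \in \bbQ$, a subfamily $\bbY = \{Y_1 \in \bbP \mid Y_1 \subseteq Y_2\} \subseteq \bbP$ with $\bigcup \bbY = Y_2$. That the union exhausts $Y_2$ follows from $\bbP$ being a partition of all of $V$: every point of $Y_2$ lies in some $\bbP$-component, which must then be contained in $Y_2$.

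Next I would fix an arbitrary $Y_2 \in \bbQ$ and show that $X$ is $d$-cut stable w.r.t.\ $Y_2$. Since $X$ is $d$-cut stable w.r.t.\ $\bbP$, by definition it is $d$-cut stable w.r.t.\ each $Y_1 \in \bbP$, and in particular w.r.t.\ each $Y_1 \in \bbY$. Applying Lemma~\ref{lemma: FKWMN} to the family $\bbY$ then gives that $X$ is $d$-cut stable w.r.t.\ $\bigcup \bbY = Y_2$. As $Y_2 \in \bbQ$ was arbitrary, $X$ is $d$-cut stable w.r.t.\ every component of $\bbQ$, which is precisely the assertion that $X$ is $d$-cut stable w.r.t.\ $\bbQ$.

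There is no real obstacle here: the argument is a one-line invocation of Lemma~\ref{lemma: FKWMN} once the set-up is in place. The only point requiring a little care is the translation of the abstract ``coarser than'' relation into the concrete statement that each $\bbQ$-component decomposes as a union of $\bbP$-components, which crucially relies on both $\bbP$ and $\bbQ$ being partitions of the same set $V$ rather than arbitrary families of subsets.
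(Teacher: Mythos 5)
Your proof is correct and takes essentially the same approach the paper intends: the paper leaves this corollary as a straightforward consequence of Lemma~\ref{lemma: FKWMN}, and your argument (decomposing each component of $\bbQ$ as the union of the $\bbP$-components it contains, then invoking the lemma componentwise) is precisely that intended reduction. Your noted point of care --- that the decomposition uses both families being partitions of the same set $V$ --- is the only nontrivial step, and you handle it correctly.
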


\begin{lemma}\label{lemma: JHFKS}
Let $G = \tuple{V, E, L, \SV, \SE}$ be a finite fuzzy graph, $\bbB$ a fuzzy partition of $V$ and $Z$ the fuzzy equivalence relation corresponding to $\bbB$. Then, $Z$ is a fuzzy bisimulation of $G$ iff $\bbB$ is $d$-cut stable for all $d \in [0,1)$.
\end{lemma}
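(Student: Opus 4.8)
The plan is to translate both sides of the claimed biconditional into statements about the strict cuts of $Z$ and then match them clause by clause. Since $Z$ is the fuzzy equivalence relation corresponding to $\bbB$, it is symmetric; hence \eqref{eq: FB3} is obtained from \eqref{eq: FB2} by swapping $x \leftrightarrow x'$ and $y \leftrightarrow y'$ and using $Z(x,x') = Z(x',x)$, so the two are equivalent and it suffices to deal with \eqref{eq: FB1} and \eqref{eq: FB2}. The basic tool is the elementary fact that, for any $a \in [0,1]$, the inequality $Z(x,x') \leq a$ holds iff every $d \in [0,1)$ with $Z(x,x') > d$ satisfies $a > d$; I would record this first. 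Combined with Lemma~\ref{lemma: JHLWB}, which states that $Z(x,x') > d$ iff $x$ and $x'$ lie in the same component of the $d$-cut partition $\bbP_d$ of $\bbB$, this lets me rephrase the bisimulation clauses as conditions quantified over $d$ and over pairs lying in a common component of $\bbP_d$.

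For the label clause \eqref{eq: FB1}, I would unfold the G\"odel definition of $\fequiv$: when $L(x)(p) = L(x')(p)$ the right-hand side is $1$ and the constraint is vacuous, and otherwise it equals $\min\{L(x)(p), L(x')(p)\}$. Using the cut reformulation of $\leq$, I would then show that \eqref{eq: FB1} holds for all $x,x',p$ exactly when, for every $d$ and every pair $x,x'$ with $Z(x,x') > d$, we have $L(x)(p) > d$ iff $L(x')(p) > d$ for all $p$, which is precisely the label part of $d$-cut stability of $\bbP_d$ with respect to itself. The forward direction is immediate, and the converse is a short contradiction argument taking $d = \min\{L(x)(p), L(x')(p)\}$.

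The edge clause \eqref{eq: FB2} is where the real work lies, and I expect it to be the main obstacle, since the fuzzy condition is phrased with an existential quantifier and (implicitly) a supremum over successors whereas $d$-cut stability is phrased with strict cuts. First I would put \eqref{eq: FB2}, using $\fand = \min$, into the equivalent cut form: for all $d$, all $x,x'$ with $Z(x,x') > d$, all $r$, and all $y$ with $E(x,r,y) > d$, there is $y'$ with $E(x',r,y') > d$ and $Z(y,y') > d$. The forward direction is direct. The delicate point is the converse, namely recovering the single inequality $\min\{Z(x,x'),E(x,r,y)\} \leq \max_{y'}\min\{E(x',r,y'),Z(y,y')\}$ from the whole family of strict-cut statements; here I would argue by contradiction, letting $d$ be the value $\max_{y'}\min\{E(x',r,y'),Z(y,y')\}$, which is attained precisely because $V$ is finite, so that the cut statement at this $d$ produces a successor beating the maximum. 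Then, grouping successors by their component in $\bbP_d$ and reading $Z(y,y') > d$ via Lemma~\ref{lemma: JHLWB} as ``$y'$ lies in the component $Y$ of $y$'', the requirement $\E y' \in Y\ E(x',r,y') > d$ becomes $\sup E(x',r,Y) > d$. This converts \eqref{eq: FB2} into: for every $d$, every $x,x'$ in a common component of $\bbP_d$, every $r$, and every $Y \in \bbP_d$, $\sup E(x,r,Y) > d$ implies $\sup E(x',r,Y) > d$. Because membership in a common component is symmetric in $x,x'$, the implication for all ordered pairs coincides with the biconditional, i.e.\ the edge part of $d$-cut stability.

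Finally I would assemble the pieces: $Z$ is a fuzzy bisimulation iff \eqref{eq: FB1} and \eqref{eq: FB2} hold, iff both the label part and the edge part of $d$-cut stability of $\bbP_d$ hold for every $d \in [0,1)$, iff $\bbB$ is $d$-cut stable for all $d$. I would flag the two places where the hypotheses bite: finiteness of $G$ is used exactly once, to turn the supremum/existential in \eqref{eq: FB2} into an attained maximum, and symmetry of $Z$ is used to upgrade the one-directional implications into the biconditionals demanded by $d$-cut stability.
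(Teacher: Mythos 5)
Your proposal is correct and follows essentially the same route as the paper's proof: both hinge on Lemma~\ref{lemma: JHLWB} to translate between strict cuts of $Z$ and components of the $d$-cut partitions, both use the critical choice $d = \min\{L(x)(p), L(x')(p)\}$ for the label clause and an attained maximum over successors (finiteness of $V$) for the edge clause, and your explicit symmetry reduction of \eqref{eq: FB3} to \eqref{eq: FB2} merely makes precise what the paper dismisses as ``analogous.'' The only differences are organizational --- you phrase the argument as a chain of cut-level equivalences and take $d$ equal to the attained maximum where the paper interposes an auxiliary value $d'$ strictly between the maximum and $\min\{Z(x,x'),E(x,r,y)\}$ --- and these are cosmetic.
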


\begin{proof}
By definition, the leaves of the representation tree of $\bbB$ are crisp blocks. 
First, suppose that $\bbB$ is $d$-cut stable for all $d \in [0,1)$. We show that $Z$ is a fuzzy bisimulation of $G$. 

Consider Condition~\eqref{eq: FB1}. Let $x,x' \in V$, $p \in \SV$ and $d = \min\{L(x)(p), L(x')(p)\}$. For a contradiction, suppose that $Z(x,x') > (L(x)(p) \fequiv L(x')(p))$. Thus, $L(x)(p) \neq L(x')(p)$ and $Z(x,x') > d$. By Lemma~\ref{lemma: JHLWB}, it follows that $x$ and $x'$ belong to the same component $X$ of the $d$-cut partition of $\bbB$. Since $\bbB$ is $d$-cut stable, $X$ is $d$-cut stable w.r.t.\ itself. Hence, $L(x)(p) > d$ iff $L(x')(p) > d$. Since $d = \min\{L(x)(p), L(x')(p)\}$, it follows that $L(x)(p) = L(x')(p)$, a contradiction.  

Consider Condition~\eqref{eq: FB2}. Let $x,x',y \in V$, $r \in \SE$ and $d = \min\{Z(x,x'),E(x,r,y)\}$. For a contradiction, suppose that, for every $y' \in V$, $d > \min\{Z(y,y'),E(x',r,y')\}$. Thus, there exists $d' < d$ such that, for every $y' \in V$, $d' > \min\{Z(y,y'),E(x',r,y')\}$. Let $X$ (respectively, $Y$) be the component belonging to the $d'$-cut partition of $\bbB$ that contains $x$ (respectively, $y$). Since $Z(x,x') > d'$, by Lemma~\ref{lemma: JHLWB}, it must hold that $x' \in X$. Since $\bbB$ is $d'$-cut stable, $X$ is $d'$-cut stable w.r.t.~$Y$. Since $E(x,r,y) > d'$, it follows that there exists $y' \in Y$ such that $E(x',r,y') > d'$. By Lemma~\ref{lemma: JHLWB}, $Z(y,y') > d'$. Therefore, $\min\{Z(y,y'),E(x',r,y')\} > d'$, a contradiction. 

Condition~\eqref{eq: FB3} can be proved analogously as done for Condition~\eqref{eq: FB2}.

Now, suppose that $Z$ is a fuzzy bisimulation of $G$ and let $d \in [0,1)$. We show that $\bbB$ is $d$-cut stable. Let $X$ be a component of the $d$-cut partition of $\bbB$ and let $x,x' \in X$. By Lemma~\ref{lemma: JHLWB}, $Z(x,x') > d$. 

Let $p \in \SV$. Since $Z$ is a fuzzy bisimulation of $G$, we have that $Z(x,x') \leq (L(x)(p) \fequiv L(x')(p))$. Hence, $L(x)(p) = L(x')(p)$ or $d < Z(x,x') \leq \min\{L(x)(p),L(x')(p)\}$. This implies that $L(x)(p) > d$ iff $L(x')(p) > d$. 

Let $r \in \SE$ and let $Y$ be a component of the $d$-cut partition of $\bbB$. Suppose that $E(x,r,y) > d$. We need to show that there exists $y' \in Y$ such that $E(x',r,y') > d$. Since $Z$ is a fuzzy bisimulation of $G$ and $d < \min\{Z(x,x'), E(x,r,y)\}$, there exists $y' \in V$ such that $d < \min\{Z(y,y'), E(x',r,y')\}$. By Lemma~\ref{lemma: JHLWB}, $y' \in Y$, which completes the proof.
\myend
\end{proof}

Lemma~\ref{lemma: JHFKS} given above provides another look on the considered problem. 
To compute the fuzzy partition corresponding to the greatest fuzzy bisimulation of~$G$, we can start from the simple block $\bbB$ with $\bbB.\elements() = V$, then for every fuzzy value $d$ used in $G$ in the increasing order, we refine $\bbB$ to make it $d$-cut stable. 

Let $\refineA(\bbB, d, \bbP)$ and $\refineB(\bbB, d, X, Y)$ be procedures that are specified as follows: 
\begin{itemize}
\item $\refineA(\bbB, d, \bbP)$ is defined only for parameters satisfying the following assumptions:
	\begin{itemize}
	\item $\bbP$ is equal to the leaf partition of $\bbB$, 
	\item all leaves of the representation tree of $\bbB$ are simple blocks, 
	\item each inner node $B'$ of the representation tree of $\bbB$ has $B'.\degree < d$.
	\end{itemize} 
The procedure $\refineA(\bbB, d, \bbP)$ changes $\bbB$ to the coarsest refinement of $\bbB$ that is $d$-cut stable w.r.t.~$\bbP$ and whose representation tree does not use crisp blocks. 

\item $\refineB(\bbB, d, X, Y)$ is defined only for parameters satisfying the following assumptions:
\begin{itemize}
	\item $X \subset Y$, 
	\item all leaves of the representation tree of $\bbB$ are simple blocks, 
	\item each inner node $B'$ of the representation tree of $\bbB$ has $B'.\degree \leq d$.
\end{itemize} 
The procedure $\refineB(\bbB, d, X, Y)$ changes $\bbB$ to the coarsest refinement of $\bbB$ that is $d$-cut stable w.r.t.\ both $X$ and $Y \setminus X$ and whose representation tree does not use crisp blocks. 
\end{itemize}

How to implement the procedures $\refineA(\bbB, d, \bbP)$ and $\refineB(\bbB, d, X, Y)$ efficiently is postponed to the next section. We present below only a straightforward implementation of them in order to give an insight on these procedures.

The procedure $\refineA(\bbB, d, \bbP)$ can be implemented as follows:
\begin{itemize}
\item[] for each leaf $B$ of the representation tree of $\bbB$ such that $B.\elements()$ is not $d$-cut stable w.r.t.~$\bbP$:
	\begin{itemize}
	\item let $\{X_1,\ldots,X_k\}$ be the coarsest refinement of $\{B.\elements()\}$ that is $d$-cut stable w.r.t.~$\bbP$;
	\item modify $\bbB$ by replacing the leaf $B$ in its representation tree with a new node that is the fuzzy block $B'$ specified as follows: $B'.\degree = d$ and $B'.\subblocks()$ consists of the simple blocks $B'_1,\ldots,B'_k$ with $B'_i.\elements() = X_i$, for $1 \leq i \leq k$.
	\end{itemize}
\end{itemize}

\begin{proposition}\label{prop: JHDKW}
The above implementation of $\refineA(\bbB, d, \bbP)$ satisfies the specification. That is, if $\bbB_0$ is $\bbB$ before executing $\refineA(\bbB, d, \bbP)$, then after the execution, $\bbB$ is equal to the coarsest refinement of $\bbB_0$ that is $d$-cut stable w.r.t.~$\bbP$ and whose representation tree does not use crisp blocks. 
\end{proposition}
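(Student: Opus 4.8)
Write $\bbB_0$ for the value of $\bbB$ before the call and $\bbB_1$ for the value afterwards. The plan is to verify that $\bbB_1$ is a refinement of $\bbB_0$, is $d$-cut stable w.r.t.~$\bbP$, and uses no crisp blocks in its representation tree, and then to prove that $\bbB_1$ is coarser than every block $\bbB'$ enjoying these same three properties. Since $\bbB_1$ itself is such a block, this shows it is a coarsest one; because all leaves involved are simple blocks, Lemma~\ref{lemma: HDKJA2} then guarantees that any two coarsest blocks are equal, so $\bbB_1$ is exactly the block named in the specification. Everything reduces to comparing $d'$-cut partitions, so the first task is to compute them.

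The key observation is that, by the input assumptions, every inner node of $\bbB_0$ has degree strictly below $d$; hence for every $d' \geq d$ the recursion defining the $d'$-cut partition descends all the way to the leaves, so the $d'$-cut partition of $\bbB_0$ equals its leaf partition $\bbP$. The implementation only replaces certain simple leaves $B$ (those whose element set is not $d$-cut stable w.r.t.~$\bbP$) by a fuzzy block of degree exactly $d$ sitting above simple subblocks for the parts $X_1,\dots,X_k$. I would record the resulting leaf partition as $\bbP' = \bigcup_{X}\pi_d(X)$, where for each $X\in\bbP$ the partition $\pi_d(X)$ is the coarsest refinement of $\{X\}$ that is $d$-cut stable w.r.t.~$\bbP$ (with $\pi_d(X)=\{X\}$ when $X$ is already stable). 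Because the new fuzzy blocks have degree $d$, a short case analysis on $d'$ gives: for $d' < d$ the modification is invisible to the $d'$-cut, as each new block stops the recursion and returns its whole element set, so the $d'$-cut partitions of $\bbB_0$ and $\bbB_1$ coincide; for $d' \geq d$ the recursion again reaches the leaves and the $d'$-cut partition of $\bbB_1$ equals $\bbP'$. I will also use the elementary monotonicity fact that $d_1 \le d_2$ implies the $d_2$-cut partition refines the $d_1$-cut partition of any block.

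With these cut partitions in hand the three easy properties are immediate: $\bbP'$ refines $\bbP$, so $\bbB_1$ is a refinement of $\bbB_0$; each component of $\bbP'$ is $d$-cut stable w.r.t.~$\bbP$ (unchanged leaves were already stable, and each $X_i$ is stable by construction), so $\bbB_1$ is $d$-cut stable w.r.t.~$\bbP$; and its leaves are simple while its inner nodes are fuzzy, so no crisp block occurs.

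The main obstacle is minimality, i.e.\ showing $\bbB_1$ is coarser than an arbitrary competitor $\bbB'$. I would first establish the auxiliary fact that $\bbP'$ is the coarsest refinement of $\bbP$ that is $d$-cut stable w.r.t.~$\bbP$: a refinement of $\bbP$ is $d$-cut stable w.r.t.~$\bbP$ exactly when, inside each $X\in\bbP$, its parts refine the equivalence ``$x\approx x'$ iff $x,x'$ agree on every test $L(\cdot)(p) > d$ and $\sup E(\cdot,r,Y) > d$ for $p\in\SV$, $r\in\SE$, $Y\in\bbP$'', whose classes are precisely $\pi_d(X)$. Now fix any $\bbB'$ that refines $\bbB_0$, is $d$-cut stable w.r.t.~$\bbP$, and uses no crisp blocks, and compare $d'$-cut partitions. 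For $d' < d$ the $d'$-cut of $\bbB_1$ equals that of $\bbB_0$, which is coarser than that of $\bbB'$ because $\bbB'$ refines $\bbB_0$. For $d' \geq d$, the $d'$-cut partition of $\bbB'$ refines its $d$-cut partition by monotonicity; the $d$-cut partition of $\bbB'$ is a refinement of $\bbP$ (as $\bbB'$ refines $\bbB_0$) and is $d$-cut stable w.r.t.~$\bbP$, hence is refined by $\bbP'$ by the auxiliary fact; chaining these through transitivity of ``coarser than'' shows $\bbP'$ is coarser than the $d'$-cut partition of $\bbB'$. Thus $\bbB_1$ is coarser than $\bbB'$ at every $d'$, which completes the argument.
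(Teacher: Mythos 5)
Your proposal is correct, but it takes a genuinely different route from the paper's proof. The paper names the coarsest refinement up front (its $\bbB_1$, whose existence is tacitly presupposed) and proves by induction over the iterations of the implementation's loop the invariant that $\bbB$ remains coarser than this yardstick, using inside each iteration the fact that the relevant portion of a stable partition must refine the coarsest stable refinement $\{X_1,\ldots,X_k\}$ of the processed leaf; at termination the output is also a refinement of the yardstick, and Lemma~\ref{lemma: HDKJA2} closes the argument. You instead exploit the observation that the splitting criterion of $\refineA(\bbB,d,\bbP)$ is \emph{static} --- it depends only on $\bbP$, $L$ and $E$, never on the evolving $\bbB$ --- to compute the output's cut partitions in closed form (unchanged for $d' < d$, equal to $\bbP' = \bigcup_X \pi_d(X)$ for $d' \geq d$), and you replace the per-iteration comparison by a single canonical characterization: $\bbP'$ is the coarsest refinement of $\bbP$ that is $d$-cut stable w.r.t.~$\bbP$, because $d$-cut stability amounts to saturation under the test-equivalence ``agree on all checks $L(\cdot)(p)>d$ and $\sup E(\cdot,r,Y)>d$''. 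You then compare against an \emph{arbitrary} competitor using the elementary monotonicity of $d'$-cuts, rather than against a fixed named object. Both proofs share the same endgame (the trichotomy on $d'$ and Lemma~\ref{lemma: HDKJA2}) and the same key combinatorial fact, used locally per iteration in the paper and globally in your argument. What your version buys: it dispenses with the loop induction, and by showing the output is a maximum among all competitors it actually \emph{proves existence} of the coarsest refinement, which the paper's proof assumes when it defines its $\bbB_1$. What the paper's invariant style buys: it is the template that is reused verbatim for $\refineB$ (Proposition~\ref{prop: JHDKW2}) and that matches the invariant-based correctness proofs of the main algorithm (Lemmas~\ref{lemma: HDJSJ} and~\ref{lemma: DJSJA}), where the state of $\bbB$ must be tracked across many interacting refinement calls and a closed-form description of the result is no longer available.
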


\begin{proof}
Let $\bbB_0$ be $\bbB$ before executing $\refineA(\bbB, d, \bbP)$ and let $\bbB_1$ be the coarsest refinement of $\bbB_0$ that is $d$-cut stable w.r.t.~$\bbP$ and whose representation tree does not use crisp blocks.  

We first prove that it is an invariant of the loop of the implementation that $\bbB$ is coarser than $\bbB_1$. Consider an iteration of the loop and let $B$ and $X_1,\ldots,X_k$ be the objects used in that iteration. Observe that each inner node $B''$ of the representation tree of $\bbB$ has $B''.\degree \leq d$. By the induction assumption about the invariant, before executing that iteration, the $d$-cut partition of $\bbB$ is coarser than the $d$-cut partition of $\bbB_1$, and hence, a subset $\bbX$ of the $d$-cut partition of $\bbB_1$ is a crisp partition of the set $B.\elements()$. Since $\bbB_1$ is $d$-cut stable w.r.t.~$\bbP$, $\bbX$ is $d$-cut stable w.r.t.~$\bbP$, and therefore $\bbX$ is a refinement of $\{X_1,\ldots,X_k\}$. By executing the mentioned iteration of the loop, $\bbB$ is modified so that the $d$-cut partition of $\bbB$ after the execution differs from the one before the execution in that $B.\elements()$ is replaced by $X_1,\ldots,X_k$. Therefore, the $d$-cut partition of $\bbB$ after executing the iteration is still coarser than the $d$-cut partition of $\bbB_1$. 
For $d' \in (d,1)$, after executing the iteration, the $d'$-cut partition of $\bbB$ is the same as the $d$-cut partition of $\bbB$ and is therefore coarser than the $d'$-cut partition of $\bbB_1$. 
For $d' \in [0,d)$, the $d'$-cut partition of $\bbB$ is not affected by the iteration and is therefore coarser than the $d'$-cut partition of $\bbB_1$. 
We conclude that, after executing the mentioned iteration of the loop, $\bbB$ is still coarser than $\bbB_1$. 

Clearly, the loop of the implementation terminates (because all the sets $X_1,\ldots,X_k$ are $d$-cut stable w.r.t.~$\bbP$), and at the end, $\bbB$ is $d$-cut stable w.r.t.~$\bbP$. Let $\bbB_2$ denote $\bbB$ after executing the procedure. Thus, $\bbB_2$ is a refinement of $\bbB_0$, $\bbB_2$ is $d$-cut stable w.r.t.~$\bbP$, and clearly, all leaves of the representation tree of $\bbB_2$ are simple blocks. Therefore, $\bbB_2$ is a refinement of $\bbB_1$. On the other hand, by the above mentioned invariant, $\bbB_2$ is coarser than $\bbB_1$. By Lemma~\ref{lemma: HDKJA2}, it follows that $\bbB_2$ is equal to $\bbB_1$, which completes the proof.
\myend
\end{proof}

The procedure $\refineB(\bbB, d, X, Y)$ can be implemented as follows:
\begin{itemize}\label{place HDJHS}
	\item[] for each leaf $B$ of the representation tree of $\bbB$ such that $B.\elements()$ is not $d$-cut stable w.r.t.\ both $X$ and $Y \setminus X$:
	\begin{itemize}
		\item let $\{X_1,\ldots,X_k\}$ be the coarsest refinement of $\{B.\elements()\}$ that is $d$-cut stable w.r.t.\ both $X$ and $Y \setminus X$;
		\item if either $B$ is the root of $\bbB$ or the parent node $B_p$ of the leaf $B$ in the representation tree of $\bbB$ has $B_p.\degree < d$ then:
			\begin{itemize}
			\item modify $\bbB$ by replacing the leaf $B$ in its representation tree with a new node that is the fuzzy block $B'$ specified as follows: $B'.\degree = d$ and $B'.\subblocks()$ consists of the simple blocks $B'_1,\ldots,B'_k$ with $B'_i.\elements() = X_i$, for $1 \leq i \leq k$.
			\end{itemize}
		\item else:
			\begin{itemize}
			\item let $B_p$ be the parent node of the leaf $B$ in the representation tree of $\bbB$; (by the assumptions about the parameters, we have that $B_p.\degree = d$)
			\item modify $\bbB$ by replacing $B$ in the collection of subblocks of $B_p$ with the simple blocks $B'_1,\ldots,B'_k$ with $B'_i.\elements() = X_i$, for $1 \leq i \leq k$.
			\end{itemize}
	\end{itemize}
\end{itemize}

\begin{proposition}\label{prop: JHDKW2}
The above implementation of $\refineB(\bbB, d, X, Y)$ satisfies the specification. That is, if $\bbB_0$ is $\bbB$ before executing $\refineB(\bbB, d, X, Y)$, then after the execution, $\bbB$ is equal to the coarsest refinement of $\bbB_0$ that is $d$-cut stable w.r.t.\ both $X$ and $Y \setminus X$ and whose representation tree does not use crisp blocks. 
\end{proposition}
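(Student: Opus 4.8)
The plan is to mirror the proof of Proposition~\ref{prop: JHDKW} closely, since $\refineB$ and $\refineA$ differ only in how a non-stable leaf is replaced. I would set $\bbB_0$ to be $\bbB$ before the call and let $\bbB_1$ be the coarsest refinement of $\bbB_0$ that is $d$-cut stable w.r.t.\ both $X$ and $Y\setminus X$ and whose representation tree uses no crisp blocks; the goal is to show that the value $\bbB_2$ of $\bbB$ after the call equals $\bbB_1$. The core is again a loop invariant asserting that $\bbB$ stays coarser than $\bbB_1$. Here I would use that, by the assumption on parameters and the invariant, every inner node has degree $\le d$, so the $d$-cut partition descends all the way to the leaves and each processed leaf $B$ has $B.\elements()$ as a component of it. For an iteration with associated refinement $\{X_1,\ldots,X_k\}$ (where $k\ge 2$, as only non-stable leaves are processed), I would argue exactly as in Proposition~\ref{prop: JHDKW} that the subset $\bbX$ of the $d$-cut partition of $\bbB_1$ covering $B.\elements()$ is $d$-cut stable w.r.t.\ both $X$ and $Y\setminus X$, hence a refinement of $\{X_1,\ldots,X_k\}$.

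The one genuinely new point is that $\refineB$ has two branches, so I would verify that \emph{both} branches have the same effect on every $d'$-cut partition, namely: the $d$-cut partition changes only by replacing $B.\elements()$ with $X_1,\ldots,X_k$; for $d'\in(d,1)$ the $d'$-cut partition equals the $d$-cut partition (all inner nodes still have degree $\le d$, so every cut above $d$ descends to the leaves); and for $d'\in[0,d)$ the $d'$-cut partition is unchanged. In the ``if'' branch (parent of degree $<d$, or $B$ the root) this is identical to the $\refineA$ computation: a fresh fuzzy block of degree $d$ with simple subblocks $B'_1,\ldots,B'_k$ is inserted, and the constraint that a fuzzy subblock strictly exceeds its parent's degree holds because the parent degree is $<d$. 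In the ``else'' branch the parent $B_p$ already has degree exactly $d$, and $B$ is replaced \emph{inside} $B_p.\subblocks()$ by $B'_1,\ldots,B'_k$; here I would check that (i) this yields a legal block, since $B_p$ retains more than one subblock and the degree constraints are untouched, and (ii) the $d$-cut effect is the same, because at $B_p$ (degree $d$) the cut descends into the subblocks and merely sees $X_1,\ldots,X_k$ in place of $B.\elements()$, while for $d'<d$ the cut stops at $B_p$ (degree $d>d'$) and never sees the internal change. I expect this case split---confirming that direct insertion into a degree-$d$ parent is both structurally valid and $d$-cut-equivalent to creating a new degree-$d$ block---to be the main obstacle, and it is precisely what forces the two-branch design of $\refineB$.

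With the invariant established, the remainder is routine and parallels Proposition~\ref{prop: JHDKW}. The loop terminates because each processed leaf is split into $d$-cut-stable pieces, so after the call $\bbB_2$ is $d$-cut stable w.r.t.\ both $X$ and $Y\setminus X$, is a refinement of $\bbB_0$, and has only simple leaves. Hence $\bbB_2$ is a refinement of $\bbB_1$, while the invariant gives that $\bbB_2$ is coarser than $\bbB_1$. Applying Lemma~\ref{lemma: HDKJA2} (the simple-leaf variant of antisymmetry of ``coarser'') then forces $\bbB_2=\bbB_1$, which is the claim.
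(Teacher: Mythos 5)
Your proposal is correct and follows exactly the route the paper intends: the paper's proof is the single remark that the claim ``can be proved analogously as done for Proposition~\ref{prop: JHDKW}'', and you carry out precisely that analogy (loop invariant that $\bbB$ stays coarser than $\bbB_1$, termination, then Lemma~\ref{lemma: HDKJA2}). Your explicit verification that both branches of $\refineB$ produce the same $d'$-cut partitions for all $d'$ --- and that the else-branch respects the subblock-degree restriction, which is exactly why the branch exists --- is the one detail the paper leaves implicit, and you handle it correctly.
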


This proposition can be proved analogously as done for Proposition~\ref{prop: JHDKW}.

\begin{corollary}\label{cor: HGFJS}
Independently from the implementation of $\refineA$ and $\refineB$ (which is required to be correct), after executing $\refineA(\bbB, d, \bbP)$ or $\refineB(\bbB, d, X, Y)$, each inner node of the representation tree of $\bbB$ is a fuzzy block $B$ with $B.\degree \leq d$.
\end{corollary}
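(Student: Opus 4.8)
The plan is to show that the degree bound follows purely from the specification of $\refineA$ and $\refineB$, namely that the output $\bbB$ is the \emph{coarsest} refinement of the input block (call it $\bbB_0$) that is $d$-cut stable with respect to the relevant sets and whose representation tree uses no crisp blocks. Since $\bbB$ uses no crisp blocks, its leaves are simple blocks and its inner nodes are fuzzy blocks, so it remains only to prove that every inner node has degree at most $d$. Throughout I will work with the families of $d'$-cut partitions ($d' \in [0,1)$), writing $P_C(d')$ for the $d'$-cut partition of a block $C$; recall that $P_C(d')$ gets finer as $d'$ increases, and that, by definition, $C$ is coarser than $C'$ exactly when $P_C(d')$ is coarser than $P_{C'}(d')$ for every $d'$. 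The one fact I will extract from the assumptions on the input is that every inner node of $\bbB_0$ has degree $< d$ (for $\refineA$) or $\le d$ (for $\refineB$); in either case every inner node of $\bbB_0$ has degree $\le d$, so for all $d' \ge d$ the $d'$-cut descends through every inner node of $\bbB_0$ and $P_{\bbB_0}(d')$ equals the leaf partition of $\bbB_0$. In particular $P_{\bbB_0}(d')$ is the \emph{same} partition for all $d' \ge d$.

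Next I would construct an auxiliary block $\bbB'$ from $\bbB$ by flattening everything below degree $d$: collapse each maximal subtree of $\bbB$ whose root has degree $> d$ into a single simple block holding the elements of that root. Equivalently, $\bbB'$ is the block whose cut family freezes that of $\bbB$ at level $d$, that is, $P_{\bbB'}(d') = P_{\bbB}(d')$ for $d' \le d$ and $P_{\bbB'}(d') = P_{\bbB}(d)$ for $d' > d$. Since degrees strictly increase along every root-to-leaf path, the nodes of degree $\le d$ are closed under taking ancestors and form the top part of the tree; these are precisely the nodes that survive as inner nodes of $\bbB'$. Hence $\bbB'$ is a genuine block (a surviving fuzzy node keeps all of its children, each either surviving or becoming a simple leaf, so it still has more than one subblock and its surviving fuzzy children still have strictly larger degree), its leaves are simple, and all of its inner nodes have degree $\le d$ by construction.

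It then remains to verify three properties and invoke uniqueness. First, $\bbB'$ is coarser than $\bbB$: the cut families agree for $d' \le d$, while for $d' > d$ we have $P_{\bbB'}(d') = P_{\bbB}(d)$, which is coarser than the finer partition $P_{\bbB}(d')$. Second, $\bbB'$ is a refinement of $\bbB_0$: for $d' \le d$ this is immediate from $P_{\bbB'}(d') = P_{\bbB}(d')$ and the fact that $\bbB$ refines $\bbB_0$; for $d' > d$ it uses the constancy noted above, $P_{\bbB_0}(d') = P_{\bbB_0}(d)$, together with $P_{\bbB'}(d') = P_{\bbB}(d)$ refining $P_{\bbB_0}(d)$. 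Third, $\bbB'$ is $d$-cut stable with respect to the relevant sets, because $P_{\bbB'}(d) = P_{\bbB}(d)$ and $\bbB$ is $d$-cut stable. Thus $\bbB'$ is a refinement of $\bbB_0$ with simple leaves that is $d$-cut stable and is coarser than $\bbB$; since the specification makes $\bbB$ the coarsest such block, $\bbB$ is coarser than $\bbB'$ as well, and Lemma~\ref{lemma: HDKJA2} yields $\bbB' = \bbB$. As every inner node of $\bbB'$ has degree $\le d$, the same holds for $\bbB$.

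The two procedures are handled uniformly, the only difference being the strict versus non-strict bound on the inner degrees of $\bbB_0$; both yield the constancy $P_{\bbB_0}(d') = P_{\bbB_0}(d)$ for $d' \ge d$ that the argument needs. I expect the main obstacle to be exactly this refinement check for $d' > d$: it is the only step where one must combine a property of the \emph{input} $\bbB_0$ (that it carries no splitting above degree $d$) with the flattening of the \emph{output}, and it is what makes the freezing construction a legitimate coarsening rather than one that could destroy refinement of $\bbB_0$. A secondary point needing care, already addressed above, is confirming that the flattened object really satisfies the structural restrictions on blocks, which holds because flattening deletes whole subtrees and never changes the number of children of a surviving node.
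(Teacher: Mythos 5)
Your proof is correct, but it takes a genuinely different route from the paper's. The paper disposes of this corollary in one line, as a transfer argument: Propositions~\ref{prop: JHDKW} and~\ref{prop: JHDKW2} certify that the explicit straightforward implementations of $\refineA$ and $\refineB$ meet the specification, and those implementations visibly create only fuzzy blocks with $\degree$ equal to $d$ (or insert subblocks under a parent of degree $d$) while leaving the input's inner nodes untouched, whose degrees are $< d$ (for $\refineA$) or $\leq d$ (for $\refineB$); since Lemma~\ref{lemma: HDKJA2} makes the coarsest $d$-cut stable refinement with simple leaves unique up to block equality, and block equality preserves degrees, every correct implementation inherits the bound from the reference one. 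You instead argue intrinsically from the specification, with no reference implementation: your freeze-at-$d$ flattening $\bbB'$ of the output is shown to be a refinement of the input $\bbB_0$ with simple leaves, $d$-cut stable with respect to the relevant sets, and coarser than the output, so the extremality of ``coarsest'' together with Lemma~\ref{lemma: HDKJA2} forces $\bbB' = \bbB$; since a fuzzy block is never equal to a simple block, no inner node of degree $> d$ survives. Your three verifications are sound, including the one delicate step you flag yourself: the refinement of $\bbB_0$ above level $d$ rests on the $d'$-cut partitions of $\bbB_0$ being constant (equal to its leaf partition) for all $d' \geq d$, which is precisely where the precondition on the input's inner degrees is consumed, and your check that flattening respects the structural restrictions on blocks (subblock counts and strictly increasing degrees are preserved) is also correct. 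As for what each approach buys: the paper's proof is immediate given the two propositions it had already established, whereas yours needs neither of them, deriving the degree bound from the specification alone --- arguably a cleaner match for the corollary's phrase ``independently from the implementation,'' and robust under any change to the concrete procedures.
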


This corollary follows immediately from Propositions~\ref{prop: JHDKW} and~\ref{prop: JHDKW2} and Lemma~\ref{lemma: HDKJA2}. 

%===============================================================================

We provide Algorithm~\ref{algCompFP} (on page~\pageref{algCompFP}) for computing the fuzzy partition corresponding to the greatest fuzzy bisimulation of~$G$. Let $d_0 = 0$ and let $d_1 < d_2 < \ldots < d_k$ be all the non-zero fuzzy values used for $G$. The algorithm starts from initializing $\bbB$ to the simple block with $\bbB.\elements() = V$. Then, for every $i$ from 0 to $k-1$, it refines $\bbB$ to make it $d_i$-cut stable. The loop uses a crisp partition $\bbP$ of $V$, which is set up at the beginning and maintained so that $\bbP$ is equal to the leaf partition of $\bbB$ before and after each iteration. In each iteration of the loop, $\bbB$ is refined by two phases. First, a call of $\refineA(\bbB,d_i,\bbP)$ modifies $\bbB$ to make it $d_i$-cut stable w.r.t.~$\bbP$. Then, while $\bbP$ differs from the leaf partition of $\bbB$, the algorithm chooses a set $X$ from the leaf partition of $\bbB$ and a set $Y$ from $\bbP$ such that $X \subset Y$, and then call $\refineB(\bbB, d_i, X, Y)$ to make $\bbB$ $d_i$-cut stable w.r.t.\ both $X$ and $Y \setminus X$ before replacing $Y$ in $\bbP$ with $X$ and $Y \setminus X$. These two phases together make $\bbB$ $d_i$-cut stable. 
Following the idea from Hopcroft's algorithm~\cite{Hopcroft71} and Paige and Tarjan's algorithm~\cite{PaigeT87}, the selected set $X$ is one that is at least twice smaller than the mentioned set $Y$. This criterion is indifferent for the correctness of the algorithm, but essential for reducing the complexity order of the algorithm. 
At the end, the algorithm modifies $\bbB$ by converting each leaf of its representation tree (from a simple block) to a crisp block and then returns~$\bbB$.

%===============================================================================

\begin{algorithm}[t]
\caption{\CompFP\label{algCompFP}}
\Input{a finite fuzzy graph $G = \tuple{V, E, L, \SV, \SE}$.}
\Output{the fuzzy partition corresponding to the greatest fuzzy bisimulation of~$G$.}
	
\smallskip

let $d_0 = 0$ and let $d_1 < d_2 < \ldots < d_k$ be all the non-zero fuzzy values used for $G$\; 
set $\bbP := \{V\}$ and set $\bbB$ to the simple block such that $\bbB.\elements() = V$\;

\ForEach{$i$ from 0 to $k-1$\label{step: HGFAJ 4}}{
	$\refineA(\bbB, d_i, \bbP)$\label{step: HGFAJ 4b}\tcp*{to make $\bbB$ $d_i$-cut stable w.r.t.~$\bbP$}
	\While{$\bbP$ differs from the leaf partition of $\bbB$}{
		choose $X$ from the leaf partition of $\bbB$ and $Y$ from $\bbP$ such that $X \subset Y$ and \mbox{$|X| \leq |Y|/2$}\;
		$\refineB(\bbB, d_i, X, Y)$\label{step: HGFAJ 7}\tcp*{to make $\bbB$ $d_i$-cut stable w.r.t.\ both $X$ and $Y \setminus X$}
		refine $\bbP$ by replacing $Y$ with $X$ and $Y \setminus X$\label{step: HGFAJ 8}\;
	}	
}

modify $\bbB$ by converting each leaf of its representation tree, which is a simple block, to a~crisp block with the same set of elements\;
\Return $\bbB$.	
\end{algorithm}

%the implementation of $\refineC$ can exploit the assumption that, when calling $\refineC(\bbB, d_{i+1}, \bbP)$ for $1 \leq i < k$, we have that $\bbP$ is the same as the leaf partition of $\bbB$ and $\bbB$ is $d_i$-cut stable.   

\begin{example}
	Let $G = \tuple{V, E, L, \SV, \SE}$ be the fuzzy graph depicted and specified as follows:
	\begin{center}
		\begin{tikzpicture}
		\node (x0) {};
		\node (x) [node distance=1.5cm, right of=x0] {};
		\node (I) [node distance=0.0cm, below of=x] {};
		\node (I1) [node distance=4.5cm, right of=I] {};
		\node (u) [node distance=0.0cm, below of=I] {$a:p_1$};
		\node (ub) [node distance=2.0cm, below of=u] {};
		\node (v) [node distance=1.0cm, left of=ub] {$b:p_{\,0.7}$};
		\node (w) [node distance=1.0cm, right of=ub] {$c:p_{\,0.8}$};
		\node (up) [node distance=0.0cm, below of=I1] {$d:p_1$};
		\node (ubp) [node distance=2.0cm, below of=up] {};
		\node (vp) [node distance=1.0cm, left of=ubp] {$e:p_{\,0.7}$};
		\node (wp) [node distance=1.0cm, right of=ubp] {$f:p_{\,0.8}$};
		\draw[->] (u) to node [left]{\footnotesize{0.6}} (v);
		\draw[->] (u) to node [right]{\footnotesize{1}} (w);
		\draw[->] (up) to node [left]{\footnotesize{1}} (vp);
		\draw[->] (up) to node [right]{\footnotesize{0.8}} (wp);
		\end{tikzpicture}
	\end{center}	
	\begin{itemize}
		\item $V = \{a,b,c,d,e,f\}$, $\SV = \{p\}$, $\SE = \{r\}$, 
		\item $E = \{\tuple{a,r,b}\!:\!0.6, \tuple{a,r,c}\!:\!1, \tuple{d,r,e}\!:\!1, \tuple{d,r,f}\!:\!0.8\}$, 
		\item $L(a)(p) = L(d)(p) = 1$, $L(b)(p) = L(e)(p) = 0.7$, $L(c)(p) = L(f)(p) = 0.8$.
	\end{itemize}
	It is the disjoint union of the fuzzy graphs given in Example~\ref{example: JHFJH}. 
	Let's apply Algorithm~\ref{algCompFP} to this fuzzy graph. The effects of the steps are as follows. 
	\begin{itemize}
		\item We have $k = 4$ and $(d_0,\ldots,d_4) = (0, 0.6, 0.7, 0.8, 1)$.
		\item At the beginning, $\bbP = \{V\}$ and $\bbB$ is the simple block with $\bbB.\elements() = V$.
		
		\item Executing $\refineA(\bbB, 0, \bbP)$, $\bbB$ is changed to the fuzzy block with $\bbB.\degree = 0$ and $\bbB.\subblocks() = \{B_1,B_2\}$, where $B_1$ and $B_2$ are the simple blocks with $B_1.\elements() = \{a,d\}$ and $B_2.\elements() = \{b,c,e,f\}$.
		
		\item Executing $\refineB(\bbB, 0, X, Y)$ with $X = \{a,d\}$ and $Y = V$, $\bbB$ remains the same. After that, $\bbP$ is changed to $\{\{a,d\}, \{b,c,e,f\}\}$. 
		
		\item Executing $\refineA(\bbB, 0.6, \bbP)$, $\bbB$ remains the same.
		
		\item Executing $\refineA(\bbB, 0.7, \bbP)$, the subblock $B_2$ of $\bbB$ is changed to the fuzzy block with $B_2.\degree = 0.7$ and $B_2.\subblocks() = \{B_{2,1}, B_{2,2}\}$, where $B_{2,1}$ and $B_{2,2}$ are the simple blocks with $B_{2,1}.\elements() = \{b,e\}$ and $B_{2,2}.\elements() = \{c,f\}$. 
		
		\item Executing $\refineB(\bbB, 0.7, X, Y)$ with $X = \{b,e\}$ and $Y = \{b,c,e,f\}$, the subblock $B_1$ of $\bbB$ is changed to the fuzzy block with $B_1.\degree = 0.7$ and $B_1.\subblocks() = \{B_{1,1}, B_{1,2}\}$, where $B_{1,1}$ and $B_{1,2}$ are the simple blocks with $B_{1,1}.\elements() = \{a\}$ and $B_{1,2}.\elements() = \{d\}$. After that, $\bbP$ is changed to $\{\{a,d\}, \{b,e\},\{c,f\}\}$. 
		
		\item Executing $\refineB(\bbB, 0.7, X, Y)$ with $X = \{a\}$ and $Y = \{a,d\}$, $\bbB$ remains the same. After that, $\bbP$ is changed to $\{\{a\}, \{d\}, \{b,e\}, \{c,f\}\}$. 
		
		\item Executing $\refineA(\bbB, 0.8, \bbP)$, $\bbB$ remains the same.
		
		\item At the end, we obtain $\bbB = \{\{\{a\}_1, \{d\}_1\}_{0.7}, \{\{b,e\}_1, \{c,f\}_1\}_{0.7}\}_0$.
		\myend
	\end{itemize}
\end{example}

%===============================================================================

\begin{lemma}\label{lemma: HDJSJ}
Let $\bbB_0$ be the fuzzy partition corresponding to the greatest fuzzy bisimulation of~$G$. 
Here are invariants that hold before and after executing each enumerated statement of the ``foreach'' loop of Algorithm~\ref{algCompFP}:
\begin{enumerate}
\item $\bbP$ is coarser than the leaf partition of $\bbB$.
\item The leaf partition of $\bbB$ is equal to the $d_i$-cut partition of $\bbB$.
\item $\bbB$ is coarser than $\bbB_0$. 
\end{enumerate}
\end{lemma}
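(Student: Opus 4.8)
The plan is to prove all three invariants simultaneously by induction on the execution of Algorithm~\ref{algCompFP}, checking the base case (the state reached just before the first iteration), the effect of each of the three modifying statements~\ref{step: HGFAJ 4b}, \ref{step: HGFAJ 7} and~\ref{step: HGFAJ 8} of the loop body, and the transition that increments the loop variable from $i$ to $i+1$. Throughout I would carry along two structural facts needed to discharge the preconditions of the two procedures at each call: that all leaves of the representation tree of $\bbB$ are simple blocks (crisp leaves are created only after the loop), and that, by Corollary~\ref{cor: HGFJS}, immediately after any call of $\refineA(\bbB,d_i,\bbP)$ or $\refineB(\bbB,d_i,X,Y)$ every inner node $B'$ of $\bbB$ satisfies $B'.\degree \le d_i$. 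The base case is immediate: with $\bbP=\{V\}$ and $\bbB$ the single simple block on $V$, every $d$-cut partition of $\bbB$ equals $\{V\}$, so Invariants~1 and~2 hold with $d_0=0$ and Invariant~3 holds because $\{V\}$ is the coarsest partition of $V$.

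For Invariant~2 I would use the elementary observation that the leaf partition of $\bbB$ equals its $d$-cut partition exactly when every inner node has degree $\le d$. After~\ref{step: HGFAJ 4b} or~\ref{step: HGFAJ 7} this holds with $d=d_i$ by Corollary~\ref{cor: HGFJS}; statement~\ref{step: HGFAJ 8} leaves $\bbB$ untouched; and at the increment of $i$ it is preserved because every inner node has degree $\le d_i < d_{i+1}$, so the leaf partition also coincides with the $d_{i+1}$-cut partition. For Invariant~1 I note that $\refineA$ and $\refineB$ replace $\bbB$ by a refinement, so (comparing $d$-cut partitions for $d$ above all inner degrees) the leaf partition can only become finer, whence $\bbP$ stays coarser than it. The one delicate point is statement~\ref{step: HGFAJ 8}: replacing $Y$ by $X$ and $Y\setminus X$ keeps $\bbP$ coarser than the current leaf partition because $X$ was a leaf component just before the call of $\refineB$, and since $\refineB$ only refines $\bbB$, every new leaf component lies inside an old one, hence inside $X$, inside $Y\setminus X$, or outside $Y$; so no leaf component straddles the new split. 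I would also record here that whenever the while-condition holds a legal choice exists: by Invariant~1 some $Y\in\bbP$ splits into at least two leaf components, and the smallest one has size at most $|Y|/2$.

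The hard part will be Invariant~3, where correctness really resides. The key preliminary step is to combine the three invariants holding before the call: from Invariant~1 ($\bbP$ coarser than the leaf partition of $\bbB$), Invariant~2 (leaf partition $=$ $d_i$-cut partition of $\bbB$) and Invariant~3 (so the $d_i$-cut partition of $\bbB$ is coarser than that of $\bbB_0$), transitivity yields that $\bbP$ is coarser than the $d_i$-cut partition $\bbP_0$ of $\bbB_0$. Since $\bbB_0$ corresponds to the greatest fuzzy bisimulation, Lemma~\ref{lemma: JHFKS} gives that $\bbB_0$ is $d_i$-cut stable, i.e.\ $\bbP_0$ is $d_i$-cut stable w.r.t.\ itself. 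For statement~\ref{step: HGFAJ 4b}, Corollary~\ref{cor: HGDJA} then upgrades this to: $\bbP_0$ is $d_i$-cut stable w.r.t.\ $\bbP$. For statement~\ref{step: HGFAJ 7}, I would first note that $X$ (a leaf component) and $Y\setminus X$ (since $Y\in\bbP$ and $\bbP$ is coarser than $\bbP_0$) are unions of components of $\bbP_0$, so Lemma~\ref{lemma: FKWMN} shows $\bbP_0$ is $d_i$-cut stable w.r.t.\ both $X$ and $Y\setminus X$.

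It then remains to turn this stability of $\bbB_0$ into the required coarseness, and this is the subtle final move. By specification, after~\ref{step: HGFAJ 4b} the block $\bbB$ is the coarsest refinement of its old value that is $d_i$-cut stable w.r.t.\ $\bbP$ and whose representation tree uses \emph{no crisp blocks} (analogously for~\ref{step: HGFAJ 7} with stability w.r.t.\ $X$ and $Y\setminus X$). Since $\bbB_0$ itself carries crisp leaves, I would pass to the block $\bbB_0^s$ obtained from $\bbB_0$ by relabelling every crisp leaf as a simple block: because $d$-cut partitions, coarseness and $d$-cut stability all depend only on the $d$-cut partitions and ignore the crisp/simple distinction, $\bbB_0^s$ has the same $d$-cut partitions as $\bbB_0$, is a refinement of the old $\bbB$ by Invariant~3, is $d_i$-cut stable w.r.t.\ the relevant sets by the previous paragraph, and uses no crisp blocks. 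Thus $\bbB_0^s$ is among the refinements over which ``coarsest'' ranges, so the new $\bbB$ is coarser than $\bbB_0^s$ and hence coarser than $\bbB_0$, re-establishing Invariant~3; the statements that do not modify $\bbB$ and the increment of $i$ preserve it trivially. The well-definedness of ``the coarsest refinement'' implicit in this argument is supplied by Lemma~\ref{lemma: HDKJA2}, exactly as in the proof of Proposition~\ref{prop: JHDKW}.
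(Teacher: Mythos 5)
Your proof is correct and follows essentially the same route as the paper's: Invariant~2 via Corollary~\ref{cor: HGFJS}, Invariant~1 from the fact that the refine procedures only split leaves while statement~\ref{step: HGFAJ 8} splits a $\bbP$-component along a former leaf component, and Invariant~3 by combining Lemma~\ref{lemma: JHFKS}, Corollary~\ref{cor: HGDJA} and Lemma~\ref{lemma: FKWMN} with the coarsest-refinement specifications of $\refineA$ and $\refineB$. Your only additions are explicitness the paper leaves implicit --- notably the relabeling of $\bbB_0$'s crisp leaves as simple blocks so that it qualifies as a crisp-block-free refinement, which is a fair (and welcome) elaboration rather than a different argument.
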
 

\begin{proof}
Observe that, during the run of Algorithm~\ref{algCompFP}, $\bbB$ is refined by splitting leaves of the representation tree and converting some of them to inner nodes. Clearly, the first invariant holds before the ``foreach'' loop. During the loop, $\bbP$ is modified by replacing a block $Y \in \bbP$ with blocks $X$ and $Y \setminus X$, where $X$ is a component of the leaf partition of $\bbB$ and $X \subset Y$. Therefore, $\bbP$ is always coarser than the leaf partition of~$\bbB$. 

By Corollary~\ref{cor: HGFJS}, during the ``foreach'' loop, each inner node of the representation tree of $\bbB$ is a fuzzy block $B$ such that $B.\degree \leq d_i$. Therefore, the second invariant holds. 

Consider the third invariant. Let $\bbB'$ (respectively, $\bbB''$) be $\bbB$ before (respectively, after) executing one of the involved calls of $\refineA$ or $\refineB$. As the induction assumption, $\bbB'$ is coarser than $\bbB_0$, and for the induction hypothesis, we need to prove that $\bbB''$ is also coarser than~$\bbB_0$.  
\begin{itemize}
\item Consider the call of $\refineA(\bbB, d_i, \bbP)$ at the statement~\ref{step: HGFAJ 4b}. By Lemma~\ref{lemma: JHFKS}, $\bbB_0$ is $d_i$-cut stable. That is, the $d_i$-cut partition of $\bbB_0$ is $d_i$-cut stable w.r.t.\ itself. Since $\bbB'$ is coarser than $\bbB_0$, by Corollary~\ref{cor: HGDJA}, the $d_i$-cut partition of $\bbB_0$ is $d_i$-cut stable w.r.t.\ the $d_i$-cut partition of $\bbB'$. By the first two invariants and by Corollary~\ref{cor: HGDJA} once again, it follows that the $d_i$-cut partition of $\bbB_0$ is $d_i$-cut stable w.r.t.~$\bbP$. Since $\bbB'$ is coarser than $\bbB_0$, by the specification of $\refineA$, $\bbB''$ is also coarser than $\bbB_0$. 

\item Consider the call of $\refineB(\bbB, d_i, X, Y)$ at the statement~\ref{step: HGFAJ 7}. 
By Lemma~\ref{lemma: JHFKS}, $\bbB_0$ is $d_i$-cut stable. That is, the $d_i$-cut partition of $\bbB_0$ is $d_i$-cut stable w.r.t.\ itself. Since $\bbB'$ is coarser than $\bbB_0$, by Corollary~\ref{cor: HGDJA}, the $d_i$-cut partition of $\bbB_0$ is $d_i$-cut stable w.r.t.\ the $d_i$-cut partition of $\bbB'$. By the first two invariants and by Lemma~\ref{lemma: FKWMN}, it follows that the $d_i$-cut partition of $\bbB_0$ is $d_i$-cut stable w.r.t.\ both $X$ and $Y \setminus X$. Since $\bbB'$ is coarser than $\bbB_0$, by the specification of $\refineB$, $\bbB''$ is also coarser than $\bbB_0$.
\end{itemize}
We have shown that the third invariant holds. This completes the proof. 
\myend
\end{proof}

\begin{lemma}\label{lemma: DJSJA}
The following assertions hold:
\begin{enumerate}
\item It is an invariant of the ``while'' loop of Algorithm~\ref{algCompFP} that $\bbB$ is $d_i$-cut stable w.r.t.~$\bbP$. 
\item After executing the ``while'' loop in Algorithm~\ref{algCompFP}, $\bbB$ is $d_i$-cut stable. 
\item It is an invariant of the ``foreach'' loop of Algorithm~\ref{algCompFP} that $\bbB$ is $d_j$-cut stable for all $0 \leq j < i$. 
\end{enumerate}
\end{lemma}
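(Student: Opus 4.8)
The three assertions are best proved in the order listed, since the second is a one-line consequence of the first and the third relies on the second. The fact I would isolate at the outset and reuse throughout is that $d$-cut stability with respect to a \emph{fixed} set is inherited by subsets: if $X$ is $d$-cut stable w.r.t.\ $Y$ and $X' \subseteq X$, then $X'$ is $d$-cut stable w.r.t.\ $Y$, because the two defining biconditionals only quantify over pairs $x,x'$ taken from the set in question. Since $\refineA$ and $\refineB$ always replace $\bbB$ by a refinement of itself (their specifications produce the \emph{coarsest refinement} with a given property), every $d$-cut partition of $\bbB$ can only get finer under a call; combined with the inheritance fact, this shows that $d$-cut stability of $\bbB$ with respect to any fixed set $Y$ survives any such call.

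For the first assertion I would check the loop invariant \emph{``$\bbB$ is $d_i$-cut stable w.r.t.\ $\bbP$''} by establishment and maintenance. Establishment is exactly the specification of $\refineA(\bbB,d_i,\bbP)$ (Proposition~\ref{prop: JHDKW}), whose call immediately precedes the loop. For maintenance, fix one iteration with chosen sets $X \subset Y$. The call $\refineB(\bbB,d_i,X,Y)$ (Proposition~\ref{prop: JHDKW2}) makes $\bbB$ $d_i$-cut stable w.r.t.\ both $X$ and $Y\setminus X$; for each remaining $Y' \in \bbP$ with $Y' \neq Y$, the induction hypothesis gives $d_i$-cut stability before the call, which is preserved by the call according to the observation above. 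Since the subsequent update replaces $Y$ in $\bbP$ by precisely $X$ and $Y \setminus X$, the new $\bbB$ is $d_i$-cut stable with respect to every member of the new $\bbP$, i.e.\ w.r.t.\ $\bbP$, closing the induction.

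The second assertion then follows at once: the while loop terminates, because each pass strictly refines the finite partition $\bbP$ while keeping it coarser than the leaf partition, and on exit $\bbP$ equals the leaf partition of $\bbB$, which by the second invariant of Lemma~\ref{lemma: HDJSJ} equals the $d_i$-cut partition of $\bbB$. The first assertion now says that the $d_i$-cut partition of $\bbB$ is $d_i$-cut stable with respect to itself, which is by definition the statement that $\bbB$ is $d_i$-cut stable. For the third assertion I would induct on $i$; the base case $i=0$ is vacuous. In the inductive step I must show that iteration $i$ keeps $\bbB$ $d_j$-cut stable for every $j<i$, while the second assertion supplies $d_i$-cut stability, together yielding the invariant for $i+1$ (recall $d_j<d_i$ whenever $j<i$).

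The delicate point, and the one I expect to be the main obstacle, is the preservation of $d_j$-cut stability for $j<i$ across iteration $i$. Here one must resist appealing to any ``refinement preserves stability'' slogan, which is false for stability-with-respect-to-\emph{itself}, as refining both sides can destroy it. Instead I would prove the sharper claim that, for every $d<d_i$, the \emph{entire} $d$-cut partition of $\bbB$ is left unchanged by the calls made in iteration $i$. Since the output of $\refineA$ and $\refineB$ is uniquely determined (Lemma~\ref{lemma: HDKJA2}), it suffices to inspect the straightforward implementations: each elementary modification is confined to a single leaf $B$ with element set $X$, which is either replaced by a fuzzy block $B'$ of degree $d_i$ with $B'.\elements()=X$, or split into new simple subblocks inside a parent $B_p$ of degree $d_i$ (Corollary~\ref{cor: HGFJS} ensures no larger degree is introduced). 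In both situations the affected material lies inside a block of degree $d_i$ whose element set is unchanged, namely $B'$, respectively $B_p$. Since $d<d_i$, the recursion defining the $d$-cut partition halts at that block, or at an ancestor of it, and therefore never sees the change, so it contributes exactly the same sets as before. Hence for every $d<d_i$, and in particular for $d=d_j$ with $j<i$, the $d$-cut partition is identical before and after iteration $i$, so $d_j$-cut stability, a property depending only on that fixed partition and on the fixed graph $G$, is preserved. This, with the second assertion, closes the induction.
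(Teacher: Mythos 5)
Your proof is correct and takes essentially the same route as the paper's: assertion~1 by establishment via the specification of $\refineA$ and maintenance via the specification of $\refineB$ together with the update of $\bbP$, assertion~2 by combining assertion~1 with the second invariant of Lemma~\ref{lemma: HDJSJ}, and assertion~3 by observing that all splits performed in iteration $i$ occur at degree $d_i$, so every $d$-cut partition with $d < d_i$ is left unchanged. The two points you spell out that the paper leaves implicit --- the inheritance of $d$-cut stability w.r.t.\ a fixed set under passing to subsets, and the warning that stability w.r.t.\ itself is not preserved by mere refinement, so one must argue the lower cut partitions are literally unchanged --- are accurate refinements of the same argument rather than a different approach.
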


\begin{proof}
Consider the first assertion. The invariant holds at each moment immediately before executing the ``while'' loop, because such a moment occurs after executing $\refineA(\bbB, d_i, \bbP)$. The invariant holds after each iteration of the ``while'' loop due to the specification of $\refineB$ and the modification of $\bbP$.

Consider the second assertion. After executing the ``while'' loop in Algorithm~\ref{algCompFP}, we have that: $\bbP$ is equal to the leaf partition of $\bbB$, and by the second assertion of Lemma~\ref{lemma: HDJSJ}, $\bbP$ is equal to the $d_i$-cut partition of $\bbB$; hence, by the first assertion of the current lemma, $\bbB$ is $d_i$-cut stable.

Consider the third assertion. By Propositions~\ref{prop: JHDKW} and~\ref{prop: JHDKW2} and Lemma~\ref{lemma: HDKJA2}, the procedures $\refineA(\bbB, d, \bbP)$ and $\refineB(\bbB, d, X, Y)$ refine $\bbB$ by splitting leaves of the representation tree of $\bbB$, which are simple blocks, and convert some of them to fuzzy blocks with the $\degree$ attribute set to $d$. As the ``foreach'' loop of Algorithm~\ref{algCompFP} considers $d_i$ in the increasing order, once $\bbB$ becomes $d_i$-cut stable (as mentioned in the second assertion), it remains $d_i$-cut stable until the end of the algorithm. Therefore, it is an invariant of the ``foreach'' loop that $\bbB$ is $d_j$-cut stable for all $0 \leq j < i$. 
\myend
\end{proof}

\begin{theorem}\label{theorem: correctness 1}
Algorithm~\ref{algCompFP} is correct. That is, it is a correct algorithm for computing the fuzzy partition corresponding to the greatest fuzzy bisimulation of a given finite fuzzy graph.
\end{theorem}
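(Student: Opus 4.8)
The plan is to show that the block $\bbB$ returned by Algorithm~\ref{algCompFP} is \emph{equal} to $\bbB_0$, the fuzzy partition corresponding to the greatest fuzzy bisimulation $Z_0$ of $G$; note that $Z_0$, and hence $\bbB_0$, exists because a finite fuzzy graph is image-finite. Writing $Z$ for the fuzzy equivalence relation corresponding to the returned $\bbB$, I would sandwich $Z$ with $Z_0$ from both sides: first argue that $Z$ is itself a fuzzy bisimulation of $G$, which forces $Z \leq Z_0$ by maximality of $Z_0$; then use the invariant of Lemma~\ref{lemma: HDJSJ} that $\bbB$ stays coarser than $\bbB_0$, which forces $Z \geq Z_0$. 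Equality $Z = Z_0$ will then yield $\bbB = \bbB_0$.

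The crux is to verify that $\bbB$ is $d$-cut stable for \emph{all} $d \in [0,1)$, not merely for the finitely many values $d_0,\ldots,d_{k-1}$ treated by the ``foreach'' loop, so that Lemma~\ref{lemma: JHFKS} applies. First, from Lemma~\ref{lemma: DJSJA} (second and third assertions) I get that, once the loop terminates, $\bbB$ is $d_i$-cut stable for every $i \in \{0,\ldots,k-1\}$. To lift this to an arbitrary $d$, I would use two facts: (i) by Corollary~\ref{cor: HGFJS} and the way $\refineA$ and $\refineB$ assign degrees, every inner-node degree of $\bbB$ lies in $\{d_0,\ldots,d_{k-1}\}$; and (ii) every fuzzy value $L(x)(p)$ or $E(x,r,y)$ occurring in $G$ lies in $\{d_0,\ldots,d_k\}$. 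Consequently, for $d$ in a half-open interval $[d_i,d_{i+1})$ with $0 \leq i \leq k-1$, thresholding at $d$ coincides with thresholding at $d_i$: both the $d$-cut partition of $\bbB$ and the two defining inequalities of $d$-cut stability agree with their $d_i$-counterparts, so $d$-cut stability reduces to $d_i$-cut stability. For $d \in [d_k,1)$ (possible only when $d_k < 1$) every relevant fuzzy value is $\leq d_k \leq d$, so the two inequalities of $d$-cut stability are vacuously satisfied and $\bbB$ is trivially $d$-cut stable. Thus $\bbB$ is $d$-cut stable for all $d \in [0,1)$. Since the final conversion of simple leaves to crisp leaves leaves every $d$-cut partition unchanged, this property, and the ``coarser than $\bbB_0$'' invariant, both survive into the genuine fuzzy partition that is returned.

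With stability in hand, I would finish as follows. By Lemma~\ref{lemma: JHFKS}, $Z$ is a fuzzy bisimulation of $G$, hence $Z \leq Z_0$. By the third invariant of Lemma~\ref{lemma: HDJSJ} the returned $\bbB$ is coarser than $\bbB_0$, so by the proposition stating that $B$ is coarser than $B'$ iff the corresponding relations satisfy $f \geq f'$, we obtain $Z \geq Z_0$. Therefore $Z = Z_0$, which gives that $\bbB$ is coarser than $\bbB_0$ and vice versa; by Lemma~\ref{lemma: HDKJA} (both being fuzzy partitions with crisp leaves) we conclude $\bbB = \bbB_0$. Equivalently, one can invoke Proposition~\ref{prop: DJKHS}: $\bbB$ and $\bbB_0$ are both fuzzy partitions corresponding to the single relation $Z = Z_0$, and that correspondence is a bijection.

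I expect the main obstacle to be exactly the interval argument of the second paragraph: the loop enforces stability only at the finitely many sampled degrees $d_0,\ldots,d_{k-1}$, and one must argue carefully --- using that all block degrees and all graph values are drawn from the common finite scale $\{d_0,\ldots,d_k\}$ --- that no ``new'' instability can appear at an intermediate threshold, including the top interval $[d_k,1)$ whose endpoint $d_k$ is never processed by the loop. The remaining steps are direct applications of the already-established lemmas and propositions.
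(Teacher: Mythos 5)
Your proposal is correct and follows essentially the same route as the paper's proof: extend $d_i$-cut stability (Lemma~\ref{lemma: DJSJA}) to all $d \in [0,1)$ via the intervals $[d_i,d_{i+1})$ and $[d_k,1)$, apply Lemma~\ref{lemma: JHFKS} to get that the corresponding relation is a fuzzy bisimulation, and combine maximality with the third invariant of Lemma~\ref{lemma: HDJSJ} and Proposition~\ref{prop: DJKHS} to conclude $\bbB = \bbB_0$. You merely spell out in more detail the interval argument and the sandwich $Z \leq Z_0 \leq Z$, which the paper states in a single line each.
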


\begin{proof}
Let $\bbB$ be the fuzzy partition returned by Algorithm~\ref{algCompFP}. Recall that $d_0 = 0$ and $d_1 < d_2 < \ldots < d_k$ are all the non-zero fuzzy values used for $G$. By Propositions~\ref{prop: JHDKW} and~\ref{prop: JHDKW2} and Lemma~\ref{lemma: HDKJA2}, for each fuzzy block $B$ occurring in the representation tree of $\bbB$, $B.\degree \in \{d_0,\ldots,d_{k-1}\}$. 
By Lemma~\ref{lemma: DJSJA}, $\bbB$ is $d_i$-cut stable for all $0 \leq i < k$. 
For $0 \leq i < k$ and $d \in [d_i,d_{i+1})$, $\bbB$ is $d$-cut stable because it is $d_i$-cut stable. 
By definition, $\bbB$ is $d$-cut stable for all $d \in [d_k,1)$.
Therefore, $\bbB$ is $d$-cut stable for all $d \in [0,1)$. 
By Lemma~\ref{lemma: JHFKS}, it follows that the fuzzy equivalence relation corresponding to $\bbB$ is a fuzzy bisimulation of $G$. By Proposition~\ref{prop: DJKHS} and the third assertion of Lemma~\ref{lemma: HDJSJ}, it follows that $\bbB$ is the fuzzy partition corresponding to the greatest fuzzy bisimulation of~$G$.
\myend
\end{proof}

%===============================================================================

\newcommand{\FBlock}{\mathit{FBlock}}
\newcommand{\SCBlock}{\mathit{SCBlock}}
\newcommand{\pComponent}{\mathit{pComponent}}
\newcommand{\PComponent}{\mathit{PComponent}}
\newcommand{\PComponentList}{\mathit{PComponentList}}
\newcommand{\PPartition}{\mathit{PPartition}}
\newcommand{\vertexLabel}{\mathit{vertexLabel}}
\newcommand{\vLabel}{\mathit{label}}
\newcommand{\vLabelB}{\mathit{label}_2}
\newcommand{\vLabelC}{\mathit{label}_3}
\newcommand{\vLabelD}{\mathit{label}_4}
\newcommand{\vLabelE}{\mathit{label}_5}
\newcommand{\crisp}{\mathit{crisp}}
\newcommand{\Set}{\mathit{Set}}
\newcommand{\SCBlockList}{\mathit{SCBlockList}}
\newcommand{\scBlocks}{\mathit{scBlocks}}
\newcommand{\compoundComponents}{\mathit{compoundComponents}}
\newcommand{\simpleComponents}{\mathit{simpleComponents}}

\newcommand{\aboutVertex}{\mathit{aboutVertex}}
\newcommand{\vertex}{\mathit{vertex}}
\newcommand{\edge}{\mathit{edge}}

\section{Implementation Details and Complexity Analysis}
\label{sec: impl}

In this section, we show how to implement Algorithm~\ref{algCompFP} so that its complexity is of order $O((m\log{l} + n)\log{n})$, where $n$, $m$ and $l$ are the number of vertices, the number of non-zero edges and the number of different fuzzy degrees
of edges of the input graph $G$, respectively. 
To facilitate a full understanding of the implementation and its complexity analysis, we use the object-oriented approach and describe the data structures in detail.

\subsection{Data Structures}
\label{sec: DS}

We describe how to get an efficient implementation of Algorithm~\ref{algCompFP} by using a number of classes. In the description given below, we refer to the input graph $G = \tuple{V, E, L, \SV, \SE}$ and the variables $\bbB$ and $\bbP$ used in the algorithm. The classes are listed below:

\begin{itemize}
\item $\Vertex$: the type for the vertices of $G$; 
\item $\Edge$: the type for the edges of $G$; 
\item $\Block$: the type for blocks, which are nodes of the representation tree of $\bbB$;
\item $\FBlock$: the type for fuzzy blocks, which is a subtype of $\Block$;
\item $\SCBlock$: the type for simple or crisp blocks, which is a subtype of $\Block$;
\item $\PComponent$: the type for the components of the partition $\bbP$;
\item $\PPartition$: the type for the partition $\bbP$;
\item $\PComponentEdge$: the type for objects specifying information about edges connecting a vertex to (vertices belonging to) a component of the partition $\bbP$;
\item $\LabelDegree$: the type for objects representing the degree of a vertex label or an edge together with information about that vertex label or edge;
\item $\VertexList$, $\BlockList$, $\SCBlockList$ and $\PComponentList$: the types for doubly linked lists of elements of the type $\Vertex$, $\Block$, $\SCBlock$ or $\PComponent$, respectively; 
\item $\EdgeList$: the type for lists of elements of the type $\Edge$. 
\end{itemize}

We call objects of the type $\FBlock$, $\SCBlock$, $\PComponent$ or $\PComponentEdge$ fuzzy blocks, simple or crisp blocks, $\bbP$-components and $\bbP$-component-edges, respectively. We give below details for nontrivial classes in the above list. As in the Java language, attributes of objects are primitive values or references. 

\bigskip

\noindent
{\bf Vertex}. This class has the following instance attributes. 
\begin{itemize}
\item $\vertexID$ is the ID of the vertex (a natural number or a string).
\item $\vLabel: \SV \to (0,1]$ is a map such that, for $p \in \SV$, $\vLabel[p]$ means the degree in which $p$ is a member of the label of the vertex. 
\item \mbox{$\vertexBlock: \SCBlock$} is a reference to a leaf of the representation tree of $\bbB$. 
\item \mbox{$\nekst: \Vertex$ and $\prev: \Vertex$} are the next vertex and the previous vertex in the doubly linked list that contains the current vertex.
\item \mbox{$\comingEdges: \EdgeList$} is the list of edges coming to the vertex. 
\item \mbox{$\processed: \bool$} is an auxiliary flag for internal processing. 
\item \mbox{$\vLabelB: \Set(\SV)$} is a subset of $\SV$, used as an auxiliary attribute for implementing the procedure $\refineA$.
\item \mbox{$\vLabelC: \Set(\SE \times \PComponent)$} is a set of pairs of type $\SE \times \PComponent$, used as an auxiliary attribute for implementing the procedure $\refineA$.
\item \mbox{$\vLabelD: \Set(\SE)$} and \mbox{$\vLabelE: \Set(\SE)$} are subsets of $\SE$, used as auxiliary attributes for implementing the procedure $\refineB$.
\end{itemize}

The constructor $\Vertex(id')$ sets $\vertexID$ to $id'$, $\vLabel$ to an empty map, $\vertexBlock$, $\nekst$ and $\prev$ to $\Null$, $\comingEdges$ to a newly created empty list, $\processed$ to $\false$, $\vLabelB$, $\vLabelC$, $\vLabelD$ and $\vLabelE$ to newly created empty sets. The class has a method $\addLabel(p,d)$, which sets $\vLabel[p]$ to~$d$. It also has a static method $\getVertex(id)$ that returns the vertex with the given ID. It uses a class attribute to store the collection of the vertices that have been created so far. 

\bigskip

\noindent
{\bf Edge}. This class has the following instance attributes. 
\begin{itemize}
\item $\edgeLabel: \SE$
\item $\edgeOrigin: \Vertex$
\item $\edgeDest: \Vertex$
\item \mbox{$\degree: (0,1]$} is the value of $E(x,r,y)$, where $x$, $r$ and $y$ are the $\edgeOrigin$, $\edgeLabel$ and $\edgeDest$ of the edge, respectively. 
\item \mbox{$\pcEdge: \PComponentEdge$} specifies information about the set of edges labeled by $r$ from $x$ to the vertices of $Y$, where $r$ and $x$ are the $\edgeLabel$ and $\edgeOrigin$ of the current edge, respectively, and $Y$ is the $\bbP$-component that contains the $\edgeDest$ of the current edge. 
\end{itemize}

The constructor $\Edge(r,x,y,d,pce)$ sets the above listed attributes to the parameters, respectively, and then adds the current edge to the list $\edgeDest.\comingEdges$. 

\bigskip

\noindent
{\bf Block}. This is an abstract class. 
It has the following instance attributes. 
\begin{itemize}
\item \mbox{$\parent: Block$} is a reference to the parent node in the representation tree of $\bbB$. 
\item \mbox{$\nekst: \Block$} and \mbox{$\prev: \Block$} are the next block and the previous block in the doubly linked list of the type $\BlockList$ that contains the current block.
\end{itemize}

\bigskip

\noindent
{\bf FBlock}. This is a subclass of the class $\Block$. Recall that it is the type for fuzzy blocks. It has the following instance attributes. 
\begin{itemize}
\item \mbox{$\degree: [0,1)$} is the degree of the fuzzy block.
\item \mbox{$\subblocks: \BlockList$} is a doubly linked list of the subblocks of the fuzzy block. 
\end{itemize}

The constructor $\FBlock(\parent',\degree')$ sets $\parent$ to $\parent'$, $\degree$ to $\degree'$, $\subblocks$ to a newly created empty list, and $\nekst$ and $\prev$ to $\Null$. If $\parent' \neq \Null$, then it also adds the current block ($this$) to $\parent'.\subblocks$. 

\bigskip

\noindent
{\bf SCBlock}. This is a subclass of the class $\Block$. Recall that it is the type for simple or crisp blocks. It has the following instance attributes. 
\begin{itemize}
\item \mbox{$\crisp: \bool$} specifies whether the block is crisp or simple.
\item \mbox{$\elements: \VertexList$} is a doubly linked list of the elements of the block. 
\item \mbox{$\pComponent: \PComponent$} is a reference to the $\bbP$-component that is a superset of the current block, when treating them as sets of vertices. 
\item \mbox{$\nekst_2: \SCBlock$} and \mbox{$\prev_2: \SCBlock$} are the next block and the previous block in the doubly linked list of the type $\SCBlockList$ that contains the current block.
\item \mbox{$\departingBlocks_1: \Set(\SV) \times \Set(\SE \times \PComponent) \to \VertexList$} is an auxiliary attribute used for implementing the procedure $\refineA$. 
\item \mbox{$\departingBlocks_2: \Set(\SE) \times \Set(\SE) \to \VertexList$} is an auxiliary attribute used for implementing the procedure $\refineB$.
\end{itemize}

\begin{algorithm}[t]
	\SetAlgoVlined
	\SetKwProg{constructor}{Constructor}{:}{}
	
	\constructor{$\SCBlock(\parent'$, $\elements'$, $\pComponent')$\label{contrSCBlock}}{
		$\crisp := \false$\; 
		$\parent := \parent'$, $\elements := \elements'$, $\pComponent := \pComponent'$\; 
		set $\nekst_2$ and $\prev_2$ to $\Null$\;
		set $\departingBlocks_1$ and $\departingBlocks_2$ to newly created empty maps\;
		\uIf{$\parent = \Null$}{$\nekst := \Null$, $\prev := \Null$\;}
		\Else{add $this$ to $\parent.\subblocks$\tcp*{setting $\nekst$ and $\prev$ appropriately}} 
		\ForEach{$x \in \elements$}{$x.\vertexBlock := this$\;}
		\If{$\pComponent \neq \Null$}{$\pComponent.\addBlock(this)$\;}
	} %\constructor
\end{algorithm}

The constructor $\SCBlock(\parent'$, $\elements'$, $\pComponent')$ is presented on page~\pageref{contrSCBlock}.

\bigskip

\noindent
{\bf PComponent}. Recall that this class is the type for the components of the partition $\bbP$. It has the following instance attributes. 
\begin{itemize}
	\item \mbox{$\scBlocks: \SCBlockList$} is a list of blocks that are leaves of the representation tree of $\bbB$, which form a partition of the component (when treating the blocks and the component as sets of vertices).  
	\item \mbox{$\pPartition: \PPartition$} is a reference to $\bbP$. 
	\item \mbox{$\nekst: \PComponent$ and $\prev: \PComponent$} are the next component and the previous component in the doubly linked list that contains the current component.
\end{itemize}

The constructor $\PComponent(\pPartition')$ sets $\pPartition$ to $\pPartition'$, $\scBlocks$ to a newly created empty list, $\nekst$ and $\prev$ to $\Null$, and adds the current component to $\pPartition$ by calling $\pPartition.\addPComponent(this)$. The class $\PComponent$ has the following methods.
\begin{itemize}
\item $\Size()$ is the method that returns the number of blocks in the list $\scBlocks$. 

\item $\compound()$ is the method that returns the truth of $\Size() > 1$. A $\bbP$-component, as an object of type $\PComponent$, is said to be {\em compound} (respectively, {\em simple}) if its method $\compound()$ returns $\true$ (respectively, $\false$).
	
\item $\smallerBlock()$ is a method that can be called only when the current component is compound. It compares the first two blocks of the current component and returns the smaller one (or any one when their sizes are equal).

\item $\addBlock(b)$ is the method that adds the simple block $b$ to the list $\scBlocks$. If the addition causes that $\Size() = 2$, then the method also moves the current component from $\pPartition.\simpleComponents$ to $\pPartition.\compoundComponents$. 

\item $\removeBlock(b)$ is the method that removes the simple block $b$ from the list $\scBlocks$. If the removal causes that $\Size() = 1$, then the method also moves the current component from $\pPartition.\compoundComponents$ to $\pPartition.\simpleComponents$. 

\item $\createPComponent(\pPartition',b)$ is a static method that executes the statements $pc := \New \PComponent(\pPartition')$, $pc.\addBlock(b)$, and $b.\pComponent := pc$.
\end{itemize}

\bigskip

\noindent
{\bf PPartition}. Recall that this class is the type for the partition $\bbP$. It has the following instance attributes. 
\begin{itemize}
	\item \mbox{$\compoundComponents: \PComponentList$} is a doubly linked list consisting of compound components of $\bbP$. 
	\item \mbox{$\simpleComponents: \PComponentList$} is a doubly linked list consisting of simple components of $\bbP$. 
\end{itemize}

The constructor $\PPartition()$ initializes the above mentioned attributes to newly created empty lists. The class has the method $\addPComponent(pc)$, which adds the $\bbP$-component $pc$ to the list $\compoundComponents$ or $\simpleComponents$ depending on whether $pc$ is compound or not.

\bigskip

\noindent
{\bf PComponentEdge}. Recall that this class is the type for objects specifying information about edges connecting a vertex $x$ to (vertices belonging to) a component $Y$ of the partition $\bbP$. It has the following instance attributes. 
\begin{itemize}
\item $\counter$ is a map of type $\SE \to ((0,1] \to \NN)$. If $r \in \SE$ is a key of $\counter$ and $d$ is a key of the map $\counter[r]$, then $\counter[r][d]$ is the number of vertices $y$ of $Y$ such that $E(x, r, y) = d$.
\item \mbox{$\departingPComponentEdge: \PComponentEdge$}. When the component $Y$ of $\bbP$ is going to be replaced by $X$ and $Y \setminus X$, the current $\bbP$-component-edge changes to a $\bbP$-component-edge with the destination $Y \setminus X$, a new $\bbP$-component-edge with the destination $X$ is created, and the attribute $\departingPComponentEdge$
of the current $\bbP$-component-edge is set to that new $\bbP$-component-edge.

\item \mbox{$\sourcePComponentEdge: \PComponentEdge$}. This attribute is a converse of $\departingPComponentEdge$. That is, the current $\bbP$-component-edge is equal to the attribute $\departingPComponentEdge$ of the object $\sourcePComponentEdge$ if they are set.
\end{itemize}
The class $\PComponentEdge$ also has the following methods.
\begin{itemize}
\item $\pushKey(r,d)$: This method increases the value of $\counter[r][d]$ by 1. (If $r$ is not a key of $\counter$, then $\counter[r]$ is first set to an empty map of the type $(0,1] \to \NN$. Next, if $d$ is not a key of $\counter[r]$, then $\counter[r][d]$ is set to 1, without a further increment.)
\item $\popKey(r,d)$: This method decreases the value of $\counter[r][d]$ by 1, under the assumption $r$ is a key of $\counter$ and $d$ is a key of $\counter[r]$. If $\counter[r][d]$ becomes 0, then the key $d$ is deleted from
the map $\counter[r]$. If $\counter[r]$ becomes an empty map, then the key $r$ is deleted from the map $\counter$.

\item $\maxKey(r)$: This method returns the biggest key of the map $\counter[r]$ if $r$ is a key of the map $\counter$ and the map $\counter[r]$ is not empty, and returns 0 otherwise.
\end{itemize}

The default constructor $\PComponentEdge()$ sets $\counter$ to a newly created empty map and sets the additional attributes to $\Null$. The constructor $\PComponentEdge(spce)$ differs from the default in that it also sets $\sourcePComponentEdge$ to $spce$. 

\bigskip

\noindent
{\bf LabelDegree}. Recall that this class is the type for objects representing the degree of a vertex label or an edge together with information about that vertex label or edge. We intend to use a vector of objects of this type in order to process all vertices' labels and edges in the ascending order w.r.t.\ their degrees. The class has the following instance attributes. 
\begin{itemize}
\item $\degree: (0,1]$ is the fuzzy degree of the involved vertex label or edge.
\item $\aboutVertex: \bool$ specifies whether the current object is about a vertex label or an edge.
\item $\vertexLabel: \SV$ is available only when $\aboutVertex = \true$.
\item $\vertex: \Vertex$ is available only when $\aboutVertex = \true$.
\item $\edge: \Edge$  is available only when $\aboutVertex = \false$.
\end{itemize}

The constructor $\LabelDegree(v,p)$ sets $\vertex$ and $\vertexLabel$ to $v$ and $p$, respectively, sets $\aboutVertex$ to $\true$, $\degree$ to $v.\vLabel[p]$, and $\edge$ to $\Null$. 
The constructor $\LabelDegree(e)$ sets $\edge$ to $e$, $\aboutVertex$ to $\false$, $\degree$ to $e.\degree$, $\vertex$ and $\vertexLabel$ to $\Null$. 

%-------------------------------------------------------------------------------

\subsection{Initialization}
\label{sec: init}

Our revision of Algorithm~\ref{algCompFP} uses the procedure $\Initialize()$ (on page~\pageref{procInitialize}), which sets up the global variables $\bbP$, $\bbB$, $\vertices$, $\labelDegrees$, $\labelDegreesIdx$ and $\allDegrees$. The vector $\labelDegrees$ contains objects of the type $\LabelDegree$ about all vertices' labels and edges in ascending order w.r.t.\ their degrees. We use the variable $\labelDegreesIdx$, which is initialized to $0$, to keep track of the beginning of the remaining part of the vector $\labelDegrees$ to be processed. The vector $\allDegrees$ corresponds to the sequence $d_0,d_1,\ldots,d_k$ of values used in Algorithm~\ref{algCompFP}. 

\begin{procedure}[t]
\caption{Initialize()\label{procInitialize}} 
	
construct a vector \mbox{$\vertices: \Vector(\Vertex)$} and a doubly linked list \mbox{$\vertices_2: \VertexList$} that contain all the vertices of $G$\label{step: UDWLS 1}\; 
construct a vector \mbox{$\edges: \Vector(\Edge)$} that contains all the edges of $G$ by calling the static method $\Vertex.\getVertex(id)$ and the constructor $\Edge(r,x,y,d,\Null)$ appropriately (this also sets up the lists of coming edges for the vertices)\label{step: UDWLS 2}\;

\BlankLine

$\bbP := \New  \PPartition()$\label{step: UDWLS 3}\;
$pc := \New \PComponent(\bbP)$\;
$\bbB := \New \SCBlock(\Null,\vertices_2, pc)$\label{step: UDWLS 4}\;

\BlankLine

create an empty map $\pcEdges: \Vertex \to \PComponentEdge$\label{step: UDWLS 5}\; 
\ForEach{$x \in \vertices$\label{step: UDWLS 6}}{
	$\pcEdges[x] := \New \PComponentEdge()$\label{step: UDWLS 7}\; 
}
\ForEach{$e \in \edges$\label{step: UDWLS 8}}{
	$e.\pcEdge := \pcEdges[e.\edgeOrigin]$\label{step: UDWLS 9}\;
	$e.\pcEdge.\pushKey(e.\edgeLabel,e.\degree)$\label{step: UDWLS 10}\;
}

\BlankLine

create an empty vector $\labelDegrees:\Vector(\LabelDegree)$\label{step: UDWLS 11}\;
\ForEach{$x \in \vertices$ and $p \in x.\vLabel.\keys()$\label{step: UDWLS 12}}{
	$\labelDegrees.\add(\New \LabelDegree(x,p))$\label{step: UDWLS 13}\;
}
\ForEach{$e \in \edges$\label{step: UDWLS 14}}{
	$\labelDegrees.\add(\New \LabelDegree(e))$\label{step: UDWLS 15}\;
}
sort $\labelDegrees$ by the attribute $\degree$ in ascending order\label{step: UDWLS 16}\;
$\labelDegreesIdx := 0$\label{step: UDWLS 17}\tcp*{the current index of the vector $\labelDegrees$}
create a vector $\allDegrees$ that contains 0 as the first element and all distinct values of the attribute $\degree$ of the elements of $\labelDegrees$ in ascending order\label{step: UDWLS 18}\;
\end{procedure}

Let's analyze the complexity of the procedure $\Initialize()$. 
Recall that the sizes of $\SE$ and $\SV$ are assumed to be bounded by a constant. 
The time needed for running the steps is as follows: 
\ref{step: UDWLS 1}:~$O(n\log{n})$;  
\ref{step: UDWLS 2}:~$O(m\log{n})$;  
\mbox{\ref{step: UDWLS 3}-\ref{step: UDWLS 5}}:~$O(1)$;   
\mbox{\ref{step: UDWLS 6}-\ref{step: UDWLS 7}}:~$O(n \log{n})$;   
\mbox{\ref{step: UDWLS 8}-\ref{step: UDWLS 10}}:~$O(m \log{n})$;   
\mbox{\ref{step: UDWLS 11}-\ref{step: UDWLS 15}}:~$O(m+n)$;   
\ref{step: UDWLS 16}:~$O((m+n) \log{n})$; 
and \mbox{\ref{step: UDWLS 17}-\ref{step: UDWLS 18}}:~$O(m+n)$.   
Thus, the time complexity of the procedure $\Initialize()$ is of order \mbox{$O((m+n)\log{n})$}. 

%===============================================================================
\subsection{The Revised Algorithm}

We revise Algorithm~\ref{algCompFP} to obtain Algorithm~\ref{algCompFPt} (on page~\pageref{algCompFPt}), which uses the classes specified in Section~\ref{sec: DS} and the procedure $\Initialize()$ given in Section~\ref{sec: init}. The revised algorithm and its subroutines use the global variables set up by the initialization. 
The statement~\ref{step: HGFAJ 4b} ($\refineA(\bbB, d_i, \bbP)$) of Algorithm~\ref{algCompFP} is simulated by the statements \ref{step: JHKXS 5}-\ref{step: JHKXS 6} (if $d_i = 0$ then $\RefineAA()$ else $\RefineAB(d_i)$) of Algorithm~\ref{algCompFPt}. 
The statements \ref{step: HGFAJ 7}-\ref{step: HGFAJ 8} ($\refineB(\bbB, d_i, X, Y)$; refine $\bbP$ by replacing $Y$ with $X$ and $Y \setminus X$) of Algorithm~\ref{algCompFP} are simulated by the statement~\ref{step: JHKXS 10} ($\RefineB(d_i, X)$) of Algorithm~\ref{algCompFPt}. 

\begin{algorithm}[t]
\caption{\CompFPt\label{algCompFPt}}
\Input{a finite fuzzy graph $G = \tuple{V, E, L, \SV, \SE}$.}
\Output{the fuzzy partition corresponding to the greatest fuzzy bisimulation of~$G$.}
	
\smallskip

$\Initialize()$\;
$k := \allDegrees.\length - 1$\label{step: JHKXS 2}\;
	
\ForEach{$i$ from 0 to $k-1$}{
	$d_i := \allDegrees[i]$\label{step: JHKXS 4}\;
	
	\BlankLine
	
	\lIf{$d_i = 0$}{$\RefineAA()$\label{step: JHKXS 5}}
	\lElse{$\RefineAB(d_i)$\label{step: JHKXS 6}}
	
	\BlankLine

	\While{not $\bbP.\compoundComponents.\Empty()$}{
		$Y := \bbP.\compoundComponents.\first()$\label{step: JHKXS 8}\;
		$X := Y.\smallerBlock()$\;
		$\RefineB(d_i, X)$\label{step: JHKXS 10}\;
	}	
}

\ForEach{$Y \in \bbP.\simpleComponents$}{
	$Y.\scBlocks.\first().\crisp := \true$\;
}	
\Return $\bbB$.	
\end{algorithm}

%===============================================================================

\begin{procedure}[t]
	\caption{Refine$_{1a}$()\label{proc: RefineAA}}
		\tcp{$\bbB$ is a simple block}
		\ForEach{$ld \in \labelDegrees$\label{RefineAA 1}}{
			\uIf{$ld.\aboutVertex$}{
				$x := ld.\vertex$, $p := ld.\vertexLabel$\;
				$x.\vLabelB.\add(p)$\;
			}
			\Else{
				$e := ld.\edge$, $x := e.\edgeOrigin$, $r := e.\edgeLabel$\;
				$x.\vLabelC.\add((r,\bbB.\pComponent))$\label{RefineAA 7}\;
			}
		}
		\ForEach{$x \in \vertices$\label{RefineAA 8}}{
			$key := (x.\vLabelB,x.\vLabelC)$\;
			\If{$key \notin \bbB.\departingBlocks_1.\keys()$}{set $\bbB.\departingBlocks_1[key]$ to a newly created empty list\;}
			move $x$ from $\bbB.\elements$ to $\bbB.\departingBlocks_1[key]$\label{RefineAA 11}\;
		}
		
		\uIf{$\bbB.\departingBlocks_1.\keys().\length = 1$\label{RefineAA 12}}{
			let $key$ be the unique element of $\bbB.\departingBlocks_1.\keys()$\;
			swap $\bbB.\departingBlocks_1[key]$ and $\bbB.\elements$\label{RefineAA 15}\;
			$\bbB.\departingBlocks_1.\clear()$\label{RefineAA 15b}\;
		}
		\Else{\label{RefineAA 16}
			$pc := \bbB.\pComponent$\;
			$pc.\removeBlock(\bbB)$\;
			$\bbB_2 := \New \FBlock(\Null,0)$\;
			\ForEach{$key \in \bbB.\departingBlocks_1.\keys()$}{
				$\New \SCBlock(\bbB_2, \bbB.\departingBlocks_1[key], pc)$\;
			}
			$\bbB := \bbB_2$\label{RefineAA 22}\;
		}
		\ForEach{$x \in \vertices$\label{RefineAA 23}}{
			$x.\vLabelB.\clear()$\;
			$x.\vLabelC.\clear()$\label{RefineAA 25}\;
		}
\end{procedure}

The procedure $\RefineAA()$, defined on page~\pageref{proc: RefineAA}, deals with the case where $\bbB$ is a simple block and $\bbP$ consists of only one component ($\bbB.\pComponent$). Its aim is to refine $\bbB$ to make it $0$-cut stable w.r.t.~$\bbP$. By definition, $\bbB$ should be split so that, if $x$ and $x'$ are vertices of $G$, then they belong to the same simple block after the splitting iff:
\begin{itemize}
\item for every $p \in \SV$, $p \in x.\vLabel.\keys()$ iff $p \in x'.\vLabel.\keys()$; and
\item for every $r \in \SE$, there exists an edge $e$ with $e.\edgeOrigin = x$ iff there exists an edge $e'$ with $e'.\edgeOrigin = x'$. 
\end{itemize}
Note that, at this stage, for every egde $e$ of $G$, $e.\pcEdge = \bbB.\pComponent$. The new block to put a vertex $x$ in is identified by the following two sets:
\begin{itemize}
	\item $\{p \mid$ there exists $ld \in \labelDegrees$ such that $ld.\aboutVertex = \true$, $ld.\vertex = x$ and $ld.\vertexLabel = p\}$, which is computed and stored in $x.\vLabelB$;\footnote{This set corresponds to $x.\vLabel.\keys()$.} and 
	\item $\{(r,\bbB.\pComponent) \mid$ there exists $ld \in \labelDegrees$ such that $ld.\aboutVertex = \false$ and, for $e = ld.\edge$, $e.\edgeOrigin = x$ and $e.\edgeLabel = r\}$, which is computed and stored in $x.\vLabelC$.
\end{itemize}
The attributes $x.\vLabelB$ and $x.\vLabelC$ for vertices $x$ of $G$ are computed by the statements \ref{RefineAA 1}-\ref{RefineAA 7} of the procedure $\RefineAA()$. Preparing subblocks to split $\bbB$ into is done by the loop in the statements \ref{RefineAA 8}-\ref{RefineAA 11}. The splitting itself is done by the statements \ref{RefineAA 12}-\ref{RefineAA 22}. The statements \ref{RefineAA 12}-\ref{RefineAA 15b} deal with the case where the number of subblocks to split $\bbB$ into is equal to~1. For this case, they just restore $\bbB$ to the state before executing the procedure. The statements \ref{RefineAA 16}-\ref{RefineAA 22} deal with the other case. They replace $\bbB$ with a fuzzy block $\bbB_2$, where $\bbB_2.\degree = 0$ and the list $\bbB_2.\subblocks$ consists of simple blocks whose contents were computed by the mentioned statements \ref{RefineAA 8}-\ref{RefineAA 11}. The contents of the unique component of $\bbP$ are also updated appropriately. The statements \ref{RefineAA 23}-\ref{RefineAA 25} clear the auxiliary sets $x.\vLabelB$ and $x.\vLabelC$ for all vertices~$x$ of~$G$.

\newcommand{\verticesTBP}{\mathit{vertices\!\_tbp}}
\begin{procedure}
\small
	\caption{Refine$_{1b}$($d_i$)\label{proc: RefineAB}}
	\tcp{we have $d_i > 0$}
	set $\verticesTBP$ to an empty vector\label{RefineAB 1}\tcp*{it stands for vertices-to-be-processed}
	\While{$\labelDegreesIdx < \labelDegrees.\length$ and $\labelDegrees[\labelDegreesIdx].\degree = d_i$\label{RefineAB 2}}{
		$ld := \labelDegrees[\labelDegreesIdx]$, $\labelDegreesIdx := \labelDegreesIdx + 1$\label{RefineAB 3}\;
		\uIf{$ld.\aboutVertex$}{
			$x := ld.\vertex$, $p := ld.\vertexLabel$\;
			$x.\vLabelB.\add(p)$\;
			\If{not $x.\processed$}{
				$\verticesTBP.\add(x)$\;
				$x.\processed := \true$\;
			}
		}
		\Else{
			$e := ld.\edge$, $r := e.\edgeLabel$, $pce := e.\pcEdge$\; 
			\If{$pce.\maxKey(r) = d_i$}{
				$x := e.\edgeOrigin$, $pc := e.\edgeDest.\vertexBlock.\pComponent$\;
				$x.\vLabelC.\add((r,pc))$\label{RefineAB 14}\;
				\If{not $x.\processed$}{
					$\verticesTBP.\add(x)$\;
					$x.\processed := \true$\label{RefineAB 17}\;
				}
			}
		}
	}
	\ForEach{$x \in \verticesTBP$\label{RefineAB 18}}{
		$bx := x.\vertexBlock$, $key := (x.\vLabelB,x.\vLabelC)$\;
		\If{$key \notin bx.\departingBlocks_1.\keys()$}{set $bx.\departingBlocks_1[key]$ to a newly created empty list\;}
		move $x$ from $bx.\elements$ to $bx.\departingBlocks_1[key]$\label{RefineAB 22}\;
	}
	
	\ForEach{$x \in \verticesTBP$\label{RefineAB 23}}{
		$bx := x.\vertexBlock$\;
		\If{not $bx.\departingBlocks_1.\Empty()$}{
			\uIf{$bx.\departingBlocks_1.\keys().\length = 1$ and $bx.\elements.\Empty()$\label{RefineAB 26}}{
				let $key$ be the unique element of $bx.\departingBlocks_1.\keys()$\;
				swap $bx.\departingBlocks_1[key]$ and $bx.\elements$\label{RefineAB 29}\;
			}
			\Else{\label{RefineAB 30}
				$pc := bx.\pComponent$\;
				$bp := bx.\parent$\;
				\lIf{$bp \neq \Null$}{remove $bx$ from $bp.\subblocks$}
				$bx_2 := \New \FBlock(bp,d_i)$\;
				\lIf{$\bbB = bx$}{$\bbB := bx_2$}
				\ForEach{$key \in bx.\departingBlocks_1.\keys()$\label{RefineAB 33}}{
					$\New \SCBlock(bx_2, bx.\departingBlocks_1[key], pc)$\label{RefineAB 34}\;
				}
				\uIf{not $bx.\elements.\Empty()$}{
					add $bx$ to $bx_2.\subblocks$\;
					$bx.\parent := bx_2$\;
				}
				\lElse{$pc.\removeBlock(bx)$\label{RefineAB 39}}
			}	
			$bx.\departingBlocks_1.\clear()$\label{RefineAB 39b}\;
		}
	}
	
	\ForEach{$x \in \verticesTBP$\label{RefineAB 40}}{
		$x.\vLabelB.\clear()$,\ 
		$x.\vLabelC.\clear()$,\  
		$x.\processed := \false$\label{RefineAB 43}\;
	}
\end{procedure}

Consider the procedure $\RefineAB(d_i)$ (on page~\pageref{proc: RefineAB}) under the assumption that $d_i > 0$. Its aim is to refine $\bbB$ to make it $d_i$-cut stable w.r.t.~$\bbP$. Let $i$ be the index used in the statement~\ref{step: JHKXS 4} of Algorithm~\ref{algCompFPt} and let $d_{i-1} = \allDegrees[i-1]$. As Algorithm~\ref{algCompFPt} is designed to simulate Algorithm~\ref{algCompFP}, by Lemma~\ref{lemma: DJSJA}, we can assume that $\bbB$ is already $d_{i-1}$-cut stable w.r.t.~$\bbP$. Under this assumption, $\bbB$ should be refined so that, if $x$ and $x'$ belong to the same block of the leaf partition of $\bbB$, then they will still belong to the same block of the leaf partition of $\bbB$ after the refinement iff: 
\begin{itemize}
\item for every $p \in \SV$, $x.\vLabel[p] = d_i$ iff $x'.\vLabel[p] = d_i$; and
\item for every $r \in \SE$ and every $\bbP$-component $pc$, there exists an edge $e$ with $e.\edgeOrigin = x$, $e.\edgeLabel = r$, $e.\edgeDest.\vertexBlock.\pComponent = pc$ and $e.\pcEdge.\maxKey(r) = d_i$ iff there exists an edge $e'$ with $e'.\edgeOrigin = x'$, $e'.\edgeLabel = r$, $e'.\edgeDest.\vertexBlock.\pComponent = pc$ and $e'.\pcEdge.\maxKey(r) = d_i$.
\end{itemize}
Thus, the new block to put a vertex $x$ in is identified by the following two sets:
\begin{itemize}
\item $\{p \mid$ there exists $ld \in \labelDegrees$ such that $ld.\degree = d_i$, $ld.\aboutVertex = \true$, $ld.\vertex = x$ and $ld.\vertexLabel = p\}$, which is computed and stored in $x.\vLabelB$; and 
\item $\{(r,pc) \mid$ there exists $ld \in \labelDegrees$ such that $ld.\degree = d_i$, $ld.\aboutVertex = \false$ and, for $e = ld.\edge$, $e.\edgeOrigin = x$, $e.\edgeLabel = r$, $e.\edgeDest.\vertexBlock.\pComponent = pc$ and $e.\pcEdge.\maxKey(r) = d_i\}$, which is computed and stored in $x.\vLabelC$.
\end{itemize}
The attributes $x.\vLabelB$ and $x.\vLabelC$ for vertices $x$ of $G$ are computed by the ``while'' loop in the statements \ref{RefineAB 2}-\ref{RefineAB 17} of the procedure $\RefineAB(d_i)$. This loop also computes a vector $\verticesTBP$ (vertices to be processed) consisting of the vertices $x$ such that either $x.\vLabelB$ or $x.\vLabelC$ is not empty. Each block of the leaf partition of $\bbB$ may need to be split. Preparing subblocks to split blocks of the leaf partition of $\bbB$ into is done by the loop in the statements \ref{RefineAB 18}-\ref{RefineAB 22}, which sets up the maps $\departingBlocks_1$ for blocks of the leaf partition of $\bbB$. The splitting itself is done by the statements \ref{RefineAB 23}-\ref{RefineAB 39b}. For each vertex $x$ from $\verticesTBP$ and for the simple block $bx$ that contains $x$, if the map $bx.\departingBlocks_1$ is not empty, then $bx$ need to be processed for the splitting. In that case, splitting $bx$ is done by the statements \ref{RefineAB 26}-\ref{RefineAB 39b}. The statements \ref{RefineAB 26}-\ref{RefineAB 29} and \ref{RefineAB 39b} deal with the case where the number of subblocks to split $bx$ into is equal to~1. For this case, they just restore $bx$ to the state before executing the procedure. The statements \ref{RefineAB 30}-\ref{RefineAB 39b} deal with the other case. They replace $bx$ in the representation tree of $\bbB$ with a new fuzzy block $bx_2$, where $bx_2.\degree = d_i$ and the list $bx_2.\subblocks$ consists of simple blocks whose contents are either a key's value of the map $bx.\departingBlocks_1$ or $bx.\elements$ (if this list is not empty). The contents of the components of $\bbP$ are also updated appropriately. The statements \ref{RefineAB 40}-\ref{RefineAB 43} clear the sets $x.\vLabelB$ and $x.\vLabelC$ and the flag $x.\processed$ for all vertices~$x$ from~$\verticesTBP$.

%===============================================================================

\begin{figure*}
\begin{procedure}[H]
\caption{Refine$_2$($d_i,X$)\label{proc: RefineB}}
create a vector $\verticesX$ consisting of the elements of the list $X.\elements$\;
$\ComputePComponentEdges(\verticesX)$\label{RefineB 2}\;
$\ComputeSubblocks(d_i,\verticesX)$\label{RefineB 3}\;
$\DoSplitting(d_i,X,\verticesX)$\label{RefineB 4}\;
$\ClearAuxiliaryInfo(\verticesX)$\label{RefineB 5}\;
\end{procedure}

\smallskip

\begin{procedure}[H]
\caption{ComputePComponentEdges($\verticesX$)\label{proc: ComputePComponentEdges}}
\ForEach{$x \in \verticesX$ and $e \in x.\comingEdges$}{
	$pce := e.\pcEdge$\;
	\If{$pce.\departingPComponentEdge = \Null$}{
		$pce.\departingPComponentEdge = \New \PComponentEdge(pce)$\;
	}
	$dpce = pce.\departingPComponentEdge$\;
	$r := e.\edgeLabel$, $d := e.\degree$\;
	$pce.\popKey(r,d)$, $dpce.\pushKey(r,d)$\;
}
\end{procedure}

\smallskip

\begin{procedure}[H]
\caption{ComputeSubblocks($d_i,\verticesX$)\label{proc: ComputeSubblocks}}
\ForEach{$x \in \verticesX$ and $e \in x.\comingEdges$\label{step: JHFJQ 1}}{
	$v := e.\edgeOrigin$, $r := e.\edgeLabel$, $pce := e.\pcEdge$, $dpce := pce.\departingPComponentEdge$\;
	\lIf{$pce.\maxKey(r) > d_i$}{$v.\vLabelD.add(r)$}
	\lIf{$dpce.\maxKey(r) > d_i$}{$v.\vLabelE.add(r)$\label{step: JHFJQ 4}}
}

\ForEach{$x \in \verticesX$ and $e \in x.\comingEdges$\label{step: JHFJQ 5}}{
	$v := e.\edgeOrigin$, $bv := v.\vertexBlock$\;
	\If{not $v.\processed$}{
		$key := (v.\vLabelD,v.\vLabelE)$\;
		\If{$key \notin bv.\departingBlocks_2.\keys()$}{
			$bv.\departingBlocks_2[key] := \New \VertexList()$\;
		}
		move $v$ from $bv.\elements$ to $bv.\departingBlocks_2[key]$\;
		$v.\processed := \true$\label{step: JHFJQ 13}\;
	}
}
\end{procedure}
\end{figure*}

\begin{figure*}
\begin{procedure}[H]
\caption{DoSplitting($d_i,X,\verticesX$)\label{proc: DoSplitting}}
$Y := X.\pComponent$\label{step: JRMBA 1}\;
$Y.\removeBlock(X)$\;
$\createPComponent(Y.\pPartition, X)$\label{step: JRMBA 3}\;

\ForEach{$x \in \verticesX$ and $e \in x.\comingEdges$\label{step: JRMBA 4}}{
	$e.\pcEdge := e.\pcEdge.\departingPComponentEdge$\;
	$v := e.\edgeOrigin$, $bv := v.\vertexBlock$\;
	\If{not $bv.\departingBlocks_2.\Empty()$\label{step: JRMBA 7}}{
		\uIf{$bv.\departingBlocks_2.\keys().\length = 1$ and $bv.\elements.\Empty()$\label{step: JRMBA 8}}{
			let $key$ be the unique element of $bx.\departingBlocks_2.\keys()$\;
			swap $bv.\departingBlocks_2[key]$ and $bv.\elements$\label{step: JRMBA 11}\;
		}
		\Else{\label{step: JRMBA 12}
			$pc := bv.\pComponent$\;
			$bp := bv.\parent$\;
			\uIf{$bp = \Null$ or $bp.\degree < d_i$\label{step: JRMBA 16}}{
				\lIf{$bp \neq \Null$}{remove $bv$ from $bp.\subblocks$}
				$bv_2 := \New \FBlock(bp,d_i)$\;
				\lIf{$\bbB = bv$}{$\bbB := bv_2$}
				\ForEach{$key \in bv.\departingBlocks_2.\keys()$}{
					$\New \SCBlock(bv_2, bv.\departingBlocks_2[key], pc)$\;
				}
				\uIf{not $bv.\elements.\Empty()$}{
					add $bv$ to $bv_2.\subblocks$\;
					$bv.\parent := bv_2$\;
				}
				\lElse{$pc.\removeBlock(bv)$\label{step: JRMBA 23}}
			}
			\Else{\label{step: JRMBA 24}
				\ForEach{$key \in bv.\departingBlocks_2.\keys()$}{
					$\New \SCBlock(bp, bv.\departingBlocks_2[key], pc)$\;
				}
				\If{$bv.\elements.\Empty()$}{
					remove $bv$ from $bp.\subblocks$\;	
					$pc.\removeBlock(bv)$\label{step: JRMBA 29}\;	
				}
			}
		}
		$bv.\departingBlocks_2.\clear()$\label{step: JRMBA 29b}\;
	}
}
\end{procedure}

\smallskip

\begin{procedure}[H]
\caption{ClearAuxiliaryInfo($\verticesX$)\label{proc: ClearAuxiliaryInfo}}
	
\ForEach{$x \in \verticesX$ and $e \in x.\comingEdges$}{
	$v := e.\edgeOrigin$\;
	$v.\vLabelD.\clear()$, 
	$v.\vLabelE.\clear()$, 
	$v.\processed := \false$\; 
	$pce := e.\pcEdge$, $spce := pce.\sourcePComponentEdge$\;
	\If{$spce \neq \Null$}{
		$spce.\departingPComponentEdge := \Null$\;
		$pce.\sourcePComponentEdge := \Null$\;
	}
}
\end{procedure}
\end{figure*}

Consider the procedure $\RefineB(d_i, X)$ given on page~\pageref{proc: RefineB}, with subroutines defined on pages \pageref{proc: ComputePComponentEdges} and \pageref{proc: DoSplitting}. Recall that it is used in the statement~\ref{step: JHKXS 10} of Algorithm~\ref{algCompFPt}. The parameter $X$ is a simple block of the leaf partition of $\bbB$. 
Let $i$ be the index used in the statement~\ref{step: JHKXS 4} of Algorithm~\ref{algCompFPt} and let $d_j = \allDegrees[j]$ for $0 \leq j < i$. Let $Y = X.\pComponent$ (the $\bbP$-component that contains the simple block $X$). It is the same as the variable $Y$ in the statement~\ref{step: JHKXS 8} of Algorithm~\ref{algCompFPt} before the call $\RefineB(d_i, X)$. The aim of the procedure is to refine $\bbB$ to make it $d_i$-cut stable w.r.t.\ both $X$ and $Y \setminus X$, and to refine $\bbP$ by replacing its component $Y$ with $X$ and $Y \setminus X$, treating $X$ and $Y$ as sets of vertices. The procedure $\RefineB(d_i, X)$ starts with creating a vector $\verticesX$ consisting of the elements of the list $X.\elements$ (we need a copy because $X$ may be split by the procedure). Then, it executes four subroutines, which are discussed below. 

The call $\ComputePComponentEdges(\verticesX)$ in the procedure $\RefineB(d_i, X)$ prepares $\bbP$-component-edges that will connect vertices to two future $\bbP$-components, whose contents are $Y\setminus X$ or $X$, respectively, treating $X$ and $Y$ as sets of vertices. 
Each $\bbP$-component-edge that connects a vertex $v$ to the $\bbP$-component $Y$ via edges with the destination belonging to $\verticesX$ is updated to become the one that
\begin{itemize}
\item plays the role of a $\bbP$-component-edge connecting $v$ to the future $\bbP$-component with contents $Y \setminus X$, and 
\item has the attribute $\departingPComponentEdge$ set to a newly created $\bbP$-component-edge intended for connecting $v$ to the future $\bbP$-component with contents $X$. 
\end{itemize}

Let $v$ be an arbitrary vertex of $G$ and $pce$ the $\bbP$-component-edge connecting $v$ to $Y$ (i.e., the one such that there exists an edge $e$ connecting $v$ to a vertex $x$ such that $e.\pcEdge = pce$ and $x.\vertexBlock.\pComponent = Y$). Let $dpce = pce.\departingPComponentEdge$. After executing $\ComputePComponentEdges(\verticesX)$ in the statement~\ref{RefineB 2} of the procedure $\RefineB(d_i, X)$, we have that, for every $r \in \SE$, \mbox{$\sup E(v,r,Y \setminus X) =$} $pce.\maxKey(r)$, $\sup E(v,r,X) = dpce.\maxKey(r)$ if $dpce \neq \Null$, and $\sup E(v,r,X) = 0$ otherwise, treating $X$ and $Y$ as sets of vertices. 
Let $v'$ be another vertex belonging to the same simple block of the leaf partition of $\bbB$ as $v$ (i.e., $v'.\vertexBlock = v.\vertexBlock$). Let $pce'$ be the $\bbP$-component-edge connecting $v'$ to $Y$ and let $dpce' = pce'.\departingPComponentEdge$. As Algorithm~\ref{algCompFPt} is designed to simulate Algorithm~\ref{algCompFP}, by Lemma~\ref{lemma: DJSJA}, we can assume that before calling $\RefineB(d_i, X)$ in the statement~\ref{step: JHKXS 10} of Algorithm~\ref{algCompFPt} $\bbB$ is already $d_j$-cut stable, for all $0 \leq j < i$, and $d_i$-cut stable w.r.t.~$\bbP$. Thus, for every $p \in \SV$, $L(v)(p) > d_i$ iff $L(v')(p) > d_i$. Furthermore, $\bbB$ should be refined by the procedure $\RefineB(d_i, X)$ so that $v$ and $v'$ will still belong to the same block of the leaf partition of $\bbB$ after the refinement iff either $dpce = \Null$ and $dpce' = \Null$ or the following two conditions hold: 
\begin{itemize}
\item for every $r \in \SE$, $pce.\maxKey(r) > d_i$ iff $pce'.\maxKey(r) > d_i$;
\item $dpce \neq \Null$, $dpce' \neq \Null$ and, for every $r \in \SE$, (\mbox{$dpce.\maxKey(r) > d_i$} iff \mbox{$dpce'.\maxKey(r) > d_i$}).
\end{itemize}
The first condition means that, for every $r \in \SE$, \mbox{$\sup E(v,r,Y \setminus X) > d_i$} iff \mbox{$\sup E(v',r,Y \setminus X) > d_i$}, whereas the second condition means that $dpce \neq \Null$, $dpce' \neq \Null$ and, for every $r \in \SE$, (\mbox{$\sup E(v,r,X) > d_i$} iff \mbox{$\sup E(v',r,X) > d_i$}). 
Therefore, the new block to put $v$ in is identified either by $dpce = \Null$ or by the following two sets:
\begin{itemize}
\item $\{r \in \SE \mid pce.\maxKey(r) > d_i\}$, which is computed and stored in $v.\vLabelD$;
\item $\{r \in \SE \mid dpce.\maxKey(r) > d_i\}$, which is computed and stored in $v.\vLabelE$. 
\end{itemize}

The call $\ComputeSubblocks(d_i,\verticesX)$ in the statement~\ref{RefineB 3} of the procedure $\RefineB(d_i,X)$ prepares subblocks to split blocks of the leaf partition of B into. It is defined on page~\pageref{proc: ComputePComponentEdges}. The loop in the statements \ref{step: JHFJQ 1}-\ref{step: JHFJQ 4} computes the sets $v.\vLabelD$ and $v.\vLabelE$ mentioned above for all vertices $v$ such that there are edges connecting $v$ to a vertex belonging to $\verticesX$. After that, the next loop moves such a vertex $v$ from $bv.\elements$ to $bv.\departingBlocks_2[(v.\vLabelD,v.\vLabelE)]$, where $bv = v.\vertexBlock$.

The above discussed calls of $\ComputePComponentEdges$ and $\ComputeSubblocks$ are only a preparation of the procedure $\RefineB(d_i, X)$ for the refinement. The splitting is really done by the call $\DoSplitting(d_i,X,\verticesX)$ in the statement~\ref{RefineB 4}. This subroutine is defined on page~\pageref{proc: DoSplitting}. Its statements \ref{step: JRMBA 1}-\ref{step: JRMBA 3} remove the block $X$ from the $\bbP$-component $Y$ and add to $\bbP$ a new component consisting only of the block $X$. Splitting blocks of the leaf partition of $\bbB$ is done by the loop in the statements \ref{step: JRMBA 4}-\ref{step: JRMBA 29b}. For each vertex $x$ from $\verticesX$ and for each edge coming to $x$ from a vertex $v$, if the map $bv.\departingBlocks_2$ is not empty, where $bv = v.\vertexBlock$, then $bv$ need to be processed for the splitting. In that case, splitting $bv$ is done by the statements \ref{step: JRMBA 8}-\ref{step: JRMBA 29b}. The statements \ref{step: JRMBA 8}-\ref{step: JRMBA 11} and \ref{step: JRMBA 29b} deal with the case where the number of subblocks to split $bv$ into is equal to~1. For this case, they just restore $bv$ to the state before executing the procedure $\RefineB(d_i,X)$. The statements \ref{step: JRMBA 12}-\ref{step: JRMBA 29b} deal with the other case. Let $bp = bv.\parent$. As specified on page \pageref{place HDJHS} for the procedure $\refineB(\bbB, d, X, Y)$ used in Algorithm~\ref{algCompFP}, there are two subcases. If $bp = \Null$ or $bp.\degree < d_i$, then $bv$ is replaced in the representation tree of $\bbB$ by a new fuzzy block $bv_2$, where $bv_2.\degree = d_i$ and the list $bv_2.\subblocks$ consists of simple blocks whose contents are either a key's value of the map $bv.\departingBlocks_2$ or $bv.\elements$ (if this list is not empty). In the other case, the list $bp.\subblocks$ is modified by adding to it new simple blocks whose contents are a key's value of the map $bv.\departingBlocks_2$ and by removing $bv$ if $bv.\elements$ is empty. The contents of the component of $\bbP$ that contains $bv$ are also updated appropriately. 

The call $\ClearAuxiliaryInfo(\verticesX)$ in the statement~\ref{RefineB 5} of the procedure $\RefineB(d_i,X)$ clears the auxiliary attributes $\vLabelD$, $\vLabelE$ and $\processed$ of vertices and the auxiliary attributes $\departingPComponentEdge$ and $\sourcePComponentEdge$ of $\bbP$-component-edges. The subroutine is defined on page~\pageref{proc: ClearAuxiliaryInfo}. 

It is worth noticing the similar scheme of the procedures $\RefineAA$, $\RefineAB$ and $\RefineB$. By the description and discussion given above about these procedures, it can be seen that the statements \ref{step: JHKXS 5}-\ref{step: JHKXS 6} (if $d_i = 0$ then $\RefineAA()$ else $\RefineAB(d_i)$) of Algorithm~\ref{algCompFPt} strictly simulate the call $\refineA(\bbB, d_i, \bbP)$ in Algorithm~\ref{algCompFP}. Similarly, the call $\RefineB(d_i, X)$ in Algorithm~\ref{algCompFPt} strictly simulates the statements \ref{step: HGFAJ 7}-\ref{step: HGFAJ 8} ($\refineB(\bbB, d_i, X, Y)$; refine $\bbP$ by replacing $Y$ with $X$ and $Y \setminus X$) of Algorithm~\ref{algCompFP}. By Theorem~\ref{theorem: correctness 1}, we reach the following result.

\begin{theorem}\label{theorem: correctness 2}
Algorithm~\ref{algCompFPt} is correct. That is, it is a correct algorithm for computing the fuzzy partition corresponding to the greatest fuzzy bisimulation of a given finite fuzzy graph.
\end{theorem}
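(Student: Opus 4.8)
The plan is to prove correctness of Algorithm~\ref{algCompFPt} by reduction to Theorem~\ref{theorem: correctness 1}, rather than from scratch: I would show that Algorithm~\ref{algCompFPt} faithfully simulates Algorithm~\ref{algCompFP}, step for step. The backbone is a \emph{representation invariant} relating the data structures of Section~\ref{sec: DS} to the abstract objects of Algorithm~\ref{algCompFP}, asserting that whenever the two algorithms are compared: (i) the tree of $\FBlock$/$\SCBlock$ objects rooted at $\bbB$, read through $\degree$, $\subblocks$ and $\elements$, is the representation tree of the abstract block $\bbB$; (ii) the $\PPartition$ object $\bbP$, with components the $\PComponent$ objects in $\compoundComponents \cup \simpleComponents$ and contents read through $\scBlocks$, equals the abstract partition $\bbP$; (iii) for every vertex $x$ and every $\bbP$-component $Y$, the $\PComponentEdge$ reached from the edges of $x$ into $Y$ satisfies $\maxKey(r) = \sup E(x,r,Y)$ for all $r \in \SE$; and (iv) the pointers $\vertexBlock$, $\pComponent$ and $\pcEdge$ are consistent with (i)--(iii). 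I would first check that $\Initialize()$ establishes this invariant for the initial abstract state ($\bbP = \{V\}$, $\bbB$ a simple block with $\bbB.\elements() = V$); the only nonroutine point is (iii), which holds because the edge loop calls $\pushKey(e.\edgeLabel, e.\degree)$ exactly once per edge, so $\maxKey(r)$ records the largest degree of an outgoing $r$-edge into the single component $V$.

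Next I would prove one simulation lemma per refinement step. For $\refineA(\bbB, d_i, \bbP)$, I would show that $\RefineAA()$ (when $d_i = 0$) and $\RefineAB(d_i)$ (when $d_i > 0$) turn $\bbB$ into exactly the coarsest crisp-block-free refinement that is $d_i$-cut stable w.r.t.~$\bbP$, i.e.\ the output demanded by the specification of $\refineA$ and, by Proposition~\ref{prop: JHDKW} and Lemma~\ref{lemma: HDKJA2}, unique. The crucial point, justified via the invariant of Lemma~\ref{lemma: DJSJA} that $\bbB$ is already $d_{i-1}$-cut stable w.r.t.~$\bbP$, is that two vertices in one leaf block must be separated at the $d_i$-cut \emph{iff} they disagree on which vertex labels and which outgoing edges have degree \emph{exactly} $d_i$; this is why $\RefineAB(d_i)$ need only scan the prefix of $\labelDegrees$ with $\degree = d_i$ and split each block by the key $(x.\vLabelB, x.\vLabelC)$ assembled from those items. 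I would then check that the two branches of the split --- the single-key ``restore'' and the ``replace by a fresh $\FBlock$'' --- reproduce the two cases of the straightforward implementation, with $\bbP$ updated consistently.

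For the combined step ``$\refineB(\bbB, d_i, X, Y)$ and replace $Y$ by $X, Y\setminus X$ in $\bbP$'', I would show $\RefineB(d_i, X)$ has the same effect. The heart of this is that $\ComputePComponentEdges(\verticesX)$ correctly splits each $\bbP$-component-edge into one for $Y\setminus X$ (the updated $pce$) and one for $X$ (the new $\departingPComponentEdge$), so that afterwards $pce.\maxKey(r) = \sup E(v,r,Y\setminus X)$ and $dpce.\maxKey(r) = \sup E(v,r,X)$; this follows from invariant (iii) and the fact that the $\popKey$/$\pushKey$ calls move precisely the $X$-edges from $pce$ into $dpce$. Granting it, the $d_i$-cut stability test reduces to comparing, for each incoming vertex $v$, the sets $v.\vLabelD = \{r \mid pce.\maxKey(r) > d_i\}$ and $v.\vLabelE = \{r \mid dpce.\maxKey(r) > d_i\}$, and $\ComputeSubblocks$ together with $\DoSplitting$ perform the same split as in $\RefineAB$, now additionally honouring the subcase $bp.\degree = d_i$ (attach the new subblocks to the existing parent) required by the specification of $\refineB$ on page~\pageref{place HDJHS}.

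Finally I would match the control flow: the ``foreach'' loop ranges over the same $d_0,\ldots,d_{k-1}$ via $\allDegrees$; the guard $\neg\,\bbP.\compoundComponents.\Empty()$ is equivalent to ``$\bbP$ differs from the leaf partition of $\bbB$'' (a component is compound exactly when it holds more than one leaf block); $Y.\smallerBlock()$ returns a block $X$ with $|X| \le |Y|/2$, matching the (correctness-irrelevant) selection criterion of Algorithm~\ref{algCompFP}; and the concluding loop setting $\crisp := \true$ realizes the final conversion of simple leaves into crisp blocks. With the simulation lemmas and the preserved invariant in hand, Algorithm~\ref{algCompFPt} computes the same abstract $\bbB$ as Algorithm~\ref{algCompFP}, and the conclusion follows from Theorem~\ref{theorem: correctness 1}. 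I expect the main obstacle to be invariant (iii) inside the $\refineB$-simulation: proving that $\maxKey(r) = \sup E(v,r,Y)$ survives the destructive $\popKey$/$\pushKey$ updates and the reuse and rewiring of the same physical $\PComponentEdge$ objects through $\departingPComponentEdge$/$\sourcePComponentEdge$, and that $\ClearAuxiliaryInfo$ resets exactly the transient links so the invariant is restored for the next call.
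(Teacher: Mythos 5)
Your proposal is correct and follows essentially the same route as the paper: the paper also proves Theorem~\ref{theorem: correctness 2} by arguing that the statements of Algorithm~\ref{algCompFPt} (via $\RefineAA$, $\RefineAB$ and $\RefineB$) strictly simulate the calls $\refineA(\bbB,d_i,\bbP)$ and $\refineB(\bbB,d_i,X,Y)$ together with the update of $\bbP$ in Algorithm~\ref{algCompFP}, and then invokes Theorem~\ref{theorem: correctness 1}. The only difference is one of rigor, not of method: where the paper leaves the simulation claims at the level of the descriptive discussion in Section~\ref{sec: impl}, you make them explicit through a representation invariant (including the key property $\maxKey(r)=\sup E(v,r,Y)$ for $\bbP$-component-edges) and per-step simulation lemmas, which is a sound and arguably more complete rendering of the same argument.
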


We have implemented Algorithm~\ref{algCompFPt} in Python and shared the codes publicly~\cite{CompFP-prog}. The codes are very similar to the pseudocodes given in this section. The user can experiment with them to display various information. 

%---------------------------------------------------------------------
\subsection{Complexity Analysis}
\label{sec: comp-anal}

Recall that $n = |V|$, $m = |\{\tuple{x,r,y} \in$ \mbox{$V \times \SE \times V:$} $E(x,r,y) > 0\}|$ and 
$l = |\{E(x,r,y)$ : $\tuple{x,r,y} \in$ \mbox{$V \times \SE \times V\}|$}. Assume that $l \geq 2$. Also recall that the sizes of $\SE$ and $\SV$ are assumed to be bounded by a constant. We now estimate the time complexity of Algorithm~\ref{algCompFPt} in terms of $n$, $m$ and $l$. 

Let $k$ be the variable set in the statement~\ref{step: JHKXS 2} of Algorithm~\ref{algCompFPt}. We have $k \leq l + n \cdot |\SV|$. 

Consider the procedure $\RefineAA()$. During its execution, there is only one $\bbP$-component and the sizes of $x.\vLabelB$ and $x.\vLabelC$, for $x \in \vertices$, are bounded by a constant. The time taken by the statements \ref{RefineAA 1}-\ref{RefineAA 7} (respectively, \ref{RefineAA 8}-\ref{RefineAA 25}) of this procedure is of order $O(m+n)$ (respectively, $O(n)$). Thus, the time taken by this procedure is of order $O(m+n)$. 

Consider the procedure $\RefineAB(d_i)$. Let $h_i$ be the number of elements of $\labelDegrees$ whose attribute $\degree$ has the value $d_i$. We have $\sum_{1 \leq i < k} h_i < m + n\cdot |\SV| = O(m+n)$. The time taken by the procedure $\RefineAB(d_i)$ is of order $O(h_i\cdot\log{n})$. To see this, observe that, in the loops ``foreach $x \in \verticesTBP$ do'' in the statements \ref{RefineAB 18}-\ref{RefineAB 22} and \ref{RefineAB 23}-\ref{RefineAB 39b}:
\begin{itemize}
\item the sum of the lengths of serialized representations of the keys of the maps $bx.\departingBlocks_1$ for $x \in \verticesTBP$ and $bx = x.\vertexBlock$ is of order $O(h_i)$;
\item the number of keys in each of the mentioned maps $bx.\departingBlocks_1$ is of order $O(n)$;
\item the total number of vertices belonging to the lists that are keys' values of the mentioned maps $bx.\departingBlocks_1$ is of order $O(h_i)$, and such lists are non-empty. 
\end{itemize}
Hence, the total time taken by all the calls $\RefineAB(d_i)$, for $1 \leq i < k$, is of order $O((m+n)\log{n})$. 

Given \mbox{$X:\SCBlock$}, we write $|X|$ to denote $|X.\elements|$ and write \mbox{$|\!\uparrow\!X|$} to denote the number of edges coming to a vertex of $X$. Given \mbox{$Y:\PComponent$}, we also write $|Y|$ to denote the total number of vertices that belong to a block of~$Y$. 

Consider the procedure $\RefineB(d_i,X)$. The call $\ComputePComponentEdges(\verticesX)$ runs in time \mbox{$O(|X| + |\!\uparrow\!X|\cdot \log{l})$}. All the calls $\ComputeSubblocks(d_i,\verticesX)$, $\DoSplitting(d_i,X,\verticesX)$ and $\ClearAuxiliaryInfo(\verticesX)$ run in time \mbox{$O(|X| + |\!\uparrow\!X|)$}. Therefore, the procedure $\RefineB(d_i,X)$ runs in time \mbox{$O(|X| + |\!\uparrow\!X|\cdot \log{l})$}. 
Dividing this cost for the individual vertices of $X$, we can assume that the cost assigned to each vertex $x \in X.\elements$ in a call $\RefineB(d_i,X)$ is of order \mbox{$O(1 + |\!\uparrow\!x|\cdot \log{l})$}, where \mbox{$|\!\uparrow\!x|$} is the number of edges coming to~$x$. 

Fix an arbitrary vertex $x \in V$. Let's estimate the number of calls $\RefineB(d_i,X)$ during the execution of Algorithm~\ref{algCompFPt} for $G$ such that $x$ is a vertex of $X$. Denote it by $f(x)$. Observe that, if $\RefineB(d_i,X)$ is such a call at some step, then the next call of $\RefineB$ with that property at some later step, if it exists, must be $\RefineB(d_{i'},X')$ with $|X'| \leq |X'.\pComponent|/2 \leq |X|/2$. Extending this understanding, we conclude that $f(x) \leq \log{n}$. 

Therefore, the total time taken by all the calls of $\RefineB$ in the statement~\ref{step: JHKXS 10} of the execution of Algorithm~\ref{algCompFPt} for $G$ is of order 
\[ O(\sum_{x \in V} \log{n} \cdot (1 + |\!\uparrow\!x|\cdot \log{l})), \]
which is of order $O((m\log{l} + n)\log{n})$. 

As discussed earlier in this subsection, the total time taken by the statements \ref{step: JHKXS 5}-\ref{step: JHKXS 6} of the Algorithm~\ref{algCompFPt} is of order $O((m+n)\log{n})$. As estimated in Section~\ref{sec: init}, the time taken by $\Initialize()$ is also of order \mbox{$O((m+n)\log{n})$}. Hence, we arrive at the following theorem. 

\begin{theorem}\label{theorem: DJAJD}
Algorithm~\ref{algCompFPt} has a time complexity of order $O((m\log{l} + n)\log{n})$.
\end{theorem}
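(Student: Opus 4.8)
The plan is to bound the total running time of Algorithm~\ref{algCompFPt} by splitting it into three phases and summing their costs: the call to $\Initialize()$, the simulation of $\refineA$ by statements \ref{step: JHKXS 5}-\ref{step: JHKXS 6} (the calls of $\RefineAA$ and $\RefineAB$), and the simulation of $\refineB$ by the calls of $\RefineB$ in statement~\ref{step: JHKXS 10}. For the first two phases I would simply invoke the bounds already established earlier. The call $\Initialize()$ costs $O((m+n)\log n)$ by Section~\ref{sec: init}; the single call of $\RefineAA()$ (it fires only at $i=0$, where $d_0=0$) costs $O(m+n)$; and, using $\sum_{1\le i<k} h_i = O(m+n)$ together with the per-call bound $O(h_i\log n)$, the aggregate of all calls $\RefineAB(d_i)$ is $O((m+n)\log n)$. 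Thus everything outside the $\RefineB$ calls already fits within $O((m+n)\log n)$, which is dominated by the target bound.

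The crux is therefore the total cost of all calls of $\RefineB$. First I would fix the per-call cost at $O(|X| + |\!\uparrow\!X|\cdot\log l)$. The subroutines $\ComputeSubblocks$, $\DoSplitting$ and $\ClearAuxiliaryInfo$ each touch every vertex of $X$ and every edge entering $X$ a constant number of times, and so run in $O(|X| + |\!\uparrow\!X|)$, while $\ComputePComponentEdges$ performs $O(|\!\uparrow\!X|)$ operations of the form $\pushKey$, $\popKey$ and $\maxKey$ on the counter maps of the $\PComponentEdge$ objects. Since those maps are keyed by the at most $l$ distinct edge degrees and are realized by balanced search trees, each such operation costs $O(\log l)$, which accounts for the extra logarithmic factor. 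I would then distribute this cost over the vertices of $X$, charging each $x\in X.\elements$ an amount $O(1 + |\!\uparrow\!x|\cdot\log l)$.

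Next I would apply the Hopcroft / Paige--Tarjan ``process the smaller half'' argument to bound, for each fixed vertex $x$, the number $f(x)$ of calls $\RefineB(d_i,X)$ with $x\in X$. The selection rule in statement~\ref{step: JHKXS 10} guarantees $|X|\le|X.\pComponent|/2$. After such a call, $X$ becomes a $\bbP$-component on its own, and since $\bbP$ is only ever refined afterwards, any later call $\RefineB(d_{i'},X')$ with $x\in X'$ has its component $X'.\pComponent$ contained in this $X$; hence $|X'|\le|X'.\pComponent|/2\le|X|/2$. The split blocks containing $x$ therefore at least halve in size from one such call to the next, so $f(x)\le\log n$. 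Summing the per-vertex charge over all calls then gives
\[
O\!\Big(\sum_{x\in V}\log n\cdot\big(1 + |\!\uparrow\!x|\cdot\log l\big)\Big)
= O\big((m\log l + n)\log n\big),
\]
using $\sum_{x\in V}|\!\uparrow\!x| = m$. Adding the $O((m+n)\log n)$ from the first two phases leaves the order unchanged, which yields the claimed bound.

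The main obstacle I anticipate is not the final summation but the rigorous justification of the two amortized claims: that the per-call cost of $\RefineB$ is genuinely $O(|X|+|\!\uparrow\!X|\cdot\log l)$, and that the halving invariant $|X|\le|X.\pComponent|/2$ is correctly maintained by $\smallerBlock()$ across all iterations. The former hinges on the data-structure invariants set up in Section~\ref{sec: DS} --- in particular that the $\departingPComponentEdge$/$\sourcePComponentEdge$ links let each incoming edge be re-routed, and each auxiliary attribute be cleared, in time proportional to the work actually touched, leaving the counter operations as the only super-constant cost. These checks mirror the accounting for the crisp case in~\cite{CompCB-arxiv}, and verifying them carefully is where the real effort lies.
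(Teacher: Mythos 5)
Your proposal is correct and follows essentially the same route as the paper's own analysis: bound $\Initialize$, $\RefineAA$ and the aggregated $\RefineAB(d_i)$ calls by $O((m+n)\log n)$, charge each call $\RefineB(d_i,X)$ a cost of $O(|X| + |\!\uparrow\!X|\cdot\log l)$ distributed as $O(1+|\!\uparrow\!x|\cdot\log l)$ per vertex, and use the smaller-half rule to get $f(x)\le\log n$, summing to $O((m\log l + n)\log n)$. In fact you spell out two points the paper leaves terse --- the balanced-tree justification of the $\log l$ factor in the counter-map operations, and the argument that after a call $X$ becomes its own $\bbP$-component so later components containing $x$ are subsets of $X$ --- both of which are accurate.
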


If $l$ is bounded by a constant (e.g., when $l = 2$ and $G$ is a crisp graph), then the time complexity of Algorithm~\ref{algCompFPt} is of order $O((m+n)\log{n})$. If \mbox{$m \geq n$}, then, taking $l = n^2$ for the worst case, the complexity of the algorithm is of order $O(m\log^2(n))$. 

%===============================================================================

\section{Computing Fuzzy Bisimulations}
\label{sec: comp FB}

We present Algorithm~\ref{algCompFB} (on page~\pageref{algCompFB}) for computing the greatest fuzzy bisimulation between two finite fuzzy graphs $G_1 = \tuple{V_1, E_1, L_1, \SV, \SE}$ and $G_2 = \tuple{V_2, E_2, L_2, \SV, \SE}$ over the same signature $\tuple{\SV, \SE}$. It applies Algorithm~\ref{algCompFPt} to $G_1 \uplus G_2$, converts the resulting fuzzy partition to the corresponding fuzzy equivalence relation by using the procedure $\ConvertFPtoFB$, and then restricts the obtained result to $V_1 \times V_2$. 

\begin{algorithm}[t]
\caption{\CompFB\label{algCompFB}}
\Input{finite fuzzy graphs $G_1 = \tuple{V_1, E_1, L_1, \SV, \SE}$ and $G_2 = \tuple{V_2, E_2, L_2, \SV, \SE}$.}
\Output{the greatest fuzzy bisimulation between $G_1$ and $G_2$.}
let $n_1 = |V_1|$ and $n_2 = |V_2|$\;
rename the vertices of $G_1$ and $G_2$ so that $V_1 = 0..(n_1-1)$ and $V_2 = n_1..(n_1 + n_2 - 1)$, where $a..b$ means $\{a,a+1,\ldots,b\}$, and keep the information to restore the names later\label{step: algCompFB 2}\;
let $G = G_1 \uplus G_2$\label{step: algCompFB 3}\;
let $B$ be the result of applying Algorithm~\ref{algCompFPt} to $G$\label{step: algCompFB 4}\; 
$f := \ConvertFPtoFB(B)$\label{step: algCompFB 5}\;
let $f':V_1 \times V_2 \to [0,1]$ be the fuzzy relation obtained from $f$ by restricting it to $V_1 \times V_2$ and restoring the names of the vertices\label{step: algCompFB 6}\;
\Return $f'$\;
\end{algorithm}

\begin{theorem}
Algorithm~\ref{algCompFB} is correct. That is, it is a correct algorithm for computing the greatest fuzzy bisimulation between given finite fuzzy graphs $G_1$ and $G_2$. Its time complexity is of order $O(m\!\cdot\!\log{l}\!\cdot\!\log{n} + n^2)$, where $n = n_1 + n_2$, $m = m_1 + m_2$, $l = l_1 + l_2$, and $n_1$, $m_1$, $l_1$ (respectively, $n_2$, $m_2$, $l_2$) are the number of vertices, the number of non-zero edges and the number of different fuzzy degrees of edges of~$G_1$ (respectively,~$G_2$). 
\end{theorem}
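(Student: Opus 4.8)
The plan is to establish correctness by chaining the results already proved for the ingredients of Algorithm~\ref{algCompFB}, and then to obtain the complexity bound by adding up the costs of its individual steps. Since $G_1$ and $G_2$ are finite, the disjoint union $G = G_1 \uplus G_2$ formed in step~\ref{step: algCompFB 3} is finite and hence image-finite, so its greatest fuzzy bisimulation $Z$ exists and is a fuzzy equivalence relation.

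For correctness I would argue as follows. By Theorem~\ref{theorem: correctness 2}, the block $B$ computed in step~\ref{step: algCompFB 4} is the fuzzy partition corresponding to $Z$. Because $Z$ is a fuzzy equivalence relation, Proposition~\ref{prop: DJKHS} applies and yields that $Z$ is exactly the fuzzy equivalence relation corresponding to $B$. The first assertion of Proposition~\ref{prop: RHJQK} then guarantees that the relation $f$ returned by $\ConvertFPtoFB(B)$ in step~\ref{step: algCompFB 5} is this same fuzzy equivalence relation, so $f = Z$. Finally I invoke the Proposition in Section~\ref{section: prel} on disjoint unions (which follows from the Hennessy--Milner property), concluding that $Z|_{V_1 \times V_2}$ is the greatest fuzzy bisimulation between $G_1$ and $G_2$. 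Since $f'$ in step~\ref{step: algCompFB 6} is obtained precisely by restricting $f = Z$ to $V_1 \times V_2$ and restoring the original vertex names, $f'$ is the desired greatest fuzzy bisimulation between $G_1$ and $G_2$.

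For the complexity I would walk through the steps in order. The renaming in step~\ref{step: algCompFB 2} and the formation of the disjoint union in step~\ref{step: algCompFB 3} cost $O(n+m)$; crucially, the renaming makes the vertex set of $G$ equal to $\{0,\ldots,n-1\}$, which is the hypothesis needed for the bound on $\ConvertFPtoFB$. By Theorem~\ref{theorem: DJAJD}, step~\ref{step: algCompFB 4} runs in time $O((m\log l + n)\log n) = O(m\log l\log n + n\log n)$. By the second assertion of Proposition~\ref{prop: RHJQK}, step~\ref{step: algCompFB 5} runs in time $O(n^2)$. The restriction and name-restoration in step~\ref{step: algCompFB 6} only touch the $n_1\cdot n_2 \leq n^2$ entries of $V_1 \times V_2$, so they cost $O(n^2)$. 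Summing these and absorbing $n\log n$ into $n^2$ gives the claimed bound $O(m\log l\log n + n^2)$.

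The part that needs the most care is the complexity accounting rather than any deep argument. The key observation is that converting the compact fuzzy partition $B$ into an explicit fuzzy relation $f$ inherently costs $\Theta(n^2)$, since the output relation already has $n^2$ entries; this is exactly the additive $n^2$ term that distinguishes Algorithm~\ref{algCompFB} from the more efficient Algorithm~\ref{algCompFPt}, whose output is the compact partition itself. I would also make explicit why the preprocessing stays within the $O(m\log l\log n + n^2)$ budget and why the $n\log n$ contribution of step~\ref{step: algCompFB 4} is dominated by $n^2$. Beyond this bookkeeping, correctness is an immediate composition of Theorem~\ref{theorem: correctness 2}, Propositions~\ref{prop: DJKHS} and~\ref{prop: RHJQK}, and the disjoint-union proposition, requiring no genuinely new ideas.
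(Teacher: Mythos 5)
Your proposal is correct and takes essentially the same route as the paper: correctness by composing Theorem~\ref{theorem: correctness 2} and Proposition~\ref{prop: RHJQK} (you additionally spell out Proposition~\ref{prop: DJKHS} and the disjoint-union proposition, which the paper leaves implicit), and complexity by per-step accounting summing to $O(m\cdot\log{l}\cdot\log{n} + n^2)$. The only minor divergence is that the paper charges $O((m+n)\log{n})$ rather than $O(m+n)$ for the renaming in step~\ref{step: algCompFB 2} (consistent with its map-based data structures), but since $l \geq 2$ either bound is absorbed into the final estimate, so nothing changes.
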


\begin{proof}
The correctness of Algorithm~\ref{algCompFB} immediately follows from the correctness of Algorithm~\ref{algCompFPt} (Theorem~\ref{theorem: correctness 2}) and the procedure $\ConvertFPtoFB$ (Proposition~\ref{prop: RHJQK}). 
Concerning the complexity, observe that the time taken by the statements are as follows: \ref{step: algCompFB 2}: $O((m + n)\log{n})$, \ref{step: algCompFB 3}: $O(m + n)$, \ref{step: algCompFB 4}: $O((m \log{l} + n)\log{n})$ (by Theorem~\ref{theorem: DJAJD}), \ref{step: algCompFB 5}: $O(n^2)$ (by Proposition~\ref{prop: RHJQK}), \ref{step: algCompFB 6}: $O(n^2)$. Summing up, we conclude that Algorithm~\ref{algCompFB} runs in time $O(m\!\cdot\!\log{l}\!\cdot\!\log{n} + n^2)$. 
\myend
\end{proof}

If $l$ is bounded by a constant (e.g., when $l = 2$ and $G_1$ and $G_2$ are crisp graphs), then the time complexity of Algorithm~\ref{algCompFB} is of order $O(m\log{n} + n^2)$. Taking $l = n^2$ for the worst case, the complexity of the algorithm is of order $O(m\log^2(n) + n^2)$. It is better than the complexity order $O((m+n)n)$ of the algorithm given by Nguyen and Tran in~\cite{TFS2020} for the same problem. 

%===============================================================================
\section{Conclusions}
\label{sec: conc}

We have designed an efficient algorithm with the complexity $O((m\log{l} + n)\log{n})$ for computing the fuzzy partition corresponding to the greatest fuzzy auto-bisimulation of a finite fuzzy labeled graph $G$ under the G\"odel semantics, where $n$, $m$ and $l$ are the number of vertices, the number of non-zero edges and the number of different fuzzy degrees of edges of the input graph $G$, respectively. By using this algorithm, we have also provided an algorithm with the complexity $O(m\cdot\log{l}\cdot\log{n} + n^2)$ for computing the greatest fuzzy bisimulation between two finite fuzzy labeled graphs under the G\"odel semantics. This latter algorithm is better (has a lower complexity order) than the previously known algorithms for the considered problem. 

Our algorithms can be restated for other fuzzy graph-based structures such as fuzzy automata, fuzzy labeled transition systems, fuzzy Kripke models, fuzzy social networks and fuzzy interpretations in fuzzy description logics. 

%===============================================================================

\bibliography{BSfDL}

\begin{thebibliography}{31}
\expandafter\ifx\csname natexlab\endcsname\relax\def\natexlab#1{#1}\fi
\providecommand{\url}[1]{\texttt{#1}}
\providecommand{\href}[2]{#2}
\providecommand{\path}[1]{#1}
\providecommand{\DOIprefix}{doi:}
\providecommand{\ArXivprefix}{arXiv:}
\providecommand{\URLprefix}{URL: }
\providecommand{\Pubmedprefix}{pmid:}
\providecommand{\doi}[1]{\href{http://dx.doi.org/#1}{\path{#1}}}
\providecommand{\Pubmed}[1]{\href{pmid:#1}{\path{#1}}}
\providecommand{\bibinfo}[2]{#2}
\ifx\xfnm\relax \def\xfnm[#1]{\unskip,\space#1}\fi
%Type = Article
\bibitem[{Baets et~al.(1998)Baets, Cooman and
  Kerre}]{DBLP:journals/isci/BaetsCK98}
\bibinfo{author}{Baets, B.D.}, \bibinfo{author}{Cooman, G.D.},
  \bibinfo{author}{Kerre, E.}, \bibinfo{year}{1998}.
\newblock \bibinfo{title}{The construction of possibility measures from samples
  of t-semi-partitions}.
\newblock \bibinfo{journal}{Inf. Sci.} \bibinfo{volume}{106},
  \bibinfo{pages}{3--24}.
\newblock \DOIprefix\doi{10.1016/S0020-0255(97)10004-4}.
%Type = Inproceedings
\bibitem[{van Benthem(1984)}]{vanBenthemCorr}
\bibinfo{author}{van Benthem, J.}, \bibinfo{year}{1984}.
\newblock \bibinfo{title}{Correspondence theory}, in: \bibinfo{editor}{Gabbay,
  D.}, \bibinfo{editor}{Guenthner, F.} (Eds.), \bibinfo{booktitle}{Handbook of
  Philosophical Logic}, \bibinfo{publisher}{D. Reidel Pub. Co.}. pp.
  \bibinfo{pages}{167--247}.
%Type = Article
\bibitem[{Cao et~al.(2011)Cao, Chen and Kerre}]{CaoCK11}
\bibinfo{author}{Cao, Y.}, \bibinfo{author}{Chen, G.}, \bibinfo{author}{Kerre,
  E.}, \bibinfo{year}{2011}.
\newblock \bibinfo{title}{Bisimulations for fuzzy-transition systems}.
\newblock \bibinfo{journal}{{IEEE} Trans. Fuzzy Systems} \bibinfo{volume}{19},
  \bibinfo{pages}{540--552}.
\newblock \DOIprefix\doi{10.1109/TFUZZ.2011.2117431}.
%Type = Article
\bibitem[{Cao et~al.(2013)Cao, Sun, Wang and Chen}]{CaoSWC13}
\bibinfo{author}{Cao, Y.}, \bibinfo{author}{Sun, S.}, \bibinfo{author}{Wang,
  H.}, \bibinfo{author}{Chen, G.}, \bibinfo{year}{2013}.
\newblock \bibinfo{title}{A behavioral distance for fuzzy-transition systems}.
\newblock \bibinfo{journal}{{IEEE} Trans. Fuzzy Systems} \bibinfo{volume}{21},
  \bibinfo{pages}{735--747}.
%Type = Article
\bibitem[{Cardon and Crochemore(1982)}]{DBLP:journals/tcs/CardonC82}
\bibinfo{author}{Cardon, A.}, \bibinfo{author}{Crochemore, M.},
  \bibinfo{year}{1982}.
\newblock \bibinfo{title}{Partitioning a graph in {$O(|A|\log_2|V|)$}}.
\newblock \bibinfo{journal}{Theor. Comput. Sci.} \bibinfo{volume}{19},
  \bibinfo{pages}{85--98}.
%Type = Article
\bibitem[{\'Ciri\'c et~al.(2007)\'Ciri\'c, Ignjatovi\'c and
  Bogdanovi\'c}]{DBLP:journals/fss/CiricIB07}
\bibinfo{author}{\'Ciri\'c, M.}, \bibinfo{author}{Ignjatovi\'c, J.},
  \bibinfo{author}{Bogdanovi\'c, S.}, \bibinfo{year}{2007}.
\newblock \bibinfo{title}{Fuzzy equivalence relations and their equivalence
  classes}.
\newblock \bibinfo{journal}{Fuzzy Sets and Systems} \bibinfo{volume}{158},
  \bibinfo{pages}{1295--1313}.
\newblock \DOIprefix\doi{10.1016/j.fss.2007.01.010}.
%Type = Article
\bibitem[{{\'C}iri{\'c} et~al.(2012a){\'C}iri{\'c}, Ignjatovi{\'c},
  Damljanovi{\'c} and Ba\u{s}ic}]{CiricIDB12}
\bibinfo{author}{{\'C}iri{\'c}, M.}, \bibinfo{author}{Ignjatovi{\'c}, J.},
  \bibinfo{author}{Damljanovi{\'c}, N.}, \bibinfo{author}{Ba\u{s}ic, M.},
  \bibinfo{year}{2012}a.
\newblock \bibinfo{title}{Bisimulations for fuzzy automata}.
\newblock \bibinfo{journal}{Fuzzy Sets and Systems} \bibinfo{volume}{186},
  \bibinfo{pages}{100--139}.
%Type = Article
\bibitem[{{\'C}iri{\'c} et~al.(2012b){\'C}iri{\'c}, Ignjatovi{\'c},
  Jan\u{c}i{\'c} and Damljanovi{\'c}}]{CiricIJD12}
\bibinfo{author}{{\'C}iri{\'c}, M.}, \bibinfo{author}{Ignjatovi{\'c}, J.},
  \bibinfo{author}{Jan\u{c}i{\'c}, I.}, \bibinfo{author}{Damljanovi{\'c}, N.},
  \bibinfo{year}{2012}b.
\newblock \bibinfo{title}{Computation of the greatest simulations and
  bisimulations between fuzzy automata}.
\newblock \bibinfo{journal}{Fuzzy Sets and Systems} \bibinfo{volume}{208},
  \bibinfo{pages}{22--42}.
%Type = Article
\bibitem[{Damljanovi\'c et~al.(2014)Damljanovi\'c, \'Ciri\'c and
  Ignjatovi\'c}]{DamljanovicCI14}
\bibinfo{author}{Damljanovi\'c, N.}, \bibinfo{author}{\'Ciri\'c, M.},
  \bibinfo{author}{Ignjatovi\'c, J.}, \bibinfo{year}{2014}.
\newblock \bibinfo{title}{Bisimulations for weighted automata over an
  additively idempotent semiring}.
\newblock \bibinfo{journal}{Theor. Comput. Sci.} \bibinfo{volume}{534},
  \bibinfo{pages}{86--100}.
\newblock \DOIprefix\doi{10.1016/j.tcs.2014.02.032}.
%Type = Article
\bibitem[{Diaconescu(2020)}]{fuin/Diaconescu20}
\bibinfo{author}{Diaconescu, D.}, \bibinfo{year}{2020}.
\newblock \bibinfo{title}{Modal equivalence and bisimilarity in many-valued
  modal logics with many-valued accessibility relations}.
\newblock \bibinfo{journal}{Fundam. Informaticae} \bibinfo{volume}{173},
  \bibinfo{pages}{177--189}.
\newblock \DOIprefix\doi{10.3233/FI-2020-1920}.
%Type = Article
\bibitem[{Eleftheriou et~al.(2012)Eleftheriou, Koutras and
  Nomikos}]{EleftheriouKN12}
\bibinfo{author}{Eleftheriou, P.}, \bibinfo{author}{Koutras, C.},
  \bibinfo{author}{Nomikos, C.}, \bibinfo{year}{2012}.
\newblock \bibinfo{title}{Notions of bisimulation for {Heyting}-valued modal
  languages}.
\newblock \bibinfo{journal}{J. Log. Comput.} \bibinfo{volume}{22},
  \bibinfo{pages}{213--235}.
%Type = Article
\bibitem[{Fan and Liau(2014)}]{ai/FanL14}
\bibinfo{author}{Fan, T.}, \bibinfo{author}{Liau, C.}, \bibinfo{year}{2014}.
\newblock \bibinfo{title}{Logical characterizations of regular equivalence in
  weighted social networks}.
\newblock \bibinfo{journal}{Artif. Intell.} \bibinfo{volume}{214},
  \bibinfo{pages}{66--88}.
\newblock \DOIprefix\doi{10.1016/j.artint.2014.05.007}.
%Type = Article
\bibitem[{Fan(2015)}]{Fan15}
\bibinfo{author}{Fan, T.F.}, \bibinfo{year}{2015}.
\newblock \bibinfo{title}{Fuzzy bisimulation for {G\"{o}del} modal logic}.
\newblock \bibinfo{journal}{{IEEE} Trans. Fuzzy Systems} \bibinfo{volume}{23},
  \bibinfo{pages}{2387--2396}.
%Type = Article
\bibitem[{Hennessy and Milner(1985)}]{HennessyM85}
\bibinfo{author}{Hennessy, M.}, \bibinfo{author}{Milner, R.},
  \bibinfo{year}{1985}.
\newblock \bibinfo{title}{Algebraic laws for nondeterminism and concurrency}.
\newblock \bibinfo{journal}{Journal of the ACM} \bibinfo{volume}{32},
  \bibinfo{pages}{137--161}.
%Type = Misc
\bibitem[{Hopcroft(1971)}]{Hopcroft71}
\bibinfo{author}{Hopcroft, J.}, \bibinfo{year}{1971}.
\newblock \bibinfo{title}{An $n\log{n}$ algorithm for minimizing states in a
  finite automaton}.
\newblock \bibinfo{howpublished}{Available at
  \url{ftp://reports.stanford.edu/pub/cstr/reports/cs/tr/71/190/CS-TR-71-190.pdf}}.
%Type = Inproceedings
\bibitem[{Ignjatovi{\'c} et~al.(2015)Ignjatovi{\'c}, {\'C}iri{\'c} and
  Stankovi{\'c}}]{IgnjatovicCS15}
\bibinfo{author}{Ignjatovi{\'c}, J.}, \bibinfo{author}{{\'C}iri{\'c}, M.},
  \bibinfo{author}{Stankovi{\'c}, I.}, \bibinfo{year}{2015}.
\newblock \bibinfo{title}{Bisimulations in fuzzy social network analysis}, in:
  \bibinfo{booktitle}{Proceedings of IFSA-EUSFLAT-15},
  \bibinfo{publisher}{Atlantis Press}.
%Type = Article
\bibitem[{Marti and Metcalfe(2018)}]{aml/MartiM18}
\bibinfo{author}{Marti, M.}, \bibinfo{author}{Metcalfe, G.},
  \bibinfo{year}{2018}.
\newblock \bibinfo{title}{Expressivity in chain-based modal logics}.
\newblock \bibinfo{journal}{Arch. Math. Log.} \bibinfo{volume}{57},
  \bibinfo{pages}{361--380}.
\newblock \DOIprefix\doi{10.1007/s00153-017-0573-4}.
%Type = Article
\bibitem[{Mici{\'c} et~al.(2018)Mici{\'c}, Jan\v{c}i{\'c} and
  Stanimirovi{\'c}}]{MicicJS18}
\bibinfo{author}{Mici{\'c}, I.}, \bibinfo{author}{Jan\v{c}i{\'c}, Z.},
  \bibinfo{author}{Stanimirovi{\'c}, S.}, \bibinfo{year}{2018}.
\newblock \bibinfo{title}{Computation of the greatest right and left invariant
  fuzzy quasi-orders and fuzzy equivalences}.
\newblock \bibinfo{journal}{Fuzzy Sets and Systems} \bibinfo{volume}{339},
  \bibinfo{pages}{99--118}.
\newblock \DOIprefix\doi{10.1016/j.fss.2017.09.004}.
%Type = Misc
\bibitem[{Nguyen()}]{CompFP-prog}
\bibinfo{author}{Nguyen, L.}, .
\newblock \bibinfo{title}{An implementation of {Algorithm~2} given in the
  current paper}.
\newblock \bibinfo{howpublished}{\url{http://mimuw.edu.pl/~nguyen/compFP}}.
%Type = Article
\bibitem[{Nguyen(2021)}]{FBSML}
\bibinfo{author}{Nguyen, L.}, \bibinfo{year}{2021}.
\newblock \bibinfo{title}{Logical characterizations of fuzzy bisimulations in
  fuzzy modal logics over residuated lattices}.
\newblock \bibinfo{journal}{CoRR} \bibinfo{volume}{abs/2101.12349}.
\newblock \href{http://arxiv.org/abs/2101.12349}{{\tt arXiv:2101.12349}}.
%Type = Article
\bibitem[{Nguyen et~al.(2020)Nguyen, Ha, Nguyen, Nguyen and Tran}]{FSS2020}
\bibinfo{author}{Nguyen, L.}, \bibinfo{author}{Ha, Q.},
  \bibinfo{author}{Nguyen, N.}, \bibinfo{author}{Nguyen, T.},
  \bibinfo{author}{Tran, T.}, \bibinfo{year}{2020}.
\newblock \bibinfo{title}{Bisimulation and bisimilarity for fuzzy description
  logics under the {G\"{o}del} semantics}.
\newblock \bibinfo{journal}{Fuzzy Sets and Systems} \bibinfo{volume}{388},
  \bibinfo{pages}{146--178}.
%Type = Article
\bibitem[{Nguyen and Nguyen(2019)}]{minimization-by-fBS}
\bibinfo{author}{Nguyen, L.}, \bibinfo{author}{Nguyen, N.T.},
  \bibinfo{year}{2019}.
\newblock \bibinfo{title}{Minimizing interpretations in fuzzy description
  logics under the {G\"{o}del} semantics by using fuzzy bisimulations}.
\newblock \bibinfo{journal}{Journal of Intelligent and Fuzzy Systems}
  \bibinfo{volume}{37}, \bibinfo{pages}{7669--7678}.
%Type = Article
\bibitem[{Nguyen and Tran(2020)}]{CompCB-arxiv}
\bibinfo{author}{Nguyen, L.}, \bibinfo{author}{Tran, D.}, \bibinfo{year}{2020}.
\newblock \bibinfo{title}{Computing crisp bisimulations for fuzzy structures}.
\newblock \bibinfo{journal}{CoRR} \bibinfo{volume}{abs/2010.15671}.
\newblock \href{http://arxiv.org/abs/2010.15671}{{\tt arXiv:2010.15671}}.
%Type = Article
\bibitem[{{Nguyen} and {Tran}(2021)}]{TFS2020}
\bibinfo{author}{{Nguyen}, L.}, \bibinfo{author}{{Tran}, D.},
  \bibinfo{year}{2021}.
\newblock \bibinfo{title}{Computing fuzzy bisimulations for fuzzy structures
  under the {G\"{o}del} semantics}.
\newblock \bibinfo{journal}{{IEEE} Trans. Fuzzy Syst.} \bibinfo{volume}{29},
  \bibinfo{pages}{1715--1724}.
\newblock \DOIprefix\doi{10.1109/TFUZZ.2020.2985000}.
%Type = Article
\bibitem[{Ovchinnikov(1991)}]{OVCHINNIKOV1991107}
\bibinfo{author}{Ovchinnikov, S.}, \bibinfo{year}{1991}.
\newblock \bibinfo{title}{Similarity relations, fuzzy partitions, and fuzzy
  orderings}.
\newblock \bibinfo{journal}{Fuzzy Sets and Systems} \bibinfo{volume}{40},
  \bibinfo{pages}{107--126}.
\newblock \DOIprefix\doi{10.1016/0165-0114(91)90048-U}.
%Type = Article
\bibitem[{Paige and Tarjan(1987)}]{PaigeT87}
\bibinfo{author}{Paige, R.}, \bibinfo{author}{Tarjan, R.},
  \bibinfo{year}{1987}.
\newblock \bibinfo{title}{Three partition refinement algorithms}.
\newblock \bibinfo{journal}{SIAM J. Comput.} \bibinfo{volume}{16},
  \bibinfo{pages}{973--989}.
%Type = Inproceedings
\bibitem[{Schmechel(1995)}]{DBLP:conf/ismvl/Schmechel95}
\bibinfo{author}{Schmechel, N.}, \bibinfo{year}{1995}.
\newblock \bibinfo{title}{On lattice-isomorphism between fuzzy equivalence
  relations and fuzzy partitions}, in: \bibinfo{booktitle}{Proceedings of
  {ISMVL}'1995}, \bibinfo{publisher}{{IEEE} Computer Society}. pp.
  \bibinfo{pages}{146--151}.
\newblock \DOIprefix\doi{10.1109/ISMVL.1995.513523}.
%Type = Article
\bibitem[{Stanimirovi{\'c} et~al.(2019)Stanimirovi{\'c}, Stamenkovi{\'c} and
  {\'C}iri{\'c}}]{StanimirovicSC2019}
\bibinfo{author}{Stanimirovi{\'c}, S.}, \bibinfo{author}{Stamenkovi{\'c}, A.},
  \bibinfo{author}{{\'C}iri{\'c}, M.}, \bibinfo{year}{2019}.
\newblock \bibinfo{title}{Improved algorithms for computing the greatest right
  and left invariant boolean matrices and their application}.
\newblock \bibinfo{journal}{Filomat} \bibinfo{volume}{33},
  \bibinfo{pages}{2809–--2831}.
%Type = Article
\bibitem[{Wu et~al.(2018a)Wu, Chen, Han and Chen}]{DBLP:journals/ijar/WuCHC18}
\bibinfo{author}{Wu, H.}, \bibinfo{author}{Chen, T.}, \bibinfo{author}{Han,
  T.}, \bibinfo{author}{Chen, Y.}, \bibinfo{year}{2018}a.
\newblock \bibinfo{title}{Bisimulations for fuzzy transition systems
  revisited}.
\newblock \bibinfo{journal}{Int. J. Approx. Reason.} \bibinfo{volume}{99},
  \bibinfo{pages}{1--11}.
\newblock \DOIprefix\doi{10.1016/j.ijar.2018.04.010}.
%Type = Article
\bibitem[{Wu et~al.(2018b)Wu, Chen, Bu and Deng}]{DBLP:journals/fss/WuCBD18}
\bibinfo{author}{Wu, H.}, \bibinfo{author}{Chen, Y.}, \bibinfo{author}{Bu, T.},
  \bibinfo{author}{Deng, Y.}, \bibinfo{year}{2018}b.
\newblock \bibinfo{title}{Algorithmic and logical characterizations of
  bisimulations for non-deterministic fuzzy transition systems}.
\newblock \bibinfo{journal}{Fuzzy Sets and Systems} \bibinfo{volume}{333},
  \bibinfo{pages}{106--123}.
%Type = Article
\bibitem[{Wu and Deng(2016)}]{DBLP:journals/fss/WuD16}
\bibinfo{author}{Wu, H.}, \bibinfo{author}{Deng, Y.}, \bibinfo{year}{2016}.
\newblock \bibinfo{title}{Logical characterizations of simulation and
  bisimulation for fuzzy transition systems}.
\newblock \bibinfo{journal}{Fuzzy Sets and Systems} \bibinfo{volume}{301},
  \bibinfo{pages}{19--36}.

\end{thebibliography}
\bibliographystyle{elsarticle-harv}

%-------------------------------------------------------------------------------

\end{document}